\documentclass[
  a4paper,
  runningheads,
  orivec
]{llncs}

\usepackage{graphicx} 
\usepackage[hidelinks]{hyperref}
\hypersetup{
  colorlinks   = true,
  urlcolor     = blue,
  linkcolor    = blue,
  citecolor    = red
}
\usepackage{amsmath,amssymb}
\usepackage{stmaryrd}
\usepackage{wasysym}
\usepackage{mathtools}
\usepackage{subcaption}
\usepackage{thm-restate}
\usepackage{multirow}
\usepackage[nameinlink]{cleveref}
\usepackage{lipsum}

\usepackage[basic]{complexity}
\usepackage{csquotes}
\usepackage{xparse}
\usepackage{xspace}
\usepackage[dvipsnames,table]{xcolor}
\usepackage[ruled,vlined,linesnumbered]{algorithm2e}
\usepackage[inline]{enumitem}
\usepackage[colorinlistoftodos,prependcaption,textsize=tiny]{todonotes}
\usepackage{tablefootnote}
\usepackage{booktabs}
\usepackage{orcidlink}

\usepackage[mathlines]{lineno}
\newcommand*\patchAmsMathEnvironmentForLineno[1]{%
  \expandafter\let\csname old#1\expandafter\endcsname\csname #1\endcsname
  \expandafter\let\csname oldend#1\expandafter\endcsname\csname end#1\endcsname
  \renewenvironment{#1}%
  {\linenomath\csname old#1\endcsname}%
  {\csname oldend#1\endcsname\endlinenomath}}%
\newcommand*\patchBothAmsMathEnvironmentsForLineno[1]{%
  \patchAmsMathEnvironmentForLineno{#1}%
  \patchAmsMathEnvironmentForLineno{#1*}}%
\AtBeginDocument{%
  \patchBothAmsMathEnvironmentsForLineno{equation}%
  \patchBothAmsMathEnvironmentsForLineno{align}%
  \patchBothAmsMathEnvironmentsForLineno{flalign}%
  \patchBothAmsMathEnvironmentsForLineno{alignat}%
  \patchBothAmsMathEnvironmentsForLineno{gather}%
  \patchBothAmsMathEnvironmentsForLineno{multline}%
}

\let\oldqed\qed
\renewcommand{\qed}{~\hfill$\oldqed$}

\colorlet{tablehighlight}{lightgray!40!white}

\setlist{nosep}

\spnewtheorem{construction}{Construction}{\itshape}{\rmfamily}
\crefname{construction}{construction}{constructions}
\Crefname{construction}{Construction}{Constructions}

\newcommand{\Codeterministic}{Codeterministic}
\newcommand{\codeterministic}{codeterministic}
\newcommand{\Codeterminism}{Codeterminism}
\newcommand{\codeterminism}{codeterminism}
\newcommand{\Complete}{Complete}
\newcommand{\complete}{complete}

\newcommand{\procedural}{procedural}
\newcommand{\Procedural}{Procedural}
\newcommand{\proceduralAlphabet}{\procedural\ alphabet}
\newcommand{\ProceduralAlphabet}{\Procedural\ alphabet}
\newcommand{\proceduralTransition}{\procedural\ transition}

\newcommand{\proceduralSymbol}{\procedural\ symbol}

\newcommand{\linkingFunction}{linking function\xspace}

\newcommand{\recursiveLanguage}{recursive language}

\newcommand{\regularLanguage}{regular language}
\newcommand{\regularRun}{regular run}
\newcommand{\RecursiveRun}{Recursive run}
\newcommand{\recursiveRun}{recursive run}

\newcommand{\internal}{\mathit{int}}
\newcommand{\call}{\mathit{call}}
\newcommand{\return}{\mathit{ret}}
\newcommand{\proc}{\mathit{proc}}
\newcommand{\intAlpha}{\Sigma_{\internal}}
\newcommand{\calAlpha}{\Sigma_{\call}}
\newcommand{\retAlpha}{\Sigma_{\return}}
\newcommand{\procAlpha}{\Sigma_{\proc}}
\newcommand{\intTrans}{\delta_{\internal}}
\newcommand{\callTrans}{\delta_{\call}}
\newcommand{\retTrans}{\delta_{\return}}
\newcommand{\pdwAlph}{\reg{\Sigma}}

\newcommand{\AutomataSet}{\Lambda}

\newcommand{\vraStates}[1]{Q_\mathcal{#1}}
\newcommand{\vraInit}[1]{I_\mathcal{#1}}
\newcommand{\vraTrans}[1]{\delta_\mathcal{#1}}
\newcommand{\Lang}[1]{\reg{L}(#1)}
\newcommand{\regLang}[1]{L(#1)}

\newcommand{\runsA}[1]{\Pi({#1})}

\newcommand{\reg}[1]{\widetilde{#1}}
\newcommand{\wm}[1]{\mathit{WM}(#1)}
\newcommand{\depth}[1]{\mathit{depth}(#1)}

\let\oldxrightarrow\xrightarrow
\newcommand\myxrightarrow[2][]{
    \oldxrightarrow[#1]{{\raisebox{-0.2ex}[0pt][0pt]{$\scriptstyle#2$}}}%
}
\let\xrightarrow\myxrightarrow

\AtEndEnvironment{example}{\hfill$\lrcorner$}

\newcommand{\arrow}{[fill={black}  ][line width=0.07]  [draw opacity=0] (8,-3) -- (0,0) -- (8,3) -- cycle }

\title{Visibly Recursive Automata%
\thanks{K. Dubrulle is a {FRIA grantee} of the Belgian \emph{Fonds de la Recherche Scientifique}--FNRS; G. A. P\'erez is supported by the FWO ``SynthEx'' project (G0AH524N).}}
\author{K\'evin Dubrulle\inst{1,2}\orcidlink{0009-0000-0298-1349}, V\'eronique Bruy\`ere\inst{1}\orcidlink{0000-0002-9680-9140},  Guillermo A. P\'erez\inst{2}\orcidlink{0000-0002-1200-4952}, and Ga\"etan Staquet\inst{3}\orcidlink{0000-0001-5795-3265}}
\authorrunning{K. Dubrulle, V. Bruy\`ere, G. A. P\'erez, G. Staquet}

\institute{Université de Mons, Mons, Belgium\\\email{\{kevin.dubrulle,veronique.bruyere\}@umons.ac.be}
\and
Universiteit Antwerpen, Antwerp, Belgium\\\email{guillermo.perez@uantwerpen.be}\and
Nantes Université, École Centrale Nantes, CNRS, LS2N, Nantes, France\\\email{gaetan.staquet@ec-nantes.fr}}

\begin{document}

\maketitle

\begin{abstract}
As an alternative to visibly pushdown automata, we introduce visibly recursive automata (VRAs), composed of a set of classical automata that can call each other. VRAs are a strict extension of so-called systems of procedural automata, a model proposed in 2021 by Frohme and Steffen. We study the complexity of standard language-theoretic operations and classical decision problems for VRAs. Since the class of deterministic VRAs forms a strict subclass in terms of expressiveness, we propose a (weaker) notion that does not restrict expressive power and that we call codeterminism. Codeterminism comes with many desirable algorithmic properties that we demonstrate by using it, e.g., as a stepping stone towards implementing complementation of VRAs.

\keywords{Visibly pushdown languages \and Modular model of computation  \and Automata theoretic properties}
\end{abstract}

\section{Introduction}

Almost all computer systems are programmed by defining multiple functions that
call each other, sometimes recursively. When modeling such systems --- for verification and model checking purposes, for instance --- it is thus important to take into
account the \emph{calls} to functions
(i.e., jump into a different part of the code)
and their \emph{returns}
(i.e., jump back to the position immediately following that before the corresponding call). These behaviors
can be represented by context-free grammars or, equivalently,
pushdown automata (PDAs)~\cite{HU79}.

While \emph{context-free languages} (CFLs), i.e., the family of languages accepted by PDAs, have
model checking tools~\cite{AlurBEGRY05,BouajjaniEM97,NguyenT17,SongT14},
many properties of interest for CFLs are undecidable, for example: checking the equivalence of two PDAs and
universality of a PDA.
Thus, along the years,
some restrictions have been considered to obtain positive decidability results, such as in~\cite{ChenW02,ChiariMPP23,ChiariMP21,EsparzaKS01}.
{Among these restrictions, here, we focus on \emph{visibly pushdown languages} (VPLs), recognized by  \emph{visibly pushdown automata} (VPAs) \cite{alur2004vpl} and \emph{nested word automata} \cite{alur2007marrying,alur2009adding}. We consider only VPAs, as they are more commonly used in the literature.}

VPAs split
the alphabet into three disjoint subsets:
one set of symbols is only used for \emph{calls} whose reading triggers a push on the stack, a second set only for \emph{returns} whose reading pops the top of the stack, and the last set contains the
\emph{internal} symbols with no influence on the stack of calls. 
Thus, it is the type of  symbol that dictates the operation to be applied on the stack.
Thanks to this restriction, 
VPLs are closed under several operations and some problems that are undecidable for CFLs become decidable for VPLs.
For example, VPLs are closed under union,
intersection, and complementation~\cite{alur2004vpl}.
Also, emptiness,
universality,
and language equality and inclusion are all decidable for VPAs.

VPAs have been used in practice to verify XML and JSON documents
against their schemas in a streaming context~\cite{BruyerePS23,KumarMV07,LeGlaunecLM25}.
In particular, in~\cite{BruyerePS23}, we learned (in Angluin's active-learning setup~\cite{Angluin87,Isberner15})
a VPA modeling
a given JSON schema from a sample of good and bad documents. 
We implemented
our algorithm in a prototype tool, showing that an automaton-based
approach is feasible, despite the size of the learned VPA.
However, VPAs suffer from a significant drawback: they tend to be
large and, thus, complex to construct and learn. 
While 
a variation of VPA specialized for JSON schemas is studied in~\cite{LeGlaunecLM25},
we conjecture that it would be more efficient to build an  
automaton defined as a \emph{collection of smaller automata}, 
rather than a single large automaton.
In~\cite{Dubrulle2025}, we studied the validation of JSON documents using such a model
(introduced in the sequel) and we obtained \cite{Dubrulle25Code} much smaller automata and faster validation times. 
Whilst other works treat models that strictly include
VPAs~\cite{AlurBEGRY05,gallier2003dfarecursif,woods1970transition},
we here focus on \enquote{modularizing} VPAs.
Previous work considered two ways for modularization: \emph{\(k\)-module single-entry automata}
(\(k\)-SEVPAs)~\cite{AlurKMV05} and \emph{systems of procedural automata}
(SPAs)~\cite{frohme2021spa}.

The set of call symbols of a \(k\)-SEVPA is partitioned into \(k\) classes, the automaton has a main 
module and \(k\) distinct interconnected submodules, one for each class. The transitions labeled by call and return symbols manage these interconnections, by hard-coding the stack manipulations. While VPAs and \(k\)-SEVPAs are equivalent (for any value of \(k\)) and there exists a unique minimal \(k\)-SEVPA for a given language, this minimal automaton may
have exponentially more states than a VPA accepting the same language.
Nonetheless, this family has some active learning algorithms~\cite{Isberner15,jia2024v}.
Our VPA-based approach from~\cite{BruyerePS23} actually relies on \(1\)-SEVPAs. Due to their sizes, our algorithm for JSON documents is slower than state-of-the-art
JSON validators, as highlighted in~\cite{LeGlaunecLM25}.

The submodules of an SPA are classical finite automata (FAs) that are not interconnected, and
the transitions do not manipulate the stack directly. Instead, each call symbol is associated with a specific automaton.
Whenever the call symbol corresponding to the FA \(\mathcal{A}\) is read by the SPA, a call to this automaton is performed, with a jump to its initial state. Later on, when a sub-word is accepted by $\mathcal{A}$ and followed by a return symbol, the SPA goes back to the state from which it reads the call symbol.
While a stack is maintained to remember the calls, it is handled purely
by the semantics of SPAs, i.e., none of the transitions explicitly push or pop.
In~\cite{frohme2021spa}, an active learning algorithm for SPAs is also presented.
Interestingly, SPAs are strictly less expressive than VPAs.

\subsubsection{Contributions.}
In this paper, we introduce a new kind of modular automata we call
\emph{visibly recursive automata} (VRAs). Similarly to SPAs~\cite{frohme2021spa},
a VRA is composed of multiple FAs. However, in contrast to that work,
we lift the restriction that each call symbol corresponds to a
unique automaton. Instead, we allow multiple FAs to share a common
call symbol. We argue that VRAs form a strict superset of SPAs,
and we show that they are equivalent to VPAs with 
polynomial size translations (see \autoref{th:vpa-vra}). 
    

We claim that, like for~\cite{frohme2021spa},
since we construct smaller FAs, each serving a specific purpose,
VRAs are easier to construct and understand: one can focus
on each part of the system individually.
This is much closer to
how programs are engineered and implemented: each function serves a
specific role and can call other functions to achieve its goal. It also mirrors the way in which JSON schemas are structured,\footnote{See \url{https://json-schema.org/understanding-json-schema/structuring}.} i.e., in a modular way.
Furthermore, we conjecture that this decomposition will allow for more efficient learning than what is possible for VPAs (each part is an FA that can be learned more easily than VPAs~\cite{Angluin87,Isberner15,jia2024v}) and enable a modular and compositional learning algorithm more generally applicable than that for SPAs (an open challenge of active automata learning \cite{Fortz2026ActiveAutomataLearning}).

\begin{table}[b]
    \centering    
    \vspace{-2em}
    \caption{Summary of our complexity results, and comparison with known bounds for VPAs, where \(|\mathcal{A}|\) is the number of states and transitions of either a VRA, or a VPA \(\mathcal{A}\). We highlight 
    where VRAs 
    have better complexity. 
    }%
    \scriptsize
    \begin{tabular}{
        @{}
        c @{\hspace{5pt}}
        l >{\hspace{12pt}}
        c >{\hspace{8pt}}
        c @{}
    }
        \toprule
        &
            &
            \textbf{VRA} &
            \textbf{VPA} \cite{alur2004vpl}
        \\
        \midrule
        \multirow{5}{*}{\rotatebox[origin=c]{90}{\parbox{45pt}{\centering Operations\\(Result size)}}} &
            Concatenation (\(\Lang{\mathcal{A}_1} \cdot \Lang{\mathcal{A}_2}\)) &
            \(\mathcal{O}(|\mathcal{A}_1| + |\mathcal{A}_2|)\) &
            \(\mathcal{O}(|\mathcal{A}_1| + |\mathcal{A}_2|)\)
        \\
        &
            Kleene-$*$ (\(\Lang{\mathcal{A}_1}^*\)) &
            \(\mathcal{O}(|\mathcal{A}_1|)\) &
            \(\mathcal{O}(|\mathcal{A}_1|)\)
        \\
        &
            Union (\(\Lang{\mathcal{A}_1} \cup \Lang{\mathcal{A}_2}\)) &
            \(\mathcal{O}(|\mathcal{A}_1| + |\mathcal{A}_2|)\) &
            \(\mathcal{O}(|\mathcal{A}_1| + |\mathcal{A}_2|)\)
        \\
        &
            Intersection (\(\Lang{\mathcal{A}_1} \cap \Lang{\mathcal{A}_2}\)) &
            \(\mathcal{O}(|\mathcal{A}_1| \cdot |\mathcal{A}_2|)\) &
            \(\mathcal{O}(|\mathcal{A}_1| \cdot |\mathcal{A}_2|)\)
        \\
        &
            \cellcolor{tablehighlight}Complementation (\(\overline{\Lang{\mathcal{A}_1}}\)) &
            \cellcolor{tablehighlight}\(2^{\mathcal{O}(|\mathcal{A}_1|)}\) &
            \cellcolor{tablehighlight}\(2^{\mathcal{O}(|\mathcal{A}_1|^2)}\)
        \\
        \midrule
        \multirow{4}{*}{\rotatebox[origin=c]{90}{\parbox{40pt}{\centering Decision\\problems\\(Runtime)}}} &
            \cellcolor{tablehighlight}Emptiness ($\Lang{\mathcal{A}_1} \stackrel{?}{=} \varnothing$) &
            \cellcolor{tablehighlight}\(\ \mathcal{O}(|\mathcal{A}_1|)\) &
            \cellcolor{tablehighlight}\(\mathcal{O}(|\mathcal{A}_1|^3)\) 
        \\
        &
            \cellcolor{tablehighlight}Universality ($\Lang{\mathcal{A}_1} \stackrel{?}{=} \wm\pdwAlph$) &
            \cellcolor{tablehighlight}\(2^{\mathcal{O}(|\mathcal{A}_1|)}\) &
            \cellcolor{tablehighlight}\(2^{\mathcal{O}(|\mathcal{A}_1|^2)}\)
        \\
        &
            \cellcolor{tablehighlight}Inclusion ($\Lang{\mathcal{A}_1} \stackrel{?}{\subseteq } \Lang{\mathcal{A}_2}$)&
            \cellcolor{tablehighlight}\(\mathcal{O}(|\mathcal{A}_1|)\cdot 2^{\mathcal{O}(|\mathcal{A}_2|)}\) &
            \cellcolor{tablehighlight}\(\mathcal{O}(|\mathcal{A}_1|^3)\cdot2^{\mathcal{O}(|\mathcal{A}_2|^2)}\) 
        \\
        &
            \cellcolor{tablehighlight}Equivalence ($\Lang{\mathcal{A}_1} \stackrel{?}{=} \Lang{\mathcal{A}_2}$) &
            \cellcolor{tablehighlight}\(2^{\mathcal{O}(|\mathcal{A}_1|+|\mathcal{A}_2|)}\) &
            \cellcolor{tablehighlight}\(2^{\mathcal{O}(|\mathcal{A}_1|^2+|\mathcal{A}_2|^2)}\)
        \\
        \bottomrule
    \end{tabular}
    \label{tab:vra-vpa}
    \vspace{-1em}
\end{table}



Our long-term objective is to obtain efficient active learning algorithms for VRAs. In this work, we focus on a first step in that direction. Namely,
we study the complexity of the usual language-theoretic operations for languages accepted by VRAs, as well as the classical decision problems for VRAs. Some of our algorithms leverage the interreduction between VRAs and VPAs, but mostly we provide direct algorithms with better complexity. Concerning the language-theoretic operations, our main result is the complementation closure that requires translating any VRA into a \emph{\codeterministic\footnote{We borrow terminology and draw inspiration from \cite{berstel2002balanced} for this notion.}\ and \complete} one (determinism does not help as deterministic VRAs form a strict subclass). \autoref{tab:vra-vpa}
summarizes our results. For the decision problems (see \autoref{th:decision-problem}), 
%
we highlight that the complexity for VRAs is consistently lower than for VPAs. 
For the operations over the languages (see \autoref{th:closure}), we obtain the same complexities as for VPAs with the exception of complementation, where we again get a lower complexity.




\section{Visibly Recursive Automaton Model}

In this section, we present the \emph{visibly recursive automaton} model and provide a comparison with some other models. Visibly recursive automata are composed of several classical \emph{finite automata} and accept \emph{well-matched words}. 

\subsection{Preliminaries}
\label{subsec:prelim}

\begin{definition}[Finite automaton]\label{def:finite-automaton} A \emph{finite automaton} (FA) is a tuple $\mathcal{A}=\langle\Sigma,Q,I,F,\delta\rangle$ where $\Sigma$ is the \emph{input alphabet}; $Q$, a finite set of \emph{states}; $I\subseteq Q$, a set of \emph{initial} states; $F\subseteq Q$, a set of \emph{final} states; and $\delta \subseteq Q\times \Sigma \times Q$, a set of \emph{transitions}. 
The \emph{size} of an FA $\mathcal{A}$, denoted by $|\mathcal{A}|$, is $|Q|+|\delta|$.
\end{definition}

We denote by $\regLang{\mathcal{A}}$ the \emph{language} of $\mathcal{A}$ composed of all accepted words.
An FA $\mathcal{A}$ is \emph{deterministic} (DFA) if $|I|=1$, and, for all $q\in Q$, $a\in \Sigma$, there is at most one transition $(q,a,p)\in \delta$. It is \emph{complete} if, for all $q\in Q$, $a\in \Sigma$, there exists a transition $(q,a,p)\in \delta$. 
Any FA $\mathcal{A}$ can be transformed into an equivalent complete DFA $\mathcal{B}$ with $|\mathcal{B}|=2^{\mathcal{O}(|\mathcal{A}|)}$ \cite{HU79}. The empty word is denoted $\varepsilon$.  

\begin{definition}[Well-matched word]\label{def:wm-words}
   A \emph{pushdown alphabet} $\pdwAlph = \intAlpha\cup \calAlpha\cup \retAlpha$ is the union of three pairwise disjoint finite alphabets, which are, respectively, the set of \emph{internal}, \emph{call}, and \emph{return} symbols. The set $\wm\pdwAlph$ of \emph{well-matched words} over $\pdwAlph$ is the smallest set satisfying:
    \begin{itemize}
        \item Let $w\in \intAlpha^*$, then $w\in \wm\pdwAlph$. 
        \item Let $w\in \wm\pdwAlph$, $c\in \calAlpha$ and $r\in \retAlpha$, then $c\cdot w\cdot r \in \wm\pdwAlph$.
        \item Let $w_1,w_2\in\wm\pdwAlph$, then $w_1\cdot w_2 \in \wm\pdwAlph$. 
    \end{itemize}
\end{definition}

The \emph{depth} of $w\in \wm\pdwAlph$, denoted by $\depth{w}$, is the deepest level of unmatched call symbols at any point in the word. 
Any $w\in \wm\pdwAlph$ can be decomposed as $w=u_0c_1w_1r_1u_1\dots c_nw_n r_n u_n$ for some $n\in \mathbb{N}$, with $u_i\in \intAlpha^*$, $c_i\in \calAlpha$, $r_i\in \retAlpha$ and $w_i\in \wm\pdwAlph$ such that $\depth{w_i} < \depth{w}$, for all~$i$.  
Note that if $w\in \intAlpha^*$, then $n = 0$ and $\depth{w}=0$. In this paper, we provide proofs using this decomposition or the structural induction of  \autoref{def:wm-words}.

Given $S \subseteq D$, we denote by $\overline{S}$ the complement of $S$ in $D$, i.e., $\overline{S} = D \setminus S$. 
By convention, the intersection over an empty family of subsets of $D$ is equal to $D$: if $(S_i)_{i \in I}$ is a family of subsets of $D$ and $I = \varnothing$, then $\bigcap_{i\in I} S_i = D$.

\subsection{
Visibly Recursive Automata}


A \emph{visibly recursive automaton} (VRA), inspired by the formalisms from \cite{frohme2021spa,gallier2003dfarecursif}, is composed of several FAs 
that can call each other by reading specific call symbols of a pushdown alphabet. 
Each of these FAs is identified by a unique \emph{\proceduralSymbol}. By convention, we use capital letters to denote \proceduralSymbol s. See \autoref{fig:ex-vra} for a first example.

\begin{definition}[\ProceduralAlphabet]\label{def:proc-alpha}
    A \emph{\proceduralAlphabet} $\procAlpha$ w.r.t. $\pdwAlph=\intAlpha\cup \calAlpha \cup \retAlpha$ 
    is a set of \emph{\proceduralSymbol s}. 
    With $\procAlpha$ we associate a \emph{\linkingFunction} $f: \procAlpha \to \calAlpha\times \retAlpha$.
    Let $f_{\call}$ and $f_{\return}$ be the functions such that $f(J)=\langle  f_{\call}(J) , f_{\return}(J)\rangle$, for all $J\in \procAlpha$.
\end{definition}


\begin{definition}[Visibly recursive automaton]\label{def:vra}
    A \emph{visibly recursive automaton (VRA)} is a tuple $\mathcal{A}= \langle\pdwAlph,\procAlpha,\AutomataSet,\mathcal{A}^S\rangle$, where:
    \begin{itemize}
        \item $\pdwAlph=\intAlpha \cup \calAlpha \cup \retAlpha$ is a pushdown alphabet;
        \item $\procAlpha$ is a \proceduralAlphabet\ w.r.t. $\pdwAlph$;
        \item $\AutomataSet = \{ \mathcal{A}^J\mid J\in\procAlpha\}$ is a \emph{set of finite automata} over $\intAlpha\cup \procAlpha$ such that $\mathcal{A}^J=\langle\intAlpha\cup \procAlpha, Q^J,I^J,F^J,\delta^J\rangle$ for each $J$;
        \item $\mathcal{A}^S=\langle\intAlpha\cup\procAlpha, Q^S,I^S,F^S,\delta^S\rangle$ is a \emph{starting automaton}.
    \end{itemize}
    We write $\vraStates{A}=\bigcup_{J\in\procAlpha\cup \{S\}} Q^J$ (resp. $\vraTrans{A}=\bigcup_{J\in\procAlpha\cup \{S\}} \delta^J$) the set of all states (resp. transitions) of a VRA $\mathcal{A}$.
    Its \emph{size}, denoted by $|\mathcal{A}|$, is $|\vraStates{A}|+|\vraTrans{A}|$.
    
\end{definition}

In this definition, we assume that $\mathcal{A}^S \notin \AutomataSet$, $S\notin \procAlpha$, and the sets of states $Q^J$, with $J \in \procAlpha \cup \{S\}$, are pairwise disjoint. A transition in $\delta^J$ on an internal (resp.\ \procedural) symbol is called an \emph{internal} (resp.\ \emph{\procedural}) transition.

A VRA $\mathcal{A}$ accepts words over 
$\pdwAlph$ as follows. 
The semantics of $\mathcal{A}$ use \emph{configurations} $\langle q,\sigma \rangle$ where $q\in \vraStates{A}$ is a state and $\sigma \in \vraStates{A}^*$ is a \emph{stack word} whose symbols are states of the VRA. A \emph{\recursiveRun} of $\mathcal{A}$ on a word $w=a_1\dots a_n \in \pdwAlph^*$ is a sequence $\pi = \langle q_0,\sigma_0 \rangle \xrightarrow{a_1} \langle q_1,\sigma_1 \rangle \xrightarrow{a_2} \dots \xrightarrow{a_n} \langle q_n,\sigma_n \rangle$, where for all $i\in [1,n]$: 
\begin{itemize}
    \item If $a_i \in \intAlpha$, there exists a  transition $q_{i-1} \xrightarrow{a_i} q_i \in \vraTrans{A}$ and $\sigma_i = \sigma_{i-1}$;
    \item If $a_i \in \calAlpha$, there exists a \proceduralSymbol\ $J\in \procAlpha$ such that $f_{\call}(J) = a_i$, $q_i \in I^J$, and there exists $q_{i-1}\xrightarrow{J}p \in \vraTrans{A}$ such that $\sigma_{i} = p \sigma_{i-1}$;\footnote{A symbol is pushed on the left of a stack word.}

    Hence, when reading $a_i \in \calAlpha$, the automaton $\mathcal{A}^J$ such that $f_{\call}(J)= a_i$ is called and there is a jump to an initial state $q_i$ of $\mathcal{A}^J$, while a state $p$ such that $q_{i-1}\xrightarrow{J}p$ is pushed on the stack word. 
    \item If $a_i \in \retAlpha$, there exists a \proceduralSymbol\ $J\in \procAlpha$ such that $f_{\return}(J) = a_i$, $q_{i-1} \in F^J$
    , and $\sigma_{i-1} = q_i  \sigma_{i}$.

    Hence, when reading $a_i \in \retAlpha$, if $q_{i-1}$ is a final state of $\mathcal{A}^J$ and $f_{\return}(J)= a_i$, the call to $\mathcal{A}^J$ is completed and the state $q_i$ is popped from the stack word.
\end{itemize}
See \autoref{ex:vra} below to better understand the semantics. 

We denote by $\runsA{\mathcal{A}}$ the set of all \recursiveRun s of $\mathcal{A}$. 
The \emph{\recursiveLanguage}\footnote{{This refers to a language defined in terms of a VRA and should not be confused with the class of \recursiveLanguage s\ in theory of computation.}} of an automaton $\mathcal{A}^J\in \AutomataSet\cup \{\mathcal{A}^S\}$, denoted by $\Lang{\mathcal{A}^J}$, is the set of words such that there exists an \emph{accepting} \recursiveRun\ on them, i.e., from an initial configuration $\langle q_i,\varepsilon \rangle$, with $q_i\in I^J$, to a final configuration $\langle q_f,\varepsilon \rangle$, with $q_f\in F^J$:
\[\Lang{\mathcal{A}^J} = \left\{w\in\pdwAlph^*\mid \exists q_i \in I^J, q_f \in F^J, \langle q_i,\varepsilon \rangle\xrightarrow{w}\langle q_f,\varepsilon \rangle\in\runsA{\mathcal{A}}\right\}.\]
Notice that \(f_{\call}(J)\) and \(f_{\return}(J)\) do not appear in the definition of $\Lang{\mathcal{A}^J}$. That is, a word of the \recursiveLanguage\ of $J$ can start (resp. end) with a symbol that is not \(f_{\call}(J)\) (resp. \(f_{\return}(J)\)).
The language of a VRA $\mathcal{A}$, denoted by $\Lang{\mathcal{A}}$, is the \recursiveLanguage\ of its starting automaton: $\Lang{\mathcal{A}}=\Lang{\mathcal{A}^S}$. 



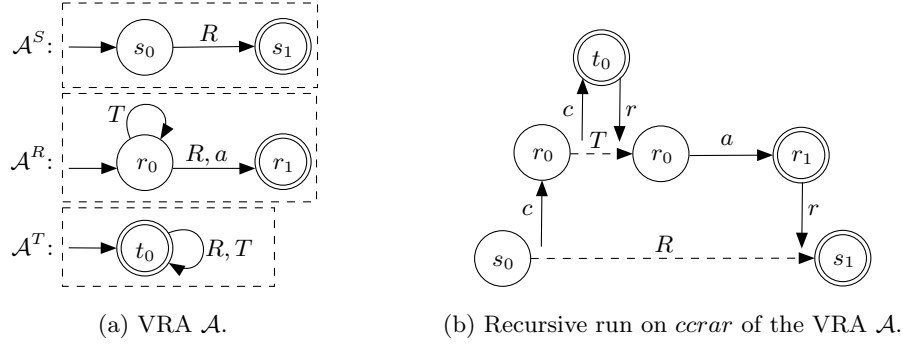
\begin{figure}[t]
    \centering
    \begin{subfigure}{.4\textwidth}
    \centering
    \begin{tikzpicture}[x=0.75pt,y=0.75pt,yscale=-1,xscale=1]

\draw    (72.33,48.95) -- (94.3,49.12) ;
\draw [shift={(96.3,49.13)}, rotate = 180.44]\arrow ;
\draw    (124.3,49.13) -- (163.3,48.66) ;
\draw [shift={(165.3,48.63)}, rotate = 179.3]\arrow ;
\draw [dashed]  (69.2,26.83) -- (196.5,26.83) -- (196.5,69.33) -- (69.2,69.33) -- cycle ;
\draw [dashed]  (69.2,72.33) -- (196,72.33) -- (196,126.83) -- (69.2,126.83) -- cycle ;
\draw [dashed]  (69.2,129.83) -- (175,129.83) -- (175,169.33) -- (69.2,169.33) -- cycle ;
\draw   (96.3,49.13) .. controls (96.3,41.4) and (102.57,35.13) .. (110.3,35.13) .. controls (118.03,35.13) and (124.3,41.4) .. (124.3,49.13) .. controls (124.3,56.87) and (118.03,63.13) .. (110.3,63.13) .. controls (102.57,63.13) and (96.3,56.87) .. (96.3,49.13) -- cycle ;
\draw   (165.3,48.63) .. controls (165.3,40.9) and (171.57,34.63) .. (179.3,34.63) .. controls (187.03,34.63) and (193.3,40.9) .. (193.3,48.63) .. controls (193.3,56.37) and (187.03,62.63) .. (179.3,62.63) .. controls (171.57,62.63) and (165.3,56.37) .. (165.3,48.63) -- cycle ;
\draw   (167.49,48.63) .. controls (167.49,42.11) and (172.78,36.82) .. (179.3,36.82) .. controls (185.82,36.82) and (191.11,42.11) .. (191.11,48.63) .. controls (191.11,55.16) and (185.82,60.45) .. (179.3,60.45) .. controls (172.78,60.45) and (167.49,55.16) .. (167.49,48.63) -- cycle ;
\draw    (72.33,149.95) -- (94.3,150.12) ;
\draw [shift={(96.3,150.13)}, rotate = 180.44]\arrow ;
\draw    (72.33,110) -- (94.3,110) ;
\draw [shift={(96.6,110)}, rotate = 180.44]\arrow ;
\draw   (96.3,107.13) .. controls (96.3,99.4) and (102.57,93.13) .. (110.3,93.13) .. controls (118.03,93.13) and (124.3,99.4) .. (124.3,107.13) .. controls (124.3,114.87) and (118.03,121.13) .. (110.3,121.13) .. controls (102.57,121.13) and (96.3,114.87) .. (96.3,107.13) -- cycle ;
\draw    (103,95) .. controls (93.25,71.11) and (129.13,73.37) .. (118.88,93.89) ;
\draw [shift={(118,95.5)}, rotate = 300.58]\arrow ;
\draw    (124.3,110) -- (163.3,110) ;
\draw [shift={(165.5,110)}, rotate = 179.3]\arrow ;
\draw   (165.3,106.63) .. controls (165.3,98.9) and (171.57,92.63) .. (179.3,92.63) .. controls (187.03,92.63) and (193.3,98.9) .. (193.3,106.63) .. controls (193.3,114.37) and (187.03,120.63) .. (179.3,120.63) .. controls (171.57,120.63) and (165.3,114.37) .. (165.3,106.63) -- cycle ;
\draw   (167.49,106.63) .. controls (167.49,100.11) and (172.78,94.82) .. (179.3,94.82) .. controls (185.82,94.82) and (191.11,100.11) .. (191.11,106.63) .. controls (191.11,113.16) and (185.82,118.45) .. (179.3,118.45) .. controls (172.78,118.45) and (167.49,113.16) .. (167.49,106.63) -- cycle ;
\draw   (96.3,150.13) .. controls (96.3,142.4) and (102.57,136.13) .. (110.3,136.13) .. controls (118.03,136.13) and (124.3,142.4) .. (124.3,150.13) .. controls (124.3,157.87) and (118.03,164.13) .. (110.3,164.13) .. controls (102.57,164.13) and (96.3,157.87) .. (96.3,150.13) -- cycle ;
\draw    (121.92,142.56) .. controls (145.82,132.84) and (144.47,168.72) .. (124.01,158.44) ;
\draw [shift={(122.2,157.56)}, rotate = 30.66]\arrow ;
\draw   (98.49,150.13) .. controls (98.49,143.61) and (103.78,138.32) .. (110.3,138.32) .. controls (116.82,138.32) and (122.11,143.61) .. (122.11,150.13) .. controls (122.11,156.66) and (116.82,161.95) .. (110.3,161.95) .. controls (103.78,161.95) and (98.49,156.66) .. (98.49,150.13) -- cycle ;

\draw (43,39.67) node [anchor=north west][inner sep=0.75pt]   [align=left] {$\displaystyle \mathcal{A}^{S}$:};
\draw (42,97.67) node [anchor=north west][inner sep=0.75pt]   [align=left] {$\displaystyle \mathcal{A}^{R}$:};
\draw (136.1,36.67) node [anchor=north west][inner sep=0.75pt]   [align=left] {$\displaystyle R$};
\draw (103.4,46) node [anchor=north west][inner sep=0.75pt]  [font=\small] [align=left] {$\displaystyle s_{0}$};
\draw (172.4,46) node [anchor=north west][inner sep=0.75pt]  [font=\small] [align=left] {$\displaystyle s_{1}$};
\draw (129.1,96.67) node [anchor=north west][inner sep=0.75pt]   [align=left] {$\displaystyle R,a$};
\draw (104.9,104) node [anchor=north west][inner sep=0.75pt]  [font=\small] [align=left] {$\displaystyle r_{0}$};
\draw (172.9,104) node [anchor=north west][inner sep=0.75pt]  [font=\small] [align=left] {$\displaystyle r_{1}$};
\draw (42.5,141.17) node [anchor=north west][inner sep=0.75pt]   [align=left] {$\displaystyle \mathcal{A}^{T}$:};
\draw (104.4,146.67) node [anchor=north west][inner sep=0.75pt]  [font=\small] [align=left] {$\displaystyle t_{0}$};
\draw (139.1,144.67) node [anchor=north west][inner sep=0.75pt]   [align=left] {$\displaystyle R,T$};
\draw (90.6,77) node [anchor=north west][inner sep=0.75pt]   [align=left] {$\displaystyle T$};

\end{tikzpicture}
    \vspace{-2em}
    \caption{VRA $\mathcal{A}$.}
    \label{fig:ex-vra}
    \end{subfigure}
    \hfill
    \begin{subfigure}{.5\textwidth}
    \centering
    \begin{tikzpicture}[x=0.75pt,y=0.75pt,yscale=-1,xscale=1]

\draw [dashed]   (455,178.3) -- (594.8,178.14) ;
\draw [shift={(596.8,178.13)}, rotate = 179.93] \arrow  ;
\draw    (590.67,139.97) -- (590.33,171.32) ;
\draw [shift={(590.31,173.32)}, rotate = 270.61] \arrow  ;
\draw   [dashed] (474.79,125.63) -- (504,125.63) ;
\draw [shift={(506,125.63)}, rotate = 180] \arrow  ;
\draw    (534.3,126.51) -- (574,126.31) ;
\draw [shift={(576,126.3)}, rotate = 179.72] \arrow  ;
\draw   (596.8,178.13) .. controls (596.8,170.4) and (603.07,164.13) .. (610.8,164.13) .. controls (618.53,164.13) and (624.8,170.4) .. (624.8,178.13) .. controls (624.8,185.87) and (618.53,192.13) .. (610.8,192.13) .. controls (603.07,192.13) and (596.8,185.87) .. (596.8,178.13) -- cycle ;
\draw   (598.99,178.13) .. controls (598.99,171.61) and (604.28,166.32) .. (610.8,166.32) .. controls (617.32,166.32) and (622.61,171.61) .. (622.61,178.13) .. controls (622.61,184.66) and (617.32,189.95) .. (610.8,189.95) .. controls (604.28,189.95) and (598.99,184.66) .. (598.99,178.13) -- cycle ;
\draw   (576,126.3) .. controls (576,118.57) and (582.27,112.3) .. (590,112.3) .. controls (597.73,112.3) and (604,118.57) .. (604,126.3) .. controls (604,134.03) and (597.73,140.3) .. (590,140.3) .. controls (582.27,140.3) and (576,134.03) .. (576,126.3) -- cycle ;
\draw   (578.19,126.3) .. controls (578.19,119.78) and (583.48,114.49) .. (590,114.49) .. controls (596.52,114.49) and (601.81,119.78) .. (601.81,126.3) .. controls (601.81,132.82) and (596.52,138.11) .. (590,138.11) .. controls (583.48,138.11) and (578.19,132.82) .. (578.19,126.3) -- cycle ;
\draw   (506,125.63) .. controls (506,117.9) and (512.27,111.63) .. (520,111.63) .. controls (527.73,111.63) and (534,117.9) .. (534,125.63) .. controls (534,133.37) and (527.73,139.63) .. (520,139.63) .. controls (512.27,139.63) and (506,133.37) .. (506,125.63) -- cycle ;
\draw   (446.67,124.97) .. controls (446.67,117.24) and (452.93,110.97) .. (460.67,110.97) .. controls (468.4,110.97) and (474.67,117.24) .. (474.67,124.97) .. controls (474.67,132.7) and (468.4,138.97) .. (460.67,138.97) .. controls (452.93,138.97) and (446.67,132.7) .. (446.67,124.97) -- cycle ;
\draw   (476.33,77.3) .. controls (476.33,69.57) and (482.6,63.3) .. (490.33,63.3) .. controls (498.06,63.3) and (504.33,69.57) .. (504.33,77.3) .. controls (504.33,85.03) and (498.06,91.3) .. (490.33,91.3) .. controls (482.6,91.3) and (476.33,85.03) .. (476.33,77.3) -- cycle ;
\draw   (478.52,77.3) .. controls (478.52,70.78) and (483.81,65.49) .. (490.33,65.49) .. controls (496.86,65.49) and (502.14,70.78) .. (502.14,77.3) .. controls (502.14,83.82) and (496.86,89.11) .. (490.33,89.11) .. controls (483.81,89.11) and (478.52,83.82) .. (478.52,77.3) -- cycle ;
\draw   (427,178.3) .. controls (427,170.57) and (433.27,164.3) .. (441,164.3) .. controls (448.73,164.3) and (455,170.57) .. (455,178.3) .. controls (455,186.03) and (448.73,192.3) .. (441,192.3) .. controls (433.27,192.3) and (427,186.03) .. (427,178.3) -- cycle ;
\draw    (460.31,172.32) -- (460.64,140.97) ;
\draw [shift={(460.67,138.97)}, rotate = 90.61] \arrow  ;
\draw    (499.67,87.3) -- (499.33,117.66) ;
\draw [shift={(499.31,119.66)}, rotate = 270.63] \arrow  ;
\draw    (480.64,118.99) -- (480.98,89.63) ;
\draw [shift={(481,87.63)}, rotate = 90.65] \arrow  ;

\draw (583.9,123) node [anchor=north west][inner sep=0.75pt]  [font=\small] [align=left] {$\displaystyle r_{1}$};
\draw (548.43,114.55) node [anchor=north west][inner sep=0.75pt]   [align=left] {$\displaystyle a$};
\draw (592,150) node [anchor=north west][inner sep=0.75pt]   [align=left] {$\displaystyle r$};
\draw (515.6,165.71) node [anchor=north west][inner sep=0.75pt]   [align=left] {$\displaystyle R$};
\draw (483.1,113.21) node [anchor=north west][inner sep=0.75pt]   [align=left] {$\displaystyle T$};
\draw (604.57,175) node [anchor=north west][inner sep=0.75pt]  [font=\small] [align=left] {$\displaystyle s_{1}$};
\draw (513.9,122.33) node [anchor=north west][inner sep=0.75pt]  [font=\small] [align=left] {$\displaystyle r_{0}$};
\draw (454.57,121.67) node [anchor=north west][inner sep=0.75pt]  [font=\small] [align=left] {$\displaystyle r_{0}$};
\draw (484.23,73) node [anchor=north west][inner sep=0.75pt]  [font=\small] [align=left] {$\displaystyle t_{0}$};
\draw (434.57,175.67) node [anchor=north west][inner sep=0.75pt]  [font=\small] [align=left] {$\displaystyle s_{0}$};
\draw (449.27,150) node [anchor=north west][inner sep=0.75pt]   [align=left] {$\displaystyle c$};
\draw (501,100) node [anchor=north west][inner sep=0.75pt]   [align=left] {$\displaystyle r$};
\draw (469.6,100) node [anchor=north west][inner sep=0.75pt]   [align=left] {$\displaystyle c$};

\end{tikzpicture}
    \vspace{-2em}
    \caption{\RecursiveRun\ on $ccrar$ of the VRA $\mathcal{A}$.}
    \label{fig:ex-vra-run}
    \end{subfigure}
\caption{Example of a VRA and a \recursiveRun\ of it.}
\vspace{-1em}
\end{figure}
    
\begin{example}  \label{ex:vra}
    \autoref{fig:ex-vra} shows an example of a VRA $\mathcal{A}=\langle\pdwAlph,\procAlpha,\AutomataSet,\mathcal{A}^S\rangle$, with the pushdown alphabet $\pdwAlph = \{a\} \cup \{c\} \cup \{r\}$, the procedural alphabet $\procAlpha = \{R,T\}$, and the \linkingFunction\ $f$ such that $f(R)=f(T)=\langle c,r\rangle$. The VRA is composed of three DFAs $\mathcal{A}^S$, $\mathcal{A}^R$ and $\mathcal{A}^T$, where $\mathcal{A}^S$ is the starting one.

    Let $w=ccrar\in \wm\pdwAlph$. The following \recursiveRun\ on $w$ witnesses that $w\in \Lang{\mathcal{A}}$ (it is also illustrated in \autoref{fig:ex-vra-run} with the automata calls):
    \[\langle s_0,\varepsilon\rangle\xrightarrow{c}\langle r_0,s_1\rangle\xrightarrow{c}\langle t_0,r_0 s_1\rangle\xrightarrow{r}\langle r_0,s_1\rangle\xrightarrow{a}\langle r_1,s_1\rangle\xrightarrow{r}\langle s_1,\varepsilon\rangle\in\runsA{\mathcal{A}}.\]
    We explain the first three transitions of the \recursiveRun: 
    \begin{itemize}
        \item $\langle s_0,\varepsilon\rangle\xrightarrow{c}\langle r_0,s_1\rangle$ is possible since $s_0\xrightarrow{R}s_1\in \delta^S$ and  $f_{\call}(R)=c$. We call the FA $\mathcal{A}^R$, go to $r_0 \in I^R$, and push $s_1$ on top of the stack word $\varepsilon$.
        \item $\langle r_0,s_1\rangle\xrightarrow{c}\langle t_0,r_0  s_1\rangle$ is also possible, but with a call to the FA $\mathcal{A}^T$. 
        \item $\langle t_0,r_0 s_1\rangle\xrightarrow{r}\langle r_0,s_1\rangle$ is possible since $t_0\in F^T$ and $f_{\return}(T)=r$. The call to $\mathcal{A}^T$ is completed. We pop $r_0$ from the stack word and go to this state.
    \end{itemize}
    As this run starts in  $s_0 \in I^S$ and ends in $s_1 \in F^S$, it follows that $w \in \Lang{\mathcal{A}}$.
\end{example}

Given a VRA $\mathcal{A}$, each of its FAs $\mathcal{A}^J\in \AutomataSet$, can be seen as accepting either the recursive language $\Lang{\mathcal{A}^J} \subseteq \pdwAlph^*$, or the language $\regLang{\mathcal{A}^J} \subseteq (\procAlpha \cup \intAlpha)^*$. To avoid any confusion, 
a run of $\mathcal{A}^J$ on a word over $\intAlpha\cup\procAlpha$ is called a \emph{\regularRun}, and the language
$\regLang{\mathcal{A}^J}$ is called its \emph{\regularLanguage}.
Note that $\Lang{\mathcal{A}^J}\subseteq\wm\pdwAlph$. Indeed, a \recursiveRun\ on $w \in \Lang{\mathcal{A}}$ begins and ends with an empty stack word, and we cannot pop a symbol from an empty stack word.

In order to better understand the VRA model, we state \autoref{prop:vra-run-to-nfa-trans} below, which provides a recursive definition of the semantics of VRAs: 
to follow a \proceduralTransition\ $q\xrightarrow{J}p$, a VRA must read a word $cwr$ such that $f(J) = \langle c,r \rangle$ and $w$ is accepted by $\mathcal{A}^J$. \autoref{prop:vra-run-to-nfa-trans} is illustrated in \autoref{fig:sem-vra} (see also the example of \autoref{fig:ex-vra-run}) and proved in \autoref{ax:vra-prop}. Given $c\in \calAlpha$ and $r\in \retAlpha$, we write $\procAlpha^{\langle c,r\rangle}$ as the set of \proceduralSymbol s linked by $f$ to $\langle c,r\rangle$: $\procAlpha^{\langle c,r\rangle}= \{J\in \procAlpha \mid f(J)= \langle c,r\rangle\}$.

\begin{restatable}{proposition}{regRunRecRun}\label{prop:vra-run-to-nfa-trans}
    Given a VRA $\mathcal{A}$, let $cwr \in \calAlpha\cdot \wm\pdwAlph\cdot \retAlpha$ and $p,q\in \vraStates{A}$:
    \[\langle q,\varepsilon \rangle\xrightarrow{cwr}\langle p,\varepsilon \rangle\in \runsA{\mathcal{A}} \iff \exists J\in\procAlpha^{\langle c,r\rangle}:q\xrightarrow{J}p\in\vraTrans{A}\wedge w\in \Lang{\mathcal{A}^J}.\]
\end{restatable}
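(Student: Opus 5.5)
The plan is to prove both directions directly from the step-by-step semantics of recursive runs. The tool is a ``stack-suffix invariance'' lemma, which I would establish first. For every $\sigma\in\vraStates{A}^*$ and every well-matched word $u\in\wm\pdwAlph$, we have $\langle q,\varepsilon\rangle\xrightarrow{u}\langle p,\varepsilon\rangle\in\runsA{\mathcal{A}}$ if and only if $\langle q,\sigma\rangle\xrightarrow{u}\langle p,\sigma\rangle\in\runsA{\mathcal{A}}$. Moreover, along such a run the stack never drops below $\sigma$: every intermediate stack word has the form $\tau\sigma$. I would prove this by induction on the length of $u$, using the structural decomposition of \autoref{def:wm-words}. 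The key observations are that the three rules of the semantics only inspect or modify the top of the stack, and that in a well-matched word every return matches a call occurring inside $u$. So no return ever pops a symbol of $\sigma$, and the same transitions apply whatever $\sigma$ is.

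For $(\Leftarrow)$, fix $J\in\procAlpha^{\langle c,r\rangle}$ with $q\xrightarrow{J}p\in\vraTrans{A}$ and an accepting run $\langle q_i,\varepsilon\rangle\xrightarrow{w}\langle q_f,\varepsilon\rangle$ with $q_i\in I^J$ and $q_f\in F^J$. Reading $c$ gives $\langle q,\varepsilon\rangle\xrightarrow{c}\langle q_i,p\rangle$, since $f_{\call}(J)=c$, $q_i\in I^J$, and $p$ is pushed. By the lemma with $\sigma=p$, we obtain $\langle q_i,p\rangle\xrightarrow{w}\langle q_f,p\rangle$. Finally $\langle q_f,p\rangle\xrightarrow{r}\langle p,\varepsilon\rangle$ is allowed, since $q_f\in F^J$ and $f_{\return}(J)=r$. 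Concatenating these three pieces gives the required run.

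For $(\Rightarrow)$, take a run $\langle q,\varepsilon\rangle\xrightarrow{c}\langle q_1,\sigma_1\rangle\xrightarrow{w}\langle q',\sigma'\rangle\xrightarrow{r}\langle p,\varepsilon\rangle$. The first step yields some $J$ with $f_{\call}(J)=c$, $q_1\in I^J$, and a transition $q\xrightarrow{J}p'$ with $\sigma_1=p'$. Since $w$ is well-matched, the lemma (its ``stack never drops below'' part) gives $\sigma'=p'$, so the final return pops $p'$, forcing $p=p'$. That return also gives some $J'$ with $f_{\return}(J')=r$ and $q'\in F^{J'}$. Applying the lemma again to the middle segment gives $\langle q_1,\varepsilon\rangle\xrightarrow{w}\langle q',\varepsilon\rangle$.

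The main obstacle is that the definition does not syntactically tie $J'$ to $J$: the return rule only asks for some $J'$ with $f_{\return}(J')=r$ and $q'\in F^{J'}$. I expect to close this gap using the pairwise disjointness of the state sets $Q^J$, together with the observation that all transitions of $\mathcal{A}$ stay inside a single component, while calls jump to initial states and returns go to popped states. Concretely, I would strengthen the lemma: a run on a well-matched word from a configuration whose state lies in $Q^J$ ends in a state of $Q^J$, because internal and procedural transitions of $\delta^J$ remain in $Q^J$, and each call/return pair restores the caller's state. Then $q_1\in Q^J$ gives $q'\in Q^J$, and $q'\in F^{J'}\subseteq Q^{J'}$ forces $J'=J$ by disjointness. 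Hence $q'\in F^J$, so $w\in\Lang{\mathcal{A}^J}$, and $J\in\procAlpha^{\langle c,r\rangle}$ since $f(J)=\langle c,r\rangle$. The careful part of the write-up is proving this component-preservation property inside the induction simultaneously with stack invariance.
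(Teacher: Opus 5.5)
Your proposal is correct and follows essentially the paper's proof: the paper's \autoref{lem:stack-equivalence-vra} packages exactly your stack-invariance and component-preservation facts. One refinement: the lemma should be stated for an arbitrary run $\langle q_1,\sigma_1\rangle\xrightarrow{w}\langle q_2,\sigma_2\rangle$ on a well-matched word, concluding $\sigma_2=\sigma_1$, since that is the form you need in $(\Rightarrow)$ to get $\sigma'=p'$, rather than only for runs already assumed to end in $\sigma$. Your explicit use of the disjointness of the $Q^J$ to force $J'=J$ spells out a step the paper leaves implicit when it writes $q_2\in F^J$.
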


\begin{figure}[t]
    \centering
    \begin{tikzpicture}[x=0.75pt,y=0.75pt,yscale=-1,xscale=1]

\draw [dashed]   (277.7,94.38) -- (394.7,94.38) ;
\draw [shift={(396.7,94.38)}, rotate = 180] \arrow  ;
\draw    (285.6,90.07) -- (285.7,51.38) ;
\draw [shift={(285.7,49.38)}, rotate = 90.14] \arrow  ;
\draw    (390.7,49.38) -- (390.6,88.07) ;
\draw [shift={(390.6,90.07)}, rotate = 270.14] \arrow  ;
\draw    (299.7,35.38) .. controls (327.82,23.04) and (345.67,46.8) .. (374.9,36.08) ;
\draw [shift={(376.7,35.38)}, rotate = 157.72] \arrow  ;
\draw   (376.7,35.38) .. controls (376.7,27.65) and (382.97,21.38) .. (390.7,21.38) .. controls (398.43,21.38) and (404.7,27.65) .. (404.7,35.38) .. controls (404.7,43.11) and (398.43,49.38) .. (390.7,49.38) .. controls (382.97,49.38) and (376.7,43.11) .. (376.7,35.38) -- cycle ;
\draw   (271.7,35.38) .. controls (271.7,27.65) and (277.97,21.38) .. (285.7,21.38) .. controls (293.43,21.38) and (299.7,27.65) .. (299.7,35.38) .. controls (299.7,43.11) and (293.43,49.38) .. (285.7,49.38) .. controls (277.97,49.38) and (271.7,43.11) .. (271.7,35.38) -- cycle ;
\draw   (249.7,94.38) .. controls (249.7,86.65) and (255.97,80.38) .. (263.7,80.38) .. controls (271.43,80.38) and (277.7,86.65) .. (277.7,94.38) .. controls (277.7,102.11) and (271.43,108.38) .. (263.7,108.38) .. controls (255.97,108.38) and (249.7,102.11) .. (249.7,94.38) -- cycle ;
\draw   (396.7,94.38) .. controls (396.7,86.65) and (402.97,80.38) .. (410.7,80.38) .. controls (418.43,80.38) and (424.7,86.65) .. (424.7,94.38) .. controls (424.7,102.11) and (418.43,108.38) .. (410.7,108.38) .. controls (402.97,108.38) and (396.7,102.11) .. (396.7,94.38) -- cycle ;
\draw   (378.89,35.38) .. controls (378.89,28.85) and (384.18,23.57) .. (390.7,23.57) .. controls (397.22,23.57) and (402.51,28.85) .. (402.51,35.38) .. controls (402.51,41.9) and (397.22,47.19) .. (390.7,47.19) .. controls (384.18,47.19) and (378.89,41.9) .. (378.89,35.38) -- cycle ;

\draw (258.9,91.17) node [anchor=north west][inner sep=0.75pt]  [font=\small] [align=left] {$\displaystyle q$};
\draw (328.77,80.01) node [anchor=north west][inner sep=0.75pt]   [align=left] {$\displaystyle J$};
\draw (405.9,91.67) node [anchor=north west][inner sep=0.75pt]  [font=\small] [align=left] {$\displaystyle p$};
\draw (279.4,32.67) node [anchor=north west][inner sep=0.75pt]  [font=\small] [align=left] {$\displaystyle q_{i}$};
\draw (383.9,32.67) node [anchor=north west][inner sep=0.75pt]  [font=\small] [align=left] {$\displaystyle q_{f}$};
\draw (306.27,14.51) node [anchor=north west][inner sep=0.75pt]   [align=left] {$\displaystyle w\in L(\mathcal{A}^{J})$};
\draw (273.77,63.51) node [anchor=north west][inner sep=0.75pt]   [align=left] {$\displaystyle c$};
\draw (393.27,62.51) node [anchor=north west][inner sep=0.75pt]   [align=left] {$\displaystyle r$};

\end{tikzpicture}
    \vspace{-2em}
    \caption{Illustration of the semantics of a VRA on $cwr \in \calAlpha \cdot \wm{\pdwAlph} \cdot \retAlpha$, with $q_i \in I^J$, $q_f\in F^J$ and $f(J)=\langle c,r \rangle$.}
    \label{fig:sem-vra}
\vspace{-1em}
\end{figure}

This proposition provides a link between the \recursiveLanguage\ of an automaton and its \regularLanguage. Consider an accepting \recursiveRun\ of $\mathcal{A}^J\in \AutomataSet\cup \{\mathcal{A}^S\}$ on a well-matched $w=u_0c_1w_1r_1\ldots c_nw_nr_nu_n\in \Lang{\mathcal{A}^J}$, with $n\in \mathbb{N}$, $u_i\in \intAlpha^*$, $c_i \in\calAlpha$, $r_i\in \retAlpha$ and $w_i\in \wm\pdwAlph$. We can decompose the \recursiveRun\ into  $\langle q,\varepsilon \rangle \xrightarrow{u_0} \langle q_1,\varepsilon \rangle \xrightarrow{c_1w_1r_1} \langle p_1,\varepsilon \rangle \xrightarrow{u_1} \dots \xrightarrow{u_{n-1}} \langle q_n,\varepsilon \rangle \xrightarrow{c_nw_nr_n} \langle p_n,\varepsilon \rangle \xrightarrow{u_n} \langle p,\varepsilon \rangle$, with $q\in I^J$, $p\in F^J$ and $q_i,p_i \in Q^J$ for all $i\in [1,n]$. By \autoref{prop:vra-run-to-nfa-trans}, we can replace each \recursiveRun\ $\langle q_i,\varepsilon\rangle \xrightarrow{c_iw_ir_i}\langle p_i, \varepsilon \rangle$ by a \regularRun\ $q_i\xrightarrow{J_i}p_i\in \vraTrans{A}$, with $J_i\in \procAlpha^{\langle c_i,r_i\rangle}$ such that $w_i\in \Lang{\mathcal{A}^{J_i}}$. This results in an accepting \regularRun\ on $u_0J_1\ldots J_nu_n\in \regLang{\mathcal{A}^J}$. Note that the converse also holds: from the word $u_0J_1\ldots J_nu_n$, we can replace each $J_i$ by a word $c_iw_i'r_i$, with $w_i'$ any word in $\Lang{\mathcal{A}^{J_i}}$, to obtain a word in the \recursiveLanguage\ of $\mathcal{A}^J$.

\subsection{Comparison with Other Models}

A VRA $\mathcal{A}=\langle\pdwAlph,\procAlpha, \AutomataSet, \mathcal{A}^S\rangle$ is \emph{deterministic} if all its automata in $\AutomataSet\cup \{\mathcal{A}^S\}$ are DFAs and, for all $q\in  \vraStates{A}$, if there exist two transitions $(q,J,p),(q,J',p') \in \vraTrans{A}$ with distinct $J,J'\in \procAlpha$, then $f_{\call}(J) \neq f_{\call}(J')$. The VRA of \autoref{fig:ex-vra} is not deterministic because  $r_0\xrightarrow{T}r_0,r_0\xrightarrow{R}r_1\in \delta^R$ and $f_{\call}(R)=f_{\call}(T)$. 

In a deterministic VRA, for all configurations $\langle q,\sigma \rangle\in \vraStates{A}\times \vraStates{A}^*$ and symbols $a\in \pdwAlph$, there exists at most one \recursiveRun\ on 
$a$ from $\langle q,\sigma \rangle$. This is clear when $a\in\intAlpha$, since all FAs are DFAs, and when $a\in\retAlpha$, by the semantics of VRAs. When $a\in \calAlpha$, since there exists at most one \proceduralSymbol\ $J\in\procAlpha$ such that $f_{\call}(J)=a$ and $(q,J,p)\in \vraTrans{A}$, the only reachable configuration is $\langle q_i,p \sigma\rangle$, with $q_i \in I^J$ the unique initial state of $\mathcal{A}^J$. 
The next proposition states that deterministic VRAs are less expressive than VRAs.
\begin{restatable}{proposition}{propDetVra}\label{prop:det-vra-are-not-general-vra}
    There exists no deterministic VRA accepting the \recursiveLanguage\ 
    accepted by the VRA depicted in \autoref{fig:ex-vra}.  
\end{restatable}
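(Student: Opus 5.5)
The idea is to exploit one structural consequence of determinism: the state reached after a complete call block does not depend on what was read inside the block. Concretely, in a deterministic VRA, reading a call symbol $c$ from a state $q$ selects a unique \proceduralSymbol\ $J$ with $f_{\call}(J)=c$ and $(q,J,p)\in\vraTrans{A}$, and hence a unique pushed state $p$. After any block $cxr$, the configuration is therefore $\langle p,\sigma\rangle$, whatever $x\in\Lang{\mathcal{A}^J}$ was. The original VRA $\mathcal{A}$ of \autoref{fig:ex-vra} violates this: what may follow a block $cxr$ inside $\mathcal{A}^R$ depends on whether $x$ was a $T$-block or an $R$-block.

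First I compute the relevant languages of $\mathcal{A}$. Let $L_T=\Lang{\mathcal{A}^T}$ and $L_R=\Lang{\mathcal{A}^R}$. Using \autoref{prop:vra-run-to-nfa-trans} together with the regular languages $(R+T)^*$ and $T^*(R+a)$:
\begin{itemize}
\item $L_T$ consists of the concatenations of blocks $cxr$ with $x\in L_R\cup L_T$. In particular $\varepsilon\in L_T$, and no word of $L_T$ contains $a$ at depth $0$.
\item $L_R=(cL_Tr)^*\cdot(a+cL_Rr)$.
\item $\Lang{\mathcal{A}}=cL_Rr$.
\end{itemize}
From these I check three memberships:
\begin{itemize}
\item $w_1=c\,c\,r\,a\,r\in\Lang{\mathcal{A}}$, since $cr$ is a $T$-block followed by $a$.
\item $w_2=c\,c\,a\,r\,r\in\Lang{\mathcal{A}}$, since $car$ is a final $R$-block with $a\in L_R$.
\item $w_3=c\,c\,a\,r\,a\,r\notin\Lang{\mathcal{A}}$. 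The inner part $carar$ would have to be in $L_R$. The block $car$ cannot be a $T$-block, because $a\notin L_T$ (it has an $a$ at depth $0$). If $car$ is the final $R$-block, then nothing may follow it, yet an $a$ follows.
\end{itemize}

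Now suppose, towards a contradiction, that $\mathcal{B}$ is a deterministic VRA with $\Lang{\mathcal{B}}=\Lang{\mathcal{A}}$. From the unique initial configuration $\langle i,\varepsilon\rangle$ of $\mathcal{B}^S$, reading the first $c$ leads to a unique configuration $\langle q,\sigma_0\rangle$. Reading the second $c$ then leads to $\langle i_J,p\sigma_0\rangle$ for a unique $J$ and a unique $p$. Here $i_J$ is the unique initial state of $\mathcal{B}^J$, and $(q,J,p)$ is the unique applicable procedural transition.

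Since runs of $\mathcal{B}$ are unique, the accepting run on $w_1$ must pass through $\langle p,\sigma_0\rangle$ after reading $ccr$, and it then reads $ar$ to acceptance. Likewise, the accepting run on $w_2$ shows that $\langle i_J,\varepsilon\rangle\xrightarrow{a}\langle q',\varepsilon\rangle$ for some $q'\in F^J$; by the same reasoning as in \autoref{prop:vra-run-to-nfa-trans}, this gives $a\in\Lang{\mathcal{B}^J}$.

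Splicing the two runs yields a run on $w_3$: read $cc$, then $a$ inside $\mathcal{B}^J$, return via $r$ to $\langle p,\sigma_0\rangle$, then continue with $ar$ exactly as in the run on $w_1$. This run is accepting, so $w_3\in\Lang{\mathcal{B}}$, which contradicts $w_3\notin\Lang{\mathcal{A}}$.

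I expect the main obstacle to be rigorously justifying the splicing step. The semantics pops exactly the pushed state $p$, and acceptance of the return only requires the current state to be in $F^J$, so the continuation after $r$ depends only on $\langle p,\sigma_0\rangle$. I would state this as a short claim derived from the run semantics and \autoref{prop:vra-run-to-nfa-trans}.
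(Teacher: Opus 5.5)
Your proof is correct and is essentially the paper's argument: the same three words $ccrar$, $ccarr$, $ccarar$, determinism forcing a single configuration $\langle i_J, p\sigma_0\rangle$ after $cc$ (so both accepting runs pop the same state $p$), and the same splice giving an accepting run of $\mathcal{B}$ on $ccarar$; your explicit computation of $L_T$ and $L_R$ just makes precise the memberships the paper calls ``easy to see''. One small slip: the inner word of $w_3$ that must lie in $L_R$ is $cara$, not $carar$, although your case analysis already treats $cara$ correctly.
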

\begin{proof}[Sketch]
    The main idea is that a deterministic VRA cannot simulate the nondeterministic transitions $r_0\xrightarrow{T}r_0, r_0\xrightarrow{R}r_1 \in \delta^R$ without altering the \recursiveLanguage\ of $\mathcal{A}^R$. A detailed proof is given in \autoref{ax:det-vra-and-vra}.
\qed\end{proof}

A particular class of VRAs, called systems of procedural automata, is studied in \cite{frohme2021spa}. It consists of VRAs such that for all distinct $J,J'\in \procAlpha$, $f_{\call}(J)\neq f_{\call}(J')$. This class forms a strict subclass of the deterministic VRAs 
(see \autoref{ax:spa-vra}). 

\emph{Visibly pushdown automata} (VPAs) form a subclass of pushdown automata~\cite{alur2004vpl}.  The next theorem states that VRAs and VPAs are equivalent models. In \autoref{ax:vra-vpa}, we recall the formal definition of VPA and prove the theorem.

\begin{restatable}[Equivalence of VRAs and VPAs]{theorem}{vravpa}\label{th:vpa-vra}
    Let $L\subseteq \wm\pdwAlph$. There exists a VRA $\mathcal{A}$ accepting $L$ iff there exists a VPA $\mathcal{B}$ accepting $L$. 
    Moreover, there exists a logspace-computable construction for
    $\mathcal{B}$ with $|\mathcal{B}|=\mathcal{O}(|\mathcal{A}|)$ (resp.\ for $\mathcal{A}$  with $|\mathcal{A}|=\mathcal{O}(|\mathcal{B}|^4)$).
\end{restatable}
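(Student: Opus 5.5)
We prove the two directions separately by explicit constructions. We use the standard VPA model of \cite{alur2004vpl}: a finite set of states $P$, initial and final states, and a stack alphabet $\Gamma$. Internal transitions have the form $(p,a,p')$. Call transitions $(p,c,p',\gamma)$ push $\gamma$. Return transitions $(p,r,\gamma,p')$ pop $\gamma$. Acceptance is by final state with empty stack, so the language lies in $\wm\pdwAlph$.

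\emph{VRA to VPA.} Given $\mathcal{A}$, we let the VPA $\mathcal{B}$ have state set $\vraStates{A}$, initial states $I^S$, final states $F^S$, and stack alphabet $\vraStates{A}$.
\begin{itemize}
\item Each internal transition of $\vraTrans{A}$ is copied as is.
\item For every \proceduralTransition\ $q\xrightarrow{J}p$ and every $q_i\in I^J$, we add the call transition $(q,f_{\call}(J),q_i,\langle p,J\rangle)$.
\item For every $q_f\in F^J$ and every $p$, we add the return transition $(q_f,f_{\return}(J),\langle p,J\rangle,p)$.
\end{itemize}
The stack symbols $\langle p,J\rangle$ are needed because $p$ alone does not record which $J$ was called, so the return could not check that $q_f\in F^J$. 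This naive version produces $|\delta|\cdot|I^J|$ call transitions and $|F^J|\cdot|\vraStates{A}|$ return transitions, which is too many for $\mathcal{O}(|\mathcal{A}|)$. To stay linear, we restrict the stack alphabet to the pairs $\langle p,J\rangle$ that occur as targets of \proceduralTransition s. We then give each $\mathcal{A}^J$ a single fresh entry state $e_J$ reached from $q$ by the call and copying the outgoing transitions of all $q_i\in I^J$. Return transitions are only needed for pairs $(q_f,\langle p,J\rangle)$ with $q_f\in F^J$. If even this is too large, we realize returns through a fresh exit state per $J$ reached by internal transitions. That would change the word, though, so instead we charge the return transitions to the \proceduralTransition s and to the final states; we expect the bookkeeping to work out to $\mathcal{O}(|\mathcal{A}|)$. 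Correctness follows by induction on the decomposition of well-matched words using \autoref{prop:vra-run-to-nfa-trans}: a recursive run $\langle q,\sigma\rangle\to\langle q',\sigma\rangle$ corresponds exactly to a VPA run with the stack $\sigma$ annotated by the called procedural symbols. The construction is local, copying transitions, so it is logspace.

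\emph{VPA to VRA.} This is the harder direction and the source of the $|\mathcal{B}|^4$ bound. A matched segment $c\,w\,r$ of $\mathcal{B}$ from $p$ to $p'$ pushes $\gamma$ on $c$ from $p$ to $p_1$, runs on $w$ from $p_1$ to $p_2$ on a well-matched word, and pops $\gamma$ on $r$ from $p_2$ to $p'$. Acceptance of a well-matched $w$ from $p_1$ to $p_2$ is a summary that the VRA must realize with one FA per pair of endpoints. We therefore introduce procedural symbols $J=\langle p_1,p_2,c,r\rangle$ with $f(J)=\langle c,r\rangle$. The FA $\mathcal{A}^J$ has states $P$, initial state $p_1$, final state $p_2$, and the internal transitions of $\mathcal{B}$. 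For each call transition $(p,c,p'_1,\gamma)$ and each return transition $(p'_2,r,\gamma,p')$ sharing $\gamma$, we add the \proceduralTransition\ $p\xrightarrow{\langle p'_1,p'_2,c,r\rangle}p'$ in every copy. The starting automaton is a copy with the initial and final states of $\mathcal{B}$. Each copy has $\mathcal{O}(|\mathcal{B}|^2)$ transitions, since there are pairs of call and return transitions, and there are $\mathcal{O}(|\mathcal{B}|^2)$ copies, indexed by $p_1,p_2$ together with $c,r$ taken from the matched pair. This gives $\mathcal{O}(|\mathcal{B}|^4)$ in total.

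The copies differ only in initial and final states. Correctness is again proved by induction on $\depth{w}$, showing $w\in\Lang{\mathcal{A}^{\langle p_1,p_2,c,r\rangle}}$ iff $\mathcal{B}$ has a run on $w$ from $\langle p_1,\bot\rangle$ to $\langle p_2,\bot\rangle$, with \autoref{prop:vra-run-to-nfa-trans} handling each matched factor. Listing the copies is a nested loop over indices, hence logspace. The main obstacle I expect is the linear bound in the first direction, specifically avoiding the product of initial states, final states and return targets. In the second direction the obstacle is only careful counting.
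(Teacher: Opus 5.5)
Your VPA-to-VRA direction matches the paper's \autoref{const:vpa-to-vra}: procedural symbols $\langle p_1,p_2,c,r\rangle$, one copy of $\mathcal{B}$ per symbol, and a \proceduralTransition\ for each call/return pair sharing $\gamma$, with the same $\mathcal{O}(|\mathcal{B}|^4)$ count. Your VRA-to-VPA construction also accepts the right language. The gap is the bound $|\mathcal{B}|=\mathcal{O}(|\mathcal{A}|)$ in the VRA-to-VPA direction, which you leave at ``we expect the bookkeeping to work out''. For your construction it does not work out. Return transitions are indexed by pairs (final state of $\mathcal{A}^J$, target of a \proceduralTransition\ on $J$). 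That is a product, so no charging scheme can make the count linear. For instance, let $\mathcal{A}^S$ have transitions $s\xrightarrow{J}p_k$ and $\mathcal{A}^J$ have transitions $q_0\xrightarrow{a}f_k$ with every $f_k$ final, for $k=1,\dots,n$. Then $|\mathcal{A}|=\mathcal{O}(n)$, while your VPA has the $n^2$ return transitions $(f_k,f_{\return}(J),\langle p_l,J\rangle,p_l)$.

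The missing idea is to normalize final states exactly as you normalized initial states; this is the paper's \autoref{const:nfa-to-constant-nfa}. Your worry that a fresh exit state ``would change the word'' comes from appending a move after it. Instead, \emph{redirect} into it: for every transition $q\xrightarrow{a}q'$ of $\mathcal{A}^J$ with $q'\in F^J$ (including those leaving $e_J$), add $q\xrightarrow{a}x_J$. Then make $x_J$ final, and also $e_J$ when $I^J\cap F^J\neq\varnothing$. Your entry-state step needs that last case too, or the factor $f_{\call}(J)f_{\return}(J)$ with empty inner word is lost. This preserves $\regLang{\mathcal{A}^J}$, hence every \recursiveLanguage\ (by induction on depth using \autoref{prop:vra-run-to-nfa-trans}), at linear cost and in logspace. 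Afterwards each \proceduralTransition\ $q\xrightarrow{J}p$ yields one call and at most two return transitions, which gives $\mathcal{O}(|\mathcal{A}|)$. Separately, the stack annotation $\langle p,J\rangle$ is unnecessary. The sets $Q^J$ are pairwise disjoint, so the current state $q_f$ at a return already determines $J$. The paper therefore simply takes $\Gamma=\vraStates{A}$, with returns $q_f\xrightarrow{f_{\return}(J)[p]}p$ for $q_f\in F^J$.
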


\section{\Codeterministic\ and \Complete\ VRAs}\label{sec:codeterministic-complete-vra}



\autoref{prop:det-vra-are-not-general-vra} states that not all VRAs have an equivalent deterministic VRA. We introduce in this section the notions of \emph{\codeterministic} VRA and \emph{\complete} VRA, and prove that any VRA can be transformed into a \codeterministic\ and \complete\ one. This property is notably useful to show that the class of VRAs is closed under complement (see \autoref{th:closure} below). 
Our concept of \codeterminism\  is inspired by the concept of \emph{\codeterministic\ grammars} introduced in~\cite{berstel2002balanced}.

\begin{definition}[\Codeterministic\ VRA]\label{def:codeterministic-vra}
    A VRA $\mathcal{A}$ is  \emph{\codeterministic} if all automata linked to the same call/return symbols have pairwise disjoint languages:
    \[\forall c\in\calAlpha, \forall r\in\retAlpha, \forall J,J'\in\procAlpha^{\langle c,r\rangle}:J\neq J' \Rightarrow \Lang{\mathcal{A}^J}\cap \Lang{\mathcal{A}^{J'}}=\varnothing.\]
\end{definition}


The notion of \complete\ VRA requires two conditions. 
The first asks all the FAs of the VRA to be complete. The second is a universality condition on the \recursiveLanguage s.
These conditions guarantee that there always exists a \recursiveRun\ on any well-matched word, regardless of the starting configuration.
  
\begin{definition}[\Complete\ VRA]\label{def:complete-vra}
   A VRA $\mathcal{A}$ is  \emph{\complete} if 
   \begin{itemize}
   \item for all $q\in \vraStates{A}$ and $a\in\intAlpha\cup\procAlpha$, there exists $(q,a,p)\in\vraTrans{A}$; \item for all $c\in\calAlpha$ and $r\in\retAlpha$: $\bigcup_{J\in\procAlpha^{\langle c,r\rangle}}\Lang{\mathcal{A}^J}=\wm\pdwAlph$.
   \end{itemize}
\end{definition}

\noindent
We remark that the second equation in the definition above holds for the partition of $\procAlpha$ into the $\procAlpha^{\langle c,r \rangle}$.
        
        
        

The conditions on \recursiveLanguage s for a VRA to be \codeterministic\ and \complete\ can be replaced by a condition at the level of \regularLanguage s:

\begin{restatable}{proposition}{codetCompRegLang}\label{prop:comp-codet-on-reg-lang}
     Let $\mathcal{A}$ be a VRA with all its automata being complete FAs. If, for all $ c\in\calAlpha$, $r\in\retAlpha$, the \regularLanguage s $\regLang{\mathcal{A}^J}$, with $J\in \procAlpha^{\langle c,r \rangle}$, form a partition of $(\intAlpha\cup \procAlpha)^*$, then $\mathcal{A}$ is \codeterministic\ and \complete. 
\end{restatable}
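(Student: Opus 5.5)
The plan is to prove, by strong induction on the depth of $w\in\wm\pdwAlph$, the following two claims simultaneously. For every $w$ and every state $q\in\vraStates{A}$ there exists a \recursiveRun\ $\langle q,\varepsilon\rangle\xrightarrow{w}\langle p,\varepsilon\rangle$; this is the \emph{existence} claim. The second claim concerns the \emph{uniqueness} of the target. For each $w$, the map sending $q$ to the set of reachable $p$ is a function. Equivalently, there is exactly one such $p$. Moreover, for each pair $\langle c,r\rangle$ there is exactly one $J\in\procAlpha^{\langle c,r\rangle}$ with $w\in\Lang{\mathcal{A}^J}$. Codeterminism and the second completeness condition are exactly this last statement. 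The first completeness condition is a hypothesis.

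The key tool is the correspondence described right after \autoref{prop:vra-run-to-nfa-trans}. Write $w=u_0c_1w_1r_1u_1\dots c_nw_nr_nu_n$ with $\depth{w_i}<\depth{w}$. By induction, each $w_i$ lies in $\Lang{\mathcal{A}^{J}}$ for exactly one $J_i\in\procAlpha^{\langle c_i,r_i\rangle}$. Define the \emph{abstraction} $\hat w=u_0J_1u_1\dots J_nu_n\in(\intAlpha\cup\procAlpha)^*$.

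The central lemma states that for any $J$ (including $S$), $w\in\Lang{\mathcal{A}^J}$ iff $\hat w\in\regLang{\mathcal{A}^J}$. More generally, $\langle q,\varepsilon\rangle\xrightarrow{w}\langle p,\varepsilon\rangle$ holds iff there is a \regularRun\ $q\xrightarrow{\hat w}p$. For the ``if'' direction, replace each $J_i$ by $c_iw_ir_i$ using \autoref{prop:vra-run-to-nfa-trans}, since $w_i\in\Lang{\mathcal{A}^{J_i}}$. For the ``only if'' direction, decompose the run as in the text. Each block $c_iw_ir_i$ yields some $J_i'\in\procAlpha^{\langle c_i,r_i\rangle}$ with $w_i\in\Lang{\mathcal{A}^{J_i'}}$, and uniqueness at smaller depth forces $J_i'=J_i$. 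Hence the regular run reads exactly $\hat w$. Note that $w=\varepsilon$ and $w\in\intAlpha^*$ are the base case, where $n=0$ and $\hat w=w$.

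With the lemma in hand, the depth-$d$ claims follow. Fix $\langle c,r\rangle$. Since the $\regLang{\mathcal{A}^J}$ for $J\in\procAlpha^{\langle c,r\rangle}$ partition $(\intAlpha\cup\procAlpha)^*$, the word $\hat w$ lies in exactly one of them. By the lemma, $w$ then lies in exactly one $\Lang{\mathcal{A}^J}$. This gives the uniqueness statement at depth $d$, and therefore also codeterminism and completeness. Existence of a run from any $q$ follows because each FA is complete, so a \regularRun\ on $\hat w$ from $q$ exists, and the lemma transfers it. The existence claim is not strictly needed for the statement, but it comes for free.

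The main obstacle is the ``only if'' direction of the lemma. Without uniqueness at lower depth, a recursive run might use a different procedural symbol $J_i'$ in the same class. In that case its regular counterpart would read a word other than $\hat w$, and the partition argument would break. This is why the induction must carry uniqueness, i.e.\ codeterminism, for all lower depths rather than only completeness. I would check carefully that the decomposition of runs into blocks at empty stack is valid, which follows from the well-matched decomposition and the stack discipline.
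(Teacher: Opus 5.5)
Your argument is correct, but it is organized differently from the paper's. The paper proves the two halves separately, each from only half of the hypothesis.
\begin{itemize}
\item For \codeterminism, it uses pairwise disjointness of the \regularLanguage s alone. Take a common word $w\in\Lang{\mathcal{A}^J}\cap\Lang{\mathcal{A}^{J'}}$ of minimal depth. Its two abstractions $u_0K_1\dots K_nu_n\in\regLang{\mathcal{A}^J}$ and $u_0K'_1\dots K'_nu_n\in\regLang{\mathcal{A}^{J'}}$ must coincide by minimality, contradicting disjointness.
\item For completeness, it uses the covering of $(\intAlpha\cup\procAlpha)^*$ alone. It shows $\bigcup_{J}\Lang{\mathcal{A}^J}=\wm\pdwAlph$ by induction on depth, using only the ``replace each $K_i$ by $c_iw_ir_i$'' direction of \autoref{prop:vra-run-to-nfa-trans}.
\end{itemize}
You instead run a single induction carrying ``exactly one $J$'' and pass through the canonical abstraction $\hat w$. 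This forces you to use both halves of the hypothesis at once: covering to define $\hat w$, and disjointness for the ``only if'' direction of your lemma. In exchange you get the stronger statement $\langle q,\varepsilon\rangle\xrightarrow{w}\langle p,\varepsilon\rangle \iff q\xrightarrow{\hat w}p$ for arbitrary states. This is exactly the fact the paper later uses informally in the correctness proof of complementation, where it relies on the uniqueness of the word $w'$ attached to $w$. The paper's route buys modularity instead.

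One part of your stated invariant is false and should be deleted: the claim that for each $w$ and $q$ there is exactly one $p$ with $\langle q,\varepsilon\rangle\xrightarrow{w}\langle p,\varepsilon\rangle$. The proposition assumes the FAs are complete, not deterministic. Suppose an initial state of $\mathcal{A}^S$ has two distinct $a$-successors with $a\in\intAlpha$. Nothing in the hypotheses is violated, yet $w=a$ has two targets. You never prove or use this claim: your inductive step only establishes uniqueness of $J$, and that is all the ``only if'' direction needs. So the proof stands once it is removed. Uniqueness of the target does hold when all FAs are complete DFAs, as in the VRA produced by Theorem~\ref{th:codet-complete-vra}.
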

\begin{proof}[Sketch]
     By \autoref{prop:vra-run-to-nfa-trans}, since all regular languages $\regLang{\mathcal{A}^J}$,  $J\in \procAlpha^{\langle c,r \rangle}$, are pairwise disjoint, we can see that the \recursiveLanguage s are pairwise disjoint too, i.e., $\mathcal{A}$ is \codeterministic. Additionally, with \autoref{prop:vra-run-to-nfa-trans} again, as the union of all $\regLang{\mathcal{A}^J}$'s is equal to $(\intAlpha\cup \procAlpha)^*$, we can check that the union of the \recursiveLanguage s is $\wm\pdwAlph$. Since all FAs of $\mathcal{A}$ are complete by hypothesis, it follows that $\mathcal{A}$ is \complete. The formal proof is given in \autoref{ax:converse-codet-on-reg-lang}.
    \qed 
\end{proof}

We now show that given any VRA, we can construct an equivalent \codeterministic\ \complete\ VRA with an exponential size in the size of the input VRA.
\begin{restatable}[Power of Codeterministic Complete VRAs]{theorem}{codetcompvra}\label{th:codet-complete-vra}
    Given a VRA $\mathcal{A}$, one can construct an equivalent \codeterministic\ \complete\ VRA $\mathcal{B}$ such that $|\mathcal{B}| = 2^{\mathcal{O}(|\mathcal{A}|)}$.
    Moreover, the automata that compose the VRA $\mathcal{B}$ are all DFAs.
\end{restatable}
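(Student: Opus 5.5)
The plan is a subset construction that runs all automata of $\mathcal{A}$ simultaneously. It is combined with a refinement of the procedural alphabet, so that each new procedural symbol records exactly which old procedural symbols accept the called subword. Codeterminism and completeness will then come for free from \autoref{prop:comp-codet-on-reg-lang}, and the real work is a correctness invariant.

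\textbf{Construction.} The new procedural alphabet is $\procAlpha' = \{\langle c,r,K\rangle \mid c\in\calAlpha, r\in\retAlpha, K\subseteq \procAlpha^{\langle c,r\rangle}\}$, with $f'(\langle c,r,K\rangle)=\langle c,r\rangle$. The intended meaning of $\langle c,r,K\rangle$ is the set of words $w\in\wm\pdwAlph$ with $\{J\in\procAlpha^{\langle c,r\rangle}\mid w\in\Lang{\mathcal{A}^J}\}=K$.

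I would first build one complete DFA $\mathcal{D}$ over $\intAlpha\cup\procAlpha'$ as follows:
\begin{itemize}
\item its states are the subsets $P\subseteq\vraStates{A}$;
\item its initial state is $P_0=\bigcup_{J\in\procAlpha\cup\{S\}} I^J$;
\item on $a\in\intAlpha$ it moves from $P$ to $\{p\mid \exists q\in P,\ (q,a,p)\in\vraTrans{A}\}$;
\item on $\langle c,r,K\rangle$ it moves from $P$ to $\{p\mid \exists q\in P,\ J\in K,\ (q,J,p)\in\vraTrans{A}\}$.
\end{itemize}
Each automaton of $\mathcal{B}$ is a disjoint copy of $\mathcal{D}$ and differs only in its final states:
\begin{itemize}
\item $\mathcal{B}^{\langle c,r,K\rangle}$ has final states $\{P\mid \forall J\in\procAlpha^{\langle c,r\rangle}:\ (P\cap F^J\neq\varnothing \iff J\in K)\}$;
\item $\mathcal{B}^S$ has final states $\{P\mid P\cap F^S\neq\varnothing\}$.
\end{itemize}
For fixed $\langle c,r\rangle$, every $P$ is final in exactly one $\mathcal{B}^{\langle c,r,K\rangle}$, namely for $K=\{J\mid P\cap F^J\neq\varnothing\}$. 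Since all copies are the same complete DFA, their regular languages partition $(\intAlpha\cup\procAlpha')^*$. Hence \autoref{prop:comp-codet-on-reg-lang} gives that $\mathcal{B}$ is codeterministic and complete, and all its automata are DFAs.

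\textbf{Correctness invariant.} I would prove, by induction on $\depth{w}$ for $w\in\wm\pdwAlph$, the following statement. Let $P_w$ be the state reached by the run of $\mathcal{B}$ on $w$ inside any copy of $\mathcal{D}$ from its initial state. This run exists and is unique by codeterminism and completeness, together with \autoref{prop:vra-run-to-nfa-trans} applied to $\mathcal{B}$. The claim is that $P_w$ equals the set of states $p$ such that $\langle q,\varepsilon\rangle\xrightarrow{w}\langle p,\varepsilon\rangle\in\runsA{\mathcal{A}}$ for some $q\in P_0$.

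For the induction I would use the decomposition $w=u_0c_1w_1r_1\cdots c_nw_nr_nu_n$. Internal letters are immediate. For a block $c_iw_ir_i$, the induction hypothesis applied to $w_i$ identifies the unique $K_i$ such that $w_i\in\Lang{\mathcal{B}^{\langle c_i,r_i,K_i\rangle}}$: it is $K_i=\{J\mid P_{w_i}\cap F^J\neq\varnothing\}=\{J\mid w_i\in\Lang{\mathcal{A}^J}\}$. This uses that $P_0$ contains all initial states and that the state sets are disjoint, so restricting to the initial states of a single $\mathcal{A}^J$ is harmless: runs never leave $Q^J$ at stack level zero. Then \autoref{prop:vra-run-to-nfa-trans} for $\mathcal{A}$ shows that the $\langle c_i,r_i,K_i\rangle$-transition of $\mathcal{D}$ reproduces exactly the $\langle c_i,r_i\rangle$-summaries of $\mathcal{A}$. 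Applying the invariant at the top level with $F^S$ gives $\Lang{\mathcal{B}}=\Lang{\mathcal{A}}$.

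I expect this induction to be the main obstacle. The languages of $\mathcal{B}$ are defined mutually recursively, so one must carefully interleave the statement ``$w_i$ lies in exactly the $\mathcal{B}$-language indexed by its true $K_i$'' with the reachability statement, both at strictly smaller depth.

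\textbf{Size.} $\mathcal{D}$ has $2^{|\vraStates{A}|}$ states. There are at most $\sum_{\langle c,r\rangle}2^{|\procAlpha^{\langle c,r\rangle}|}$ new symbols, and I take this to be $2^{\mathcal{O}(|\mathcal{A}|)}$, assuming as usual that alphabet and procedural-symbol counts are bounded by $|\mathcal{A}|$. Transitions therefore number $2^{\mathcal{O}(|\mathcal{A}|)}$ per copy, and there are $2^{\mathcal{O}(|\mathcal{A}|)}$ copies, so $|\mathcal{B}|=2^{\mathcal{O}(|\mathcal{A}|)}$.
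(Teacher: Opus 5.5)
Your proposal is correct and is essentially the paper's proof. It uses the same ingredients: a powerset procedural alphabet indexed by membership patterns $K\subseteq\procAlpha^{\langle c,r\rangle}$, subset-construction DFAs whose copies differ only in their final states, codeterminism and completeness obtained from the regular-level partition criterion, and an induction on depth for correctness. The differences are minor. You use one global subset DFA over $\vraStates{A}$ for every component, whereas the paper takes, for each $\langle c,r\rangle$, only the product of the subset DFAs of the $\mathcal{A}^J$ with $J\in\procAlpha^{\langle c,r\rangle}$, and keeps $\mathcal{B}^S$ as the subset DFA of $\mathcal{A}^S$; yours is larger but still $2^{\mathcal{O}(|\mathcal{A}|)}$. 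Your induction tracks reachable summary sets rather than proving $\Lang{\mathcal{A}^J}=\bigcup_{\mathcal{J}\ni J}\Lang{\mathcal{B}^{\mathcal{J}}}$ directly.
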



\begin{proof}[Sketch] The complete proof is given in \autoref{ax:detail-th-codet}. Given a VRA $\mathcal{A}=\langle\pdwAlph,\procAlpha,\AutomataSet,\mathcal{A}^S\rangle$, we want to construct an equivalent VRA $\mathcal{B}=\langle\pdwAlph,\procAlpha',\AutomataSet',\mathcal{B}^{S}\rangle$ that is \codeterministic\ and \complete. Merging \Cref{def:codeterministic-vra,def:complete-vra}, $\mathcal{B}$ must respect the following conditions: all its automata must be complete FAs and, for all $\langle c,r\rangle\in \calAlpha\times \retAlpha$, the \recursiveLanguage s of all $\mathcal{B^J}$, with $\mathcal{J}\in \procAlpha'^{\langle c,r\rangle}$, must be a partition of $\wm\pdwAlph$. The main idea is the following. We define $\procAlpha'^{\langle c,r\rangle} = 2^{\procAlpha^{\langle c,r\rangle}}$, leading to the procedural alphabet of $\mathcal{B}$ equal to $\procAlpha' = \bigcup_{\langle c,r\rangle \in \calAlpha \times \retAlpha} \procAlpha'^{\langle c,r\rangle}$. Then, for each $\langle c,r \rangle$, we want to obtain, for all $\mathcal{J}\in \procAlpha'^{\langle c,r\rangle}$:
\begin{equation}\label{eq:lang-B^J}
    \Lang{\mathcal{B}^\mathcal{J}} =  \bigcap_{J \in \mathcal{J}} \Lang{\mathcal{A}^J} ~\setminus \bigcup_{J\in \overline{\mathcal{J}}} \Lang{\mathcal{A}^{J}}. 
\end{equation} 
Recall (see \autoref{subsec:prelim}) that $\overline{\mathcal{J}}=\procAlpha^{\langle c,r\rangle} \setminus \mathcal{J}$ and, when $\mathcal{J} = \varnothing$, $\bigcap_{J \in \mathcal{J}} \Lang{\mathcal{A}^J}=\wm\pdwAlph$. In this way, the \recursiveLanguage s of all $\mathcal{B^J}$, $\mathcal{J}\in \procAlpha'^{\langle c,r\rangle}$, form a partition of $\wm\pdwAlph$ (see \autoref{fig:venn-codet-complete}).
Note that for each $\langle c,r\rangle$, the set $\varnothing$ belongs to $\procAlpha'^{\langle c,r\rangle}$, each time corresponding to a distinct automaton.

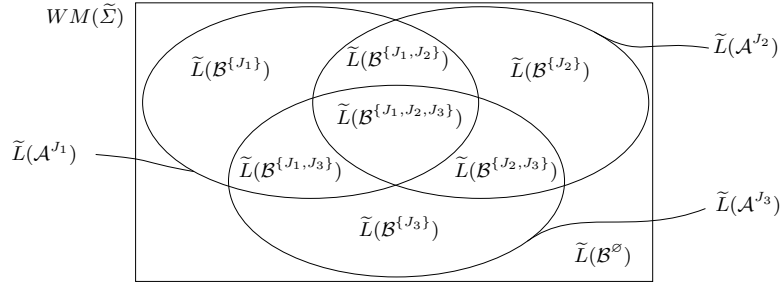
\begin{figure}[t]
    \centering
    \resizebox{.85\textwidth}{!}{
\begin{tikzpicture}[x=0.75pt,y=0.75pt,yscale=-1,xscale=1]

\draw    (189,6.13) -- (496,6.13) -- (496,172.8) -- (189,172.8) -- cycle ;
\draw    (294.16,65.8) .. controls (294.16,34.13) and (338.82,8.46) .. (393.91,8.46) .. controls (449,8.46) and (493.67,34.13) .. (493.67,65.8) .. controls (493.67,97.46) and (449,123.13) .. (393.91,123.13) .. controls (338.82,123.13) and (294.16,97.46) .. (294.16,65.8) -- cycle ;
\draw    (193.16,65.8) .. controls (193.16,34.13) and (237.82,8.46) .. (292.91,8.46) .. controls (348,8.46) and (392.67,34.13) .. (392.67,65.8) .. controls (392.67,97.46) and (348,123.13) .. (292.91,123.13) .. controls (237.82,123.13) and (193.16,97.46) .. (193.16,65.8) -- cycle ;
\draw    (244.16,112.8) .. controls (244.16,81.13) and (288.82,55.46) .. (343.91,55.46) .. controls (399,55.46) and (443.67,81.13) .. (443.67,112.8) .. controls (443.67,144.46) and (399,170.13) .. (343.91,170.13) .. controls (288.82,170.13) and (244.16,144.46) .. (244.16,112.8) -- cycle ;

\draw     (165.3,96.85) .. controls (203.6,100.25) and (188.4,100.85) .. (223.67,107.08) ;
\draw     (461.4,23.55) .. controls (490.33,40.21) and (491.33,27.55) .. (529.33,33.21) ;
\draw     (421.67,148.7) .. controls (450.67,127.21) and (484,146.21) .. (527.33,129.21) ;

\draw (409.99,37.45) node [anchor=north west][inner sep=0.75pt]  [font=\small] [align=left] {$\displaystyle \Lang{\mathcal{B}^{\{J_{2}\}}}$};
\draw (220.15,37.12) node [anchor=north west][inner sep=0.75pt]  [font=\small] [align=left] {$\displaystyle \Lang{\mathcal{B}^{\{J_{1}\}}}$};
\draw (320.31,132.51) node [anchor=north west][inner sep=0.75pt]  [font=\small] [align=left] {$\displaystyle \Lang{\mathcal{B}^{\{J_{3}\}}}$};
\draw (312.51,30.82) node [anchor=north west][inner sep=0.75pt]  [font=\small] [align=left] {$\displaystyle \Lang{\mathcal{B}^{\{J_{1} ,J_{2}\}}}$};
\draw (377,97) node [anchor=north west][inner sep=0.75pt]  [font=\small] [align=left] {$\displaystyle \Lang{\mathcal{B}^{\{J_{2} ,J_{3}\}}}$};
\draw (249,97) node [anchor=north west][inner sep=0.75pt]  [font=\small] [align=left] {$\displaystyle \Lang{\mathcal{B}^{\{J_{1} ,J_{3}\}}}$};
\draw (306.98,65.31) node [anchor=north west][inner sep=0.75pt]  [font=\small] [align=left] {$\displaystyle \Lang{\mathcal{B}^{\{J_{1} ,J_{2} ,J_{3}\}}}$};
\draw (448.19,148) node [anchor=north west][inner sep=0.75pt]  [font=\small] [align=left] {$\displaystyle \Lang{\mathcal{B}^{\varnothing}}$};
\draw (531.98,117.67) node [anchor=north west][inner sep=0.75pt]  [font=\small] [align=left] {$\displaystyle \Lang{\mathcal{A}^{J_{3}}}$};
\draw (134.72,5) node [anchor=north west][inner sep=0.75pt]  [font=\small] [align=left] {$\displaystyle WM( \pdwAlph )$};
\draw (113.98,87.4) node [anchor=north west][inner sep=0.75pt]  [font=\small] [align=left] {$\displaystyle \Lang{\mathcal{A}^{J_{1}}}$};
\draw (530.64,23) node [anchor=north west][inner sep=0.75pt]  [font=\small] [align=left] {$\displaystyle \Lang{\mathcal{A}^{J_{2}}}$};
\end{tikzpicture}
}
    \vspace{-2em}
    \caption{Set of \recursiveLanguage s of three automata $\mathcal{A}^{J_1}$, $\mathcal{A}^{J_2}$ and $\mathcal{A}^{J_3}$, and the corresponding set of languages of the automata $\mathcal{B^J}$, for all $\mathcal{J}\subseteq \{J_1,J_2,J_3\}$.}
    \label{fig:venn-codet-complete}
\vspace{-1em}
\end{figure}
    
    Before detailing the construction of each $\mathcal{B}^{\mathcal{J}}$, we transform each $\mathcal{A}^J\in \AutomataSet\cup \{\mathcal{A}^S\}$ into a complete DFA $\mathcal{A}'^J$ over the alphabet $\intAlpha\cup \procAlpha'$ as follows. 
    For all $J\in \procAlpha$, we replace each \proceduralTransition\ $q\xrightarrow{J}p\in\delta_\mathcal{A}$ by the transitions $q\xrightarrow{\mathcal{J}} p$, for all $\mathcal{J} \in \procAlpha'$ such that $\mathcal{J} \ni J$, and we then apply the subset construction to get a complete DFA~\cite{HU79}. This first step later helps obtain~\eqref{eq:lang-B^J}.
    With this construction, the \regularLanguage\ of $\mathcal{A}'^J$ is equal to the one of $\mathcal{A}^J$, up to the replacement of the \proceduralSymbol s appearing in the accepted words:

    \begin{restatable}{property}{langAprime}\label{pro:lang-A'^K}
        Let $J\in \procAlpha$, $n\in \mathbb{N}$, $u_i\in \intAlpha^*$ and $\mathcal{J}_i\in \procAlpha'$ for all $i$:
        \[ u_0\mathcal{J}_1\dots \mathcal{J}_nu_n \in \regLang{\mathcal{A}'^J} \iff \forall i\in [1,n], \exists J_i\in \mathcal{J}_i : u_0J_1\dots J_nu_n \in \regLang{\mathcal{A}^J}.\]
    \end{restatable}
    \begin{proof}[of \autoref{pro:lang-A'^K} - Sketch] 
        This  follows from the replacement of the \proceduralTransition s $q\xrightarrow{J}p$ by $q\xrightarrow{\mathcal{J}} p$, for all $\mathcal{J} \in \procAlpha'$ such that $J\in \mathcal{J}$. 
        More details are given in \autoref{ax:detail-th-codet}. 
          \hfill   $\lrcorner$ 
    \end{proof}

    Since all $\mathcal{A}'^J$, $J \in \procAlpha \cup \{S\}$,  are complete DFAs, they are closed under Boolean operations with well-known constructions~\cite{HU79,sipser1996introduction}. 
    For each $\langle c,r\rangle$, we can thus construct an automaton $\mathcal{B^J}$, $\mathcal{J} \in \procAlpha'^{\langle c,r\rangle}$,  such that its \regularLanguage\ respects a form similar to \eqref{eq:lang-B^J}:
    \begin{equation}\label{eq:reg-lang-B^J}
        \regLang{\mathcal{B^J}}=\bigcap_{J\in \mathcal{J}} \regLang{\mathcal{A}'^J} \setminus \bigcup_{J \in \overline{\mathcal{J}}} \regLang{\mathcal{A}'^{J}}=\bigcap_{J\in \mathcal{J}} \regLang{\mathcal{A}'^J} \cap \bigcap_{J \in \overline{\mathcal{J}}} \overline{\regLang{\mathcal{A}'^{J}}}.
    \end{equation}
    The formal construction of $\mathcal{B}^{\mathcal{J}}$ is described in \autoref{ax:detail-th-codet}, with an appropriate Cartesian product of the FAs $\mathcal{A}'^J$, with $J\in \procAlpha^{\langle c,r\rangle}$, in a way that $\mathcal{B^J}$ accepts the \regularLanguage\ of \eqref{eq:reg-lang-B^J}.

    Finally, 
    we construct the required \codeterministic\ and \complete\ VRA $\mathcal{B}=\langle\pdwAlph,\procAlpha',\AutomataSet',\mathcal{B}^S\rangle$ such that 
    $\AutomataSet'=\{\mathcal{B}^{\mathcal{J}}\mid \mathcal{J}\in \procAlpha'\}$ where each $\mathcal{B}^{\mathcal{J}}$ is obtained as described before, and $\mathcal{B}^S=\mathcal{A}'^S$.

    Let us prove that $\mathcal{B}$ is \codeterministic\ and \complete. For all $\langle c,r\rangle\in \calAlpha\times \retAlpha$, according to \eqref{eq:reg-lang-B^J}, the \regularLanguage s of all  DFAs $\mathcal{B^J}$, with $\mathcal{J}\in \procAlpha'^{\langle c,r\rangle}$, form a partition of $(\intAlpha\cup \procAlpha')^*$. By \autoref{prop:comp-codet-on-reg-lang}, since all automata are complete DFAs, it follows that $\mathcal{B}$ is \codeterministic\ and \complete. 

    We now prove that $\mathcal{B}$ accepts the same language as $\mathcal{A}$. We first prove the correctness of the \recursiveLanguage s $\Lang{\mathcal{B^J}}$ as exposed in \eqref{eq:lang-B^J}, which is a consequence of $\mathcal{B}$ being \codeterministic\ and \complete\ and the next property (see also \autoref{fig:venn-codet-complete}).
    
    \begin{restatable}{property}{langBJ}\label{pro:lang-B^J}
            For all $J \in \procAlpha$, $\Lang{\mathcal{A}^J} = \bigcup_{\mathcal{J} \in \procAlpha', \mathcal{J} \ni J} \Lang{\mathcal{B}^\mathcal{J}}$.
    \end{restatable}
    \begin{proof}[of \autoref{pro:lang-B^J} - Sketch] 
        This  follows from $\eqref{eq:reg-lang-B^J}$, \autoref{prop:vra-run-to-nfa-trans} and \autoref{pro:lang-A'^K}. More details of the proof are given in \autoref{ax:detail-th-codet}. \hfill   $\lrcorner$ 
    \end{proof}

    Finally, to show that $\mathcal{A}$ and $\mathcal{B}$ are equivalent,   we must prove that for all $w \in \wm\pdwAlph$, $w \in \Lang{\mathcal{A}^{S}} \Leftrightarrow w \in \Lang{\mathcal{B}^S}$.
    Suppose that $w = u_0c_1w_1r_1 \ldots c_nw_nr_nu_n \in \wm\pdwAlph$ with $n\in \mathbb{N}$, $u_i\in \intAlpha^*$, $c_i\in \calAlpha$, $r_i\in \retAlpha$, $w_i \in \wm\pdwAlph$ for all~$i$: 
    \begin{itemize}
        \item[$\Rightarrow$]
        If $w \in \Lang{\mathcal{A}^S}$, by \autoref{prop:vra-run-to-nfa-trans}, we have $u_0{J}_1\dots {J}_n u_n\in\regLang{\mathcal{A}^S}$ for some ${J}_i\in \procAlpha^{\langle c_i,r_i\rangle}$ such that $w_i\in \Lang{\mathcal{A}^{{J}_i}}$, for all $i\in[1,n]$. By \autoref{pro:lang-B^J}, for all $i$, as $w_i \in \Lang{\mathcal{A}^{J_i}}$, there exists $\mathcal{J}_i \ni J_i$ such that $w_i \in \Lang{\mathcal{B}^{\mathcal{J}_i}}$. Then by \autoref{pro:lang-A'^K}, we have that $u_0\mathcal{J}_1 \ldots \mathcal{J}_nu_n \in \regLang{\mathcal{A}'^{S}}$. By \autoref{prop:vra-run-to-nfa-trans}, it follows that $w\in \Lang{\mathcal{A}'^{S}}= \Lang{\mathcal{B}^S}$.
        
        \item [$\Leftarrow$]
        The other implication is proved similarly.
    \end{itemize}

\medskip
To complete the proof, it remains to study the size of $\mathcal{B}$. The {number of states}  of $\mathcal{B}^S$ is equal to $2^{|Q^S|}$. 
For each $\langle c,r\rangle\in \calAlpha\times \retAlpha$, there are $2^{|\procAlpha^{\langle c,r\rangle}|}$ automata $\mathcal{B}^\mathcal{J}$, each with a {number of states}  $\prod_{J\in \procAlpha ^{\langle c,r\rangle}} 2 ^{|Q^J|} = 2^{\sum|Q^J|}$. Hence,
\begin{center}
    $|\vraStates{B}| = 2 ^{|Q^S|} + \sum_{\langle c,r\rangle \in \calAlpha \times \retAlpha}2^{|\procAlpha^{\langle c,r\rangle}|}\cdot 2^{\sum_{J \in \procAlpha^{\langle c,r\rangle}}|Q^J|} = 2^{\mathcal{O}({|\vraStates{A}|})}$.
\end{center} 
Since the number of transitions $|\vraTrans{B}|$ of $\mathcal{B}$ is in $\mathcal{O}(|\vraStates{B}|^2\cdot |\procAlpha'|)= 2^{\mathcal{O}({|\mathcal{A}|})}$ ($|\intAlpha|$ is supposed constant and $|\procAlpha'|\leq |\vraStates{B}|$), we conclude that $|\mathcal{B}|= 2^{\mathcal{O}(|\mathcal{A}|)}$.
\qed 



\end{proof}

It is natural to study the  problem of whether a given VRA is \codeterministic\ or \complete. We show in \autoref{ax:codet-comp-dp} that the first problem is PTIME-complete while the second is EXPTIME-complete. 

\section{Closure Properties and Decision Problems}
\label{sec:ClosureProp}

\subsection{Closure Properties of VRAs}

As VRAs and VPAs are interreducible (\autoref{th:vpa-vra}), VRAs inherit the same closure properties as VPAs, which are closed under concatenation, Kleene-$*$, and Boolean operations~\cite{alur2004vpl}. 
Although correct, translating VRAs into VPAs in order to perform these closure operations 
leads to a polynomially larger automaton.
We therefore provide direct constructions over VRAs, yielding automata
with sizes as exposed in \autoref{tab:vra-vpa}.

\begin{restatable}[Closure properties]{theorem}{closure}\label{th:closure}
Let $\mathcal{A}_1$ and $\mathcal{A}_2$ be VRAs with $\Lang{\mathcal{A}_1} = L_1$ and $\Lang{\mathcal{A}_2} = L_2$. 
One can construct a VRA accepting 
$L_1 \cdot L_2$, $L_1^*$, $L_1 \cup L_2$, $L_1 \cap L_2$, and $\overline{L_1}$ 
with respective sizes in 
$\mathcal{O}(|\mathcal{A}_1|+|\mathcal{A}_2|)$,
$\mathcal{O}(|\mathcal{A}_1|)$,
$\mathcal{O}(|\mathcal{A}_1|+|\mathcal{A}_2|)$,
$\mathcal{O}(|\mathcal{A}_1|\cdot |\mathcal{A}_2|)$, and
$2^{\mathcal{O}(|\mathcal{A}_1|)}$.
\end{restatable}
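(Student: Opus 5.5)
We prove each closure property by a direct construction on VRAs, modifying only the starting automaton and, where necessary, the procedural alphabet. Throughout we rely on \autoref{prop:vra-run-to-nfa-trans} and the decomposition $w=u_0c_1w_1r_1\dots c_nw_nr_nu_n$. This decomposition lets us reduce statements about the recursive language of a starting automaton to statements about its regular language: $w\in\Lang{\mathcal{A}^S}$ iff some $u_0J_1\dots J_nu_n\in\regLang{\mathcal{A}^S}$ with $J_i\in\procAlpha^{\langle c_i,r_i\rangle}$ and $w_i\in\Lang{\mathcal{A}^{J_i}}$.

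\textbf{Concatenation, Kleene-$*$, union.} First rename the procedural symbols and states of $\mathcal{A}_1$ and $\mathcal{A}_2$ so that they are disjoint. Take $\procAlpha=\procAlpha_1\cup\procAlpha_2$ with the union of the linking functions, and $\AutomataSet=\AutomataSet_1\cup\AutomataSet_2$.

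For the new starting automaton $\mathcal{C}^S$, apply the classical FA constructions to the regular languages: concatenation via $\varepsilon$-free linking of $\mathcal{A}_1^S$ and $\mathcal{A}_2^S$, union via disjoint union, and Kleene-$*$ via the standard construction with a fresh initial and final state. In each case $\mathcal{C}^S$ keeps $\mathcal{A}_1^S$ and $\mathcal{A}_2^S$ as its states, copying transitions out of final states to initial ones.

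Correctness follows because the called automata are unchanged, so by \autoref{prop:vra-run-to-nfa-trans} every procedural symbol still stands for the same recursive language. Together with the decomposition above, this gives $\Lang{\mathcal{C}^S}=\{$words obtained from $\regLang{\mathcal{C}^S}$ by substituting $J\mapsto c\,\Lang{\mathcal{A}^J}\,r\}$. This substitution commutes with concatenation, union, and star, since a well-matched word factors uniquely at depth $0$. The sizes are clearly linear.

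\textbf{Intersection.} Build a product VRA whose procedural alphabet is $\{(J_1,J_2)\in\procAlpha_1\times\procAlpha_2\mid f_1(J_1)=f_2(J_2)\}$, with $f(J_1,J_2)=f_1(J_1)$. Its automata are the products $\mathcal{A}_1^{J_1}\times\mathcal{A}_2^{J_2}$, together with the starting automaton $\mathcal{A}_1^S\times\mathcal{A}_2^S$. Internal transitions are synchronized, and $(q_1,q_2)\xrightarrow{(J_1,J_2)}(p_1,p_2)$ exists iff $q_i\xrightarrow{J_i}p_i$ for $i=1,2$.

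By induction on depth, using \autoref{prop:vra-run-to-nfa-trans}, we show $\Lang{\mathcal{C}^{(J_1,J_2)}}=\Lang{\mathcal{A}_1^{J_1}}\cap\Lang{\mathcal{A}_2^{J_2}}$, and likewise for the starting automata. The base case ($w\in\intAlpha^*$) is the regular product. In the inductive step, a factor $c_iw_ir_i$ is matched by some $J_i^1$ in $\mathcal{A}_1$ and some $J_i^2$ in $\mathcal{A}_2$, both linked to $\langle c_i,r_i\rangle$. Hence $(J_i^1,J_i^2)$ is a valid product symbol and $w_i$ lies in its language by the induction hypothesis.

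The size bound is the delicate point, since there is one product automaton per pair. However, $\sum_{J_1,J_2}|\mathcal{A}_1^{J_1}|\cdot|\mathcal{A}_2^{J_2}|\le|\mathcal{A}_1|\cdot|\mathcal{A}_2|$, and the number of procedural transitions is at most the product of the numbers of procedural transitions. This gives $\mathcal{O}(|\mathcal{A}_1|\cdot|\mathcal{A}_2|)$.

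\textbf{Complementation.} This is the main obstacle, and it is exactly why \autoref{th:codet-complete-vra} was proved. Apply that theorem to get an equivalent codeterministic complete VRA $\mathcal{B}$ of size $2^{\mathcal{O}(|\mathcal{A}_1|)}$ whose automata are DFAs. Let $\mathcal{C}$ be $\mathcal{B}$ with the final states of $\mathcal{B}^S$ complemented; its size is unchanged.

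We then claim that for every $w\in\wm\pdwAlph$ there is exactly one word $u_0\mathcal{J}_1\dots\mathcal{J}_nu_n$ over $\intAlpha\cup\procAlpha'$ with $w_i\in\Lang{\mathcal{B}^{\mathcal{J}_i}}$ and $\mathcal{J}_i\in\procAlpha'^{\langle c_i,r_i\rangle}$. Existence follows from completeness, since the languages $\Lang{\mathcal{B}^{\mathcal{J}}}$ with $\mathcal{J}\in\procAlpha'^{\langle c_i,r_i\rangle}$ cover $\wm\pdwAlph$. Uniqueness follows from codeterminism, since these languages are pairwise disjoint.

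Because $\mathcal{B}^S$ is a complete DFA, by \autoref{prop:vra-run-to-nfa-trans} we get $w\in\Lang{\mathcal{B}^S}$ iff this unique word lies in $\regLang{\mathcal{B}^S}$. Flipping the final states therefore yields $\Lang{\mathcal{C}}=\wm\pdwAlph\setminus L_1$. Only the starting automaton is modified, so the called automata and their recursive languages stay the same, which is what makes the argument go through.
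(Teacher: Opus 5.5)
Your proposal is correct and follows the paper's proof essentially step by step: only the starting automaton changes for concatenation, Kleene-$*$ and union; intersection uses the products $\mathcal{A}_1^{J_1}\times\mathcal{A}_2^{J_2}$ over pairs linked to the same $\langle c,r\rangle$, proved by induction on depth; and complementation applies \autoref{th:codet-complete-vra} and then flips the final states of the complete DFA $\mathcal{B}^S$, using the fact that codeterminism and completeness make the abstraction $u_0\mathcal{J}_1\dots\mathcal{J}_nu_n$ of $w$ unique. One small point concerns the $\varepsilon$-free concatenation and star: there one copies the transitions leaving the initial states onto the final states (your wording reverses this), and the linear size bound needs a prior normalization as in \autoref{const:nfa-to-constant-nfa} (one initial state, at most two final states), a detail the paper also leaves to standard FA constructions.
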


\begin{proof}[Sketch] 
    We only provide intuition of the constructions. Formal explanations for each operation are given in \autoref{ax:closure-properties}. Let $\mathcal{A}_1=\langle\pdwAlph,\Sigma_{\proc 1},\AutomataSet_1,\mathcal{A}_1^S\rangle$ and $\mathcal{A}_2=\langle\pdwAlph,\Sigma_{\proc 2},\AutomataSet_2,\mathcal{A}_2^S\rangle$. We assume, without loss of generality, that their input alphabets are the same, and that their \proceduralAlphabet s are disjoint.

    In case of concatenation, Kleene-$*$ and union operations, the constructions are easy, as they only involve the starting automaton. For instance, to obtain a VRA $\mathcal{B}$ accepting $\Lang{\mathcal{B}}= L_1 \cdot L_2$, we simply copy all automata of $\AutomataSet_1$ and $\AutomataSet_2$ and define the starting automaton $\mathcal{B}^S$ such that it accepts the \regularLanguage\ $\regLang{\mathcal{B}^S}= \regLang{\mathcal{A}_1^S} \cdot \regLang{\mathcal{A}_2^S}$, using the concatenation construction for FAs \cite{HU79}. 

    For the intersection, 
    we need to compute the intersection of each pair of \recursiveLanguage s $\Lang{\mathcal{A}^{J_1}_1}$ and $\Lang{\mathcal{A}^{J_2}_2}$ of $\mathcal{A}_1$ and $\mathcal{A}_2$.   
    Intuitively, we define the new \proceduralSymbol s $\langle J_1,J_2 \rangle\in \Sigma_{\proc 1}\times \Sigma_{\proc 2}$ and we replace all transitions $q_1\xrightarrow{J_1}p_1 \in \vraTrans{A_\text{1}}$ and $q_2\xrightarrow{J_2}p_2 \in \vraTrans{A_\text{2}}$, respectively by $q_1\xrightarrow{\langle J_1,J_2 \rangle}p_1$ and $q_2\xrightarrow{\langle J_1,J_2 \rangle}p_2$. We then construct $\mathcal{B}^{\langle J_1,J_2 \rangle}$ equal to the Cartesian product of $\mathcal{A}^{J_1}_1$ and $\mathcal{A}^{J_2}_2$, such that $\Lang{\mathcal{B}^{\langle J_1,J_2 \rangle}}=\Lang{\mathcal{A}^{J_1}_1}\cap \Lang{\mathcal{A}^{J_2}_2}$. As we must construct a Cartesian product for all pairs of procedural symbols, it follows that $|\mathcal{B}|=\mathcal{O}(|\mathcal{A}_1|\cdot |\mathcal{A}_2|)$. 
    
    Lastly, for the complementation, if $\mathcal{A}_1$ is \codeterministic\ and \complete, with all its automata being complete DFAs, then we construct $\mathcal{B}=\langle\pdwAlph,  \Sigma_{\proc 1}, \AutomataSet_1, \mathcal{B}^S\rangle$, with $\mathcal{B}^{S}$ accepting the \regularLanguage\ $\regLang{\mathcal{B}^S}=\overline{\regLang{\mathcal{A}_1^S}}$ (i.e., final states of $\mathcal{B}^S$ are the non final states of $\mathcal{A}_1^S$ \cite{HU79}). If not, we first apply the construction of \autoref{th:codet-complete-vra}, and then apply the previous construction. Thus, $|\mathcal{B}|=2^{\mathcal{O}(|\mathcal{A}_1|)}$.    
    \qed
    
\end{proof}

\subsection{Decision Problems for VRAs}

We here study the complexity of the emptiness, universality, inclusion, and equivalence decision problems. 
They belong to the same complexity class as for VPAs~\cite{alur2004vpl,lange2011p}, since \autoref{th:vpa-vra} states that VRAs and VPAs are equivalent under a logspace reduction.
However, using direct algorithms without translations into equivalent VPAs yields lower upper-bounds, as summarized in \autoref{tab:vra-vpa}.

\begin{restatable}[Decision problems for VRAs]{theorem}{emptinessdp}\label{th:decision-problem}
    Let $\mathcal{A}_1, \mathcal{A}_2$ be  two VRAs. The emptiness decision problem is PTIME-complete, with an upper-bound time complexity in $\mathcal{O}(|\mathcal{A}_1|)$. The universality, inclusion, and equivalence decision problems are EXPTIME-complete, with an upper-bound time complexity respectively in $2^{\mathcal{O}(|\mathcal{A}_1|)}$,  $\mathcal{O}(|\mathcal{A}_1|) \cdot 2^{\mathcal{O}(|\mathcal{A}_2|)}$, and $2^{\mathcal{O}(|\mathcal{A}_1| + |\mathcal{A}_2|)}$.
\end{restatable}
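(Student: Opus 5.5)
For emptiness, I would compute the set $N\subseteq \procAlpha\cup\{S\}$ of procedural symbols whose recursive language is nonempty, as a least fixpoint. By \autoref{prop:vra-run-to-nfa-trans} and the decomposition $w=u_0c_1w_1r_1\dots c_nw_nr_nu_n$, we have $\Lang{\mathcal{A}^J}\neq\varnothing$ iff some final state of $\mathcal{A}^J$ is reachable from an initial one using internal transitions and procedural transitions labelled by symbols already in $N$. The emptiness of $\Lang{\mathcal{A}}$ is then the question whether $S\in N$.

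To achieve $\mathcal{O}(|\mathcal{A}_1|)$ time rather than quadratic, I would run this as a single Horn-satisfiability-style propagation, in the spirit of linear-time emptiness for context-free grammars. A state $q\in Q^J$ is marked once it is reachable in $\mathcal{A}^J$ through enabled transitions. A procedural transition $q\xrightarrow{K}p$ is enabled once $K\in N$. A symbol $K$ enters $N$ when some final state of $\mathcal{A}^K$ is marked. For each $K$, I keep a list of the pending transitions labelled $K$ whose source is marked. When $K$ enters $N$, this list is flushed once, marking the targets. Each state and each transition is then processed $\mathcal{O}(1)$ times. Correctness (soundness and completeness of the marking) follows by induction on $\depth{w}$, using \autoref{prop:vra-run-to-nfa-trans}.

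For PTIME-hardness, I would reduce from the emptiness of context-free grammars or from monotone circuit value. A gate becomes a procedural symbol. An OR gate's automaton has one path per input. An AND gate's automaton has a chain reading the input symbols in sequence. Inputs that are true get automata accepting $\varepsilon$. This is clearly logspace. Alternatively, I would invoke \autoref{th:vpa-vra} together with the known PTIME-hardness of VPA emptiness.

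For universality, I would use the complementation of \autoref{th:closure}, whose output has size $2^{\mathcal{O}(|\mathcal{A}_1|)}$, followed by the linear emptiness test. For inclusion, I would check $\Lang{\mathcal{A}_1}\cap\overline{\Lang{\mathcal{A}_2}}=\varnothing$. The intersection product of \autoref{th:closure} has size $\mathcal{O}(|\mathcal{A}_1|\cdot 2^{\mathcal{O}(|\mathcal{A}_2|)})$, and emptiness is linear in it, which gives the stated bound. Equivalence consists of two inclusions, giving $2^{\mathcal{O}(|\mathcal{A}_1|+|\mathcal{A}_2|)}$.

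The point needing care is that the intersection construction is linear in each factor. The product in \autoref{th:closure} has size $\mathcal{O}(|\mathcal{A}_1|\cdot|\mathcal{B}|)$ as a whole, so the exponential stays on $\mathcal{A}_2$ alone. For EXPTIME-hardness of all three problems, I would transfer the known EXPTIME-hardness of VPA universality through the logspace translation VPA$\to$VRA of \autoref{th:vpa-vra}. Universality reduces to inclusion, taking for $\mathcal{A}_1$ a fixed small VRA accepting $\wm\pdwAlph$, and to equivalence in the same way.

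The main obstacle I expect is the linear-time emptiness algorithm. The naive fixpoint iteration is quadratic, so the worklist bookkeeping and its amortized analysis need careful statement.
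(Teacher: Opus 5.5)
Your proposal is correct and follows essentially the same route as the paper. Emptiness is the paper's reachability fixpoint, implemented as a worklist that delays procedural transitions (the \emph{Later} map of the paper's algorithm). The exponential bounds come from complementation and the $\mathcal{O}(|\mathcal{A}_1|\cdot|\mathcal{A}_2|)$ intersection of \autoref{th:closure}, followed by linear-time emptiness. Hardness is transferred from VPAs through the logspace translation of \autoref{th:vpa-vra}.

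Your direct monotone-circuit reduction is a valid extra. One small precision: soundness of the marking (marked implies nonempty) goes by induction on the order of marking, not on depth; the paper does this with its induction on $i$.
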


\begin{proof}[Sketch]
    The complexity class was discussed above. We here propose an algorithm that solves the emptiness decision problem for VRAs. The other decision problems are solved with classical methods (see \autoref{ax:emptiness-dp}).
    
    The main idea to solve the emptiness problem is to progressively compute the set of automata in $\AutomataSet \cup \{\mathcal{A}^S\}$ whose languages are not empty, using a reachability algorithm. Initially, the  algorithm starts from all initial states and is limited to the internal transitions. When a final state of an automaton $\mathcal{A}^J \in \AutomataSet$ is reached, this automaton is marked as having a nonempty \recursiveLanguage, and the algorithm is then allowed to take \proceduralTransition s reading the symbol $J$.

    The algorithm updates a set $\mathit{Reach}_i \subseteq \vraStates{A}$, for $i\in \mathbb{N}$, which contains the states marked as reachable from an initial state. It also uses a set $\mathcal{J}_i \subseteq \procAlpha$ which contains symbols $J\in \procAlpha$ such that $\mathit{Reach}_i$ contains a final state of $\mathcal{A}^J$:
    \begin{itemize}
        \item \textbf{Initialization:} $\mathit{Reach}_0 =  \bigcup_{J\in\procAlpha\cup\{S\}} I^J$;
        \item \textbf{Main loop:}  Let $\mathcal{J}_i = \{J \in \procAlpha \mid F^J \cap \mathit{Reach}_i \neq \varnothing \}$, then $\mathit{Reach}_{i+1} = \mathit{Reach}_i \cup \{p \in \vraStates{A}\mid \exists q\in \mathit{Reach}_i,a\in \intAlpha \cup \mathcal{J}_i : q\xrightarrow{a}p\in \vraTrans{A} \}$; 
        \item \textbf{Output:} When $\textit{Reach}_{i+1}=\textit{Reach}_i$, $\Lang{\mathcal{A}} = \varnothing$ iff $F^S\cap \textit{Reach}_i = \varnothing$. 
    \end{itemize}

    \noindent The correctness and complexity of the algorithm are detailed in \autoref{ax:emptiness-dp}.
    \qed
\end{proof}

\section{Conclusion}

We studied an extension of a class of modular automata proposed in~\cite{frohme2021spa}: visibly recursive automata, in which the modules are FAs. We showed that they are equivalent to VPAs, and provided complexity results about the classical language operations and automata decision problems. In line with this paper, we intend to
study the problem of deciding 
the determinization problem for VRAs.

Our future main goal is the design of a learning algorithm for VRAs (in the Angluin's sense~\cite{Angluin87}), 
to obtain readable 
automata instead of VPAs. This paper is the first step in that direction. The second step is to study the existence of a canonical VRA 
as Angluin's algorithm requires such a canonical model. We believe that \codeterministic\ and \complete\ VRAs are a promising avenue for this task. 
Once the VRA learning algorithm is designed, we intend to continue our search for an effective and scalable inference algorithm for JSON schemas to enable efficient
validation of JSON documents as initiated in \cite{Dubrulle2025}.

\bibliographystyle{splncs04}
\bibliography{bibliography}

\appendix

\clearpage 
\section{Basic Properties of VRAs}\phantomsection\label{ax:vra-prop}

In this appendix, we first highlight a basic property of VRAs that can help to better understand the behavior of a stack word when reading a well-matched word. We then provide a  proof of \autoref{prop:vra-run-to-nfa-trans}.

\begin{lemma} \label{lem:stack-equivalence-vra}
    Let $\mathcal{A}= \langle\pdwAlph,\procAlpha,\AutomataSet,\mathcal{A}^S\rangle$ be a VRA, $w\in \wm{\pdwAlph}$ and $q_1,q_2 \in \vraStates{A}$. 
    If $\langle q_1,\sigma_1 \rangle \xrightarrow{w}\langle q_2,\sigma_2 \rangle$ is a recursive run of $\mathcal{A}$, then $\sigma_2 = \sigma_1$ and $q_1,q_2 \in Q^J$ for some $J \in \procAlpha \cup \{S\}$.
    Moreover, $\langle q_1,\sigma \rangle \xrightarrow{w}\langle q_2,\sigma \rangle$ is also a recursive run of $\mathcal{A}$, for any $\sigma \in \vraStates{A}^*$.
\end{lemma}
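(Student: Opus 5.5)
We argue by structural induction on $w\in\wm\pdwAlph$, following the three clauses of \autoref{def:wm-words}, and prove all three claims at once. The three claims are: the stack is unchanged, the endpoints lie in a common $Q^J$, and the run can be replayed over an arbitrary stack word $\sigma$. One preliminary observation is used throughout. Every internal or \procedural\ transition $q\xrightarrow{a}p\in\vraTrans{A}$ has both endpoints in the same automaton $\mathcal{A}^J$, because $\delta^J\subseteq Q^J\times(\intAlpha\cup\procAlpha)\times Q^J$ and the sets $Q^J$ are pairwise disjoint.

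\emph{Case $w\in\intAlpha^*$.} We induct on $|w|$. Each step reads an internal symbol, so it uses a transition of some $\delta^J$ and leaves the stack untouched. Hence $\sigma_2=\sigma_1$, and all visited states lie in one $Q^J$. The same transitions are enabled from any stack word, so replacing $\sigma_1$ by $\sigma$ in every configuration still gives a \recursiveRun.

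\emph{Case $w=c\,w'\,r$ with $w'\in\wm\pdwAlph$.} Let the run be $\langle q_1,\sigma_1\rangle\xrightarrow{c}\langle q',p\sigma_1\rangle\xrightarrow{w'}\langle q'',\sigma'\rangle\xrightarrow{r}\langle q_2,\sigma_2\rangle$. By the call semantics, there is $J$ with $f_{\call}(J)=c$, $q'\in I^J$, and $q_1\xrightarrow{J}p\in\vraTrans{A}$; by the observation, $p$ and $q_1$ lie in the same $Q^K$. The induction hypothesis applied to $w'$ gives $\sigma'=p\sigma_1$. The return step pops the top of $\sigma'$, so $q_2=p$ and $\sigma_2=\sigma_1$, and therefore $q_1,q_2\in Q^K$. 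For the replay claim, we rebuild the run from $\langle q_1,\sigma\rangle$. We push $p$ using the same transition. The induction hypothesis, applied to $w'$ with stack word $p\sigma$, replays the middle part. The return step needs only $q''\in F^{J'}$ for some $J'$ with $f_{\return}(J')=r$ and the top of the stack being $p$, and both conditions hold, so we reach $\langle p,\sigma\rangle$.

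\emph{Case $w=w_1w_2$.} The run splits at an intermediate configuration $\langle q,\sigma'\rangle$. Applying the induction hypothesis to each half gives $\sigma'=\sigma_1$ and $\sigma_2=\sigma'$. It also gives $q_1,q\in Q^J$ and $q,q_2\in Q^{J'}$, and disjointness of the state sets forces $J=J'$. The replay claim follows by concatenating the two replayed runs.

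The only delicate point is the call/return case. There we must check that the state pushed at the call is exactly the one popped at the return, which is what the induction hypothesis on $w'$ (stack preserved) guarantees. We must also make sure the return condition refers only to the current state and the top of the stack, and not to the deeper part $\sigma_1$. Everything else is bookkeeping.
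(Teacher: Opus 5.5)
Your proof is correct and follows essentially the same route as the paper: structural induction on the three clauses of the definition of well-matched words. In the call/return case you apply the induction hypothesis to the inner word, so the state $p$ pushed via $q_1\xrightarrow{J}p$ is exactly the one popped at the return, which gives $q_2=p$, $\sigma_2=\sigma_1$ and $q_1,q_2$ in a common $Q^K$; your extra explicitness in the internal and concatenation cases (e.g., using disjointness of the $Q^J$ to glue the two halves) fills in steps the paper calls trivial, rather than constituting a different argument.
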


\begin{proof}
    This  proof is by structural induction of well-matched words (see \autoref{def:wm-words}).
    \begin{itemize}
        \item $w\in \intAlpha^*$: This is trivial since a VRA behaves like an FA on internal symbols. 
        \item $w=cw'r$ ($c\in\calAlpha$, $r\in\retAlpha$, $w'\in\wm{\pdwAlph}$): The \recursiveRun\ on $w$ can be divided into:
        \[\langle q_1,\sigma_1 \rangle\xrightarrow{c}\langle q_c,\sigma_c \rangle\xrightarrow{w'}\langle q_r,\sigma_r \rangle\xrightarrow{r}\langle q_2,\sigma_2 \rangle.\]
        By structural induction, we have that $\sigma_c = \sigma_r$. Moreover, by definition of call and return transitions: 
        \begin{itemize}
            \item For $\langle q_r,\sigma_r \rangle\xrightarrow{r}\langle q_2,\sigma_2 \rangle$ to exist, $\sigma_r$ must satisfy $\sigma_r = q_2\cdot \sigma_2$;
            \item For $\langle q_1,\sigma_1 \rangle \xrightarrow{c} \langle q_c,q_2\cdot\sigma_2\rangle $ to exist, the symbol stacked by the call transition is $q_2$, and there must exist a \proceduralTransition\ $q_1\xrightarrow{K}q_2\in \delta^J\subseteq \vraTrans{A}$, for some $J\in\procAlpha\cup \{S\}$. Therefore, $\sigma_1 = \sigma_2$ and $q_1, q_2 \in Q^J$. 
        \end{itemize}
        \par We also get the existence of the \recursiveRun\ $\langle q_1,\sigma\rangle \xrightarrow{cw'r}\langle q_2,\sigma \rangle$, whatever $\sigma \in \vraStates{A}^*$, because what we have explained is independent of $\sigma_1$ and $\sigma_2$. 
        \item $w=w_1\cdot w_2$ ($w_1,w_2\in\wm\pdwAlph$): This is trivial by structural induction. \qed
    \end{itemize}
\end{proof}

We now prove \autoref{prop:vra-run-to-nfa-trans}.
\regRunRecRun*
\begin{proof}
    We prove both implications separately.
    \begin{itemize}
        \item[$\Rightarrow$] Thanks to \autoref{lem:stack-equivalence-vra}, the \recursiveRun\ on $cwr$ can be decomposed as: 
        \[ \langle q,\varepsilon \rangle \xrightarrow{c} \langle q_1,\sigma \rangle \xrightarrow{w} \langle q_2,\sigma \rangle \xrightarrow{r} \langle p,\varepsilon \rangle.\]
        with $q_1,q_2\in Q^J$ for some $J\in\procAlpha$, and $\sigma \in \vraStates{A}^*$ a stack word. Moreover, by semantics of VRAs, 
        we must have $\sigma=p$, $q_1 \in I^J$, $q_2 \in F^J$, $q\xrightarrow{J}p \in \vraTrans{A}$, and $f(J)=\langle c,r\rangle$. Finally, since $q_1 \in I^J$, $q_2 \in F^J$, and $\langle q_1,\varepsilon \rangle\xrightarrow{w} \langle q_2,\varepsilon \rangle \in \runsA{\mathcal{A}}$ by \autoref{lem:stack-equivalence-vra}, we have that $w\in \Lang{\mathcal{A}^J}$. 
        
        \item[$\Leftarrow$] For all $w\in \Lang{\mathcal{A}^J}$, there exists $\langle q_i,\varepsilon \rangle \xrightarrow{w} \langle q_f,\varepsilon \rangle\in \runsA{\mathcal{A}}$ with $q_i \in I^J$ and $q_f\in F^J$. By \autoref{lem:stack-equivalence-vra}, $\langle q_i,\sigma \rangle \xrightarrow{w} \langle q_f,\sigma \rangle\in \runsA{\mathcal{A}}$  for all  $\sigma\in \vraStates{A}^*$. By hypothesis, $q\xrightarrow{J}p\in\delta$ and $f(J)=\langle c,r\rangle$, thus, following the semantics of VRAs, $\langle q,\varepsilon \rangle \xrightarrow{c} \langle q_i,p\rangle \in \runsA{\mathcal{A}}$, and $\langle q_f,p\rangle \xrightarrow{r} \langle p,\varepsilon \rangle\in \runsA{\mathcal{A}}$ (since $q_i\in I^J$ and $q_f\in F^J$). Combining the \recursiveRun s leads to $\langle q,\varepsilon \rangle \xrightarrow{c} \langle q_i,p\rangle \xrightarrow{w} \langle q_f,p\rangle\xrightarrow{r} \langle p,\varepsilon \rangle\in \runsA{\mathcal{A}}$. \qed
    \end{itemize}
\end{proof}

\section{Proof of \autoref{prop:det-vra-are-not-general-vra}}\phantomsection \label{ax:det-vra-and-vra}
    \propDetVra* 

    \begin{proof}
        By contradiction, suppose that there exists a deterministic VRA $\mathcal{B}=\langle\pdwAlph,\procAlpha',\AutomataSet',\mathcal{B}^{S}\rangle$, with $\pdwAlph = \{a\} \cup \{c\} \cup \{r\}$, accepting $\Lang{\mathcal{B}}=\Lang{\mathcal{A}}$. Let $\mathcal{B}^S=\langle\intAlpha\cup\procAlpha', Q^S,I^S,F^S,\delta^S\rangle$. 
        
        It is easy to see that $ccarr,ccrar\in \Lang{\mathcal{A}}$. The idea is to show that, if $\mathcal{B}$ accepts both these words, then it must accept a word not in $\Lang{\mathcal{A}}$.
        In the VRA $\mathcal{B}$, according to the semantics of VRAs, the accepting \recursiveRun s of $ccarr$ and $ccrar$ should be: 
        \[
        \begin{aligned}
            &\langle q_0,\varepsilon \rangle \xrightarrow{c} 
        \langle q_1,q_5\rangle\xrightarrow{c} 
        \langle q_2,q_4\cdot q_5\rangle\xrightarrow{a}
        \langle q_3,q_4\cdot q_5\rangle \xrightarrow{r}
        \langle q_4,q_5\rangle\xrightarrow{r}
        \langle q_5,\varepsilon\rangle\in \runsA{\mathcal{B}}, \text{ and} \\
        &\langle q_0',\varepsilon \rangle \xrightarrow{c} 
        \langle q_1',q_5'\rangle\xrightarrow{c} 
        \langle q_2',q_3'\cdot q_5'\rangle\xrightarrow{r}
        \langle q_3',q_5'\rangle \xrightarrow{a}
        \langle q_4',q_5'\rangle\xrightarrow{r}
        \langle q_5',\varepsilon\rangle\in \runsA{\mathcal{B}}, 
        \end{aligned}
        \]
        for some $q_i,q_i' \in \vraStates{B}$,  $i\in [0,5]$, such that $q_0,q_0' \in I^S$ and $q_5,q_5' \in F^S$.

        As $\mathcal{B}$ is deterministic by hypothesis, $\mathcal{B}^S$ is a DFA and has a unique initial state, i.e., $q_0=q_0'$. Additionally, from the configuration $\langle q_0 ,\varepsilon \rangle$, the \recursiveRun\ on $cc$ must be unique, therefore $\langle q_2,q_4\cdot q_5\rangle = \langle q_2',q_3'\cdot q_5'\rangle$. This implies $q_4=q_3'$ and $q_5=q_5'$. Thus, we can deduce the existence of the following  \recursiveRun: 
        \[\langle q_0,\varepsilon \rangle \xrightarrow{cc} 
        \langle q_2,q_4\cdot q_5\rangle\xrightarrow{ar}
        \langle q_4,q_5\rangle \hspace{-.2em}= \hspace{-.2em}\langle q'_3,q'_5\rangle\xrightarrow{ar} 
        \langle q_5',\varepsilon\rangle \in 
        \runsA{\mathcal{B}}.\]
        Since this \recursiveRun\ is accepting, it follows that $ccarar \in \Lang{\mathcal{B}}$, which is a contradiction since $ccarar \notin \Lang{\mathcal{A}}$. \qed

    \end{proof}

    

    

\section{Systems of Procedural Automata}\phantomsection \label{ax:spa-vra}
Frohme and Steffen introduced in \cite{frohme2021spa} the model of \emph{system of procedural automata} (SPA). Up to a renaming of symbols, an SPA is equivalent to a VRA $\mathcal{A}=\langle \pdwAlph, \procAlpha, \AutomataSet, \mathcal{A}^S\rangle$ where all automata are DFAs, and each call symbol corresponds to a unique DFA, i.e., for all distinct $J,J'\in \procAlpha$, $f_{\call} (J) \neq f_{\call}(J')$. 
We show that the class of SPAs forms a strict subclass of the deterministic VRAs.
\begin{proposition}
Each SPA is a deterministic VRA, but there exist deterministic VRAs with no equivalent SPA. 
\end{proposition}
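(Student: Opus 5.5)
The first half is immediate from the definitions. An SPA is, by definition, a VRA whose automata are all DFAs and whose linking function has an injective $f_{\call}$. Determinism of a VRA requires two things. First, all automata in $\AutomataSet\cup\{\mathcal{A}^S\}$ must be DFAs, which holds by assumption. Second, whenever $(q,J,p),(q,J',p')\in\vraTrans{A}$ with $J\neq J'$, we need $f_{\call}(J)\neq f_{\call}(J')$. This holds for every pair of distinct procedural symbols, and in particular for those labelling transitions leaving the same state $q$.

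For strictness I will give a concrete deterministic VRA with no equivalent SPA. The key observation is that in an SPA each call symbol $c$ selects a unique automaton, so the language accepted inside a call opened by $c$ does not depend on the calling context. I take $\pdwAlph=\{a\}\cup\{c\}\cup\{r\}$ and $\procAlpha=\{R,T\}$ with $f(R)=f(T)=\langle c,r\rangle$. The starting automaton $\mathcal{A}^S$ is a DFA with states $s_0,s_1,s_2$, transitions $s_0\xrightarrow{R}s_1$ and $s_1\xrightarrow{T}s_2$, initial state $s_0$ and final state $s_2$. The automaton $\mathcal{A}^R$ is a DFA accepting only $a$, and $\mathcal{A}^T$ is a DFA accepting only $\varepsilon$. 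This VRA is deterministic, since no state has two outgoing procedural transitions. By \autoref{prop:vra-run-to-nfa-trans} its language is exactly $\{\,carcr\,\}$.

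Now suppose an SPA $\mathcal{B}$ (with linking function $g$) accepts $\{carcr\}$. Inside any call opened by $c$, the run is a run of the unique automaton $\mathcal{B}^K$ with $g_{\call}(K)=c$. Both words $carcr$ and $ccrar$... — I will instead aim directly at swapping the two well-matched factors. Decompose the accepting run on $carcr$ via \autoref{prop:vra-run-to-nfa-trans}: it yields a regular run of $\mathcal{B}^S$ on $K_1K_2$. Here $K_1$ and $K_2$ satisfy $g_{\call}(K_i)=c$ and $g_{\return}(K_i)=r$, with $a\in\Lang{\mathcal{B}^{K_1}}$ and $\varepsilon\in\Lang{\mathcal{B}^{K_2}}$. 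Since $f_{\call}$ is injective in an SPA, $K_1=K_2=K$. Then $\mathcal{B}^S$ accepts $KK$, and $\Lang{\mathcal{B}^K}$ contains both $a$ and $\varepsilon$. Applying the converse direction of \autoref{prop:vra-run-to-nfa-trans}, we may substitute either word for each occurrence of $K$. Hence $crcr\in\Lang{\mathcal{B}}$, which contradicts $\Lang{\mathcal{B}}=\{carcr\}$.

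The only delicate point is whether "up to a renaming of symbols" lets an SPA use a different pushdown alphabet. I will handle it by fixing the alphabet $\pdwAlph$, as the equivalence of languages demands. The contradiction then rests solely on the injectivity of $f_{\call}$ forcing $K_1=K_2$, which is the heart of the argument.
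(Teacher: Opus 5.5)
Your proof is correct and follows the paper's argument. The first half is by definition. For strictness you use essentially the same deterministic VRA (the paper's $\mathcal{A}^T$ also has an $R$-self-loop, which does not matter) and the same key step: injectivity of the call map in an SPA forces $K_1=K_2$ in the decomposition of $carcr$, and substituting the two accepted inner words then gives a word outside the language ($crcr$ for you, $carcar$ in the paper). Delete the abandoned sentence about $ccrar$, and write $g_{\call}$ rather than $f_{\call}$ for the SPA's linking function.
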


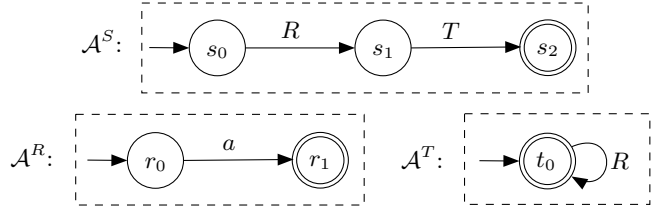
\begin{figure}[t]
    \centering
    \begin{tikzpicture}[x=0.75pt,y=0.75pt,yscale=-1,xscale=1]

\draw    (166.93,47.55) -- (184.9,47.72) ;
\draw [shift={(186.9,47.73)}, rotate = 180.52] \arrow  ;
\draw [dashed]  (163.2,25.15) -- (386,25.15) -- (386,70.35) -- (163.2,70.35) -- cycle ;
\draw  [dashed] (130.2,81.15) -- (274.2,81.15) -- (274.2,126.75) -- (130.2,126.75) -- cycle ;
\draw  (297.61,47.73) -- (350.61,47.25) ;
\draw [shift={(352.61,47.23)}, rotate = 179.48] \arrow  ;
\draw   (186.9,47.73) .. controls (186.9,40) and (193.17,33.73) .. (200.9,33.73) .. controls (208.63,33.73) and (214.9,40) .. (214.9,47.73) .. controls (214.9,55.47) and (208.63,61.73) .. (200.9,61.73) .. controls (193.17,61.73) and (186.9,55.47) .. (186.9,47.73) -- cycle ;
\draw   (351.9,47.63) .. controls (351.9,39.9) and (358.17,33.63) .. (365.9,33.63) .. controls (373.63,33.63) and (379.9,39.9) .. (379.9,47.63) .. controls (379.9,55.37) and (373.63,61.63) .. (365.9,61.63) .. controls (358.17,61.63) and (351.9,55.37) .. (351.9,47.63) -- cycle ;
\draw   (354.09,47.63) .. controls (354.09,41.11) and (359.38,35.82) .. (365.9,35.82) .. controls (372.42,35.82) and (377.71,41.11) .. (377.71,47.63) .. controls (377.71,54.16) and (372.42,59.45) .. (365.9,59.45) .. controls (359.38,59.45) and (354.09,54.16) .. (354.09,47.63) -- cycle ;
\draw    (214.9,47.73) -- (267.61,47.73) ;
\draw [shift={(269.61,47.73)}, rotate = 180] \arrow  ;
\draw   (269.61,47.73) .. controls (269.61,40) and (275.88,33.73) .. (283.61,33.73) .. controls (291.34,33.73) and (297.61,40) .. (297.61,47.73) .. controls (297.61,55.47) and (291.34,61.73) .. (283.61,61.73) .. controls (275.88,61.73) and (269.61,55.47) .. (269.61,47.73) -- cycle ;
\draw    (184.01,104.33) -- (237.01,103.85) ;
\draw [shift={(239.01,103.83)}, rotate = 179.48] \arrow  ;
\draw   (238.3,104.23) .. controls (238.3,96.5) and (244.57,90.23) .. (252.3,90.23) .. controls (260.03,90.23) and (266.3,96.5) .. (266.3,104.23) .. controls (266.3,111.97) and (260.03,118.23) .. (252.3,118.23) .. controls (244.57,118.23) and (238.3,111.97) .. (238.3,104.23) -- cycle ;
\draw   (240.49,104.23) .. controls (240.49,97.71) and (245.78,92.42) .. (252.3,92.42) .. controls (258.82,92.42) and (264.11,97.71) .. (264.11,104.23) .. controls (264.11,110.76) and (258.82,116.05) .. (252.3,116.05) .. controls (245.78,116.05) and (240.49,110.76) .. (240.49,104.23) -- cycle ;
\draw   (156.01,104.33) .. controls (156.01,96.6) and (162.28,90.33) .. (170.01,90.33) .. controls (177.74,90.33) and (184.01,96.6) .. (184.01,104.33) .. controls (184.01,112.07) and (177.74,118.33) .. (170.01,118.33) .. controls (162.28,118.33) and (156.01,112.07) .. (156.01,104.33) -- cycle ;
\draw    (136.05,104.15) -- (154.01,104.32) ;
\draw [shift={(156.01,104.33)}, rotate = 180.52] \arrow  ;
\draw   (351.7,104.63) .. controls (351.7,96.9) and (357.97,90.63) .. (365.7,90.63) .. controls (373.43,90.63) and (379.7,96.9) .. (379.7,104.63) .. controls (379.7,112.37) and (373.43,118.63) .. (365.7,118.63) .. controls (357.97,118.63) and (351.7,112.37) .. (351.7,104.63) -- cycle ;
\draw   (353.89,104.63) .. controls (353.89,98.11) and (359.18,92.82) .. (365.7,92.82) .. controls (372.22,92.82) and (377.51,98.11) .. (377.51,104.63) .. controls (377.51,111.16) and (372.22,116.45) .. (365.7,116.45) .. controls (359.18,116.45) and (353.89,111.16) .. (353.89,104.63) -- cycle ;
\draw    (331.73,104.45) -- (349.7,104.62) ;
\draw [shift={(351.7,104.63)}, rotate = 180.52] \arrow  ;
\draw    (377.93,97.38) .. controls (401.69,87.32) and (399.91,123.21) .. (379.26,113.23) ;
\draw [shift={(377.63,112.37)}, rotate = 29.82] \arrow  ;
\draw  [dashed] (324.29,80.83) -- (416.8,80.83) -- (416.8,126.43) -- (324.29,126.43) -- cycle ;

\draw (131.8,37.47) node [anchor=north west][inner sep=0.75pt]   [align=left] {$\displaystyle \mathcal{A}^{S}$:};
\draw (96.4,94.67) node [anchor=north west][inner sep=0.75pt]   [align=left] {$\displaystyle \mathcal{A}^{R}$:};
\draw (291.4,95.67) node [anchor=north west][inner sep=0.75pt]   [align=left] {$\displaystyle \mathcal{A}^{T}$:};
\draw (395.8,99.37) node [anchor=north west][inner sep=0.75pt]   [align=left] {$\displaystyle R$};
\draw (311.8,33.97) node [anchor=north west][inner sep=0.75pt]   [align=left] {$\displaystyle T$};
\draw (194,45.27) node [anchor=north west][inner sep=0.75pt]  [font=\small] [align=left] {$\displaystyle s_{0}$};
\draw (359,44.17) node [anchor=north west][inner sep=0.75pt]  [font=\small] [align=left] {$\displaystyle s_{2}$};
\draw (231.4,33.35) node [anchor=north west][inner sep=0.75pt]   [align=left] {$\displaystyle R$};
\draw (277,45.07) node [anchor=north west][inner sep=0.75pt]  [font=\small] [align=left] {$\displaystyle s_{1}$};
\draw (202.2,92.57) node [anchor=north west][inner sep=0.75pt]   [align=left] {$\displaystyle a$};
\draw (245.4,100.77) node [anchor=north west][inner sep=0.75pt]  [font=\small] [align=left] {$\displaystyle r_{1}$};
\draw (163.4,101.67) node [anchor=north west][inner sep=0.75pt]  [font=\small] [align=left] {$\displaystyle r_{0}$};
\draw (358.8,99.17) node [anchor=north west][inner sep=0.75pt]  [font=\small] [align=left] {$\displaystyle t_{0}$};

\end{tikzpicture}
    \vspace{-2em}
    \caption{A deterministic VRA with no equivalent SPA.}
    \label{fig:det-vra-ex}
        \vspace{-1em}
\end{figure}

\begin{proof}
    Recall that a deterministic VRA $\mathcal{A}= \langle\pdwAlph,\procAlpha,\AutomataSet,\mathcal{A}^S\rangle$ is a VRA such that, all its automata in $\AutomataSet \cup \{\mathcal{A}^S\}$ are DFAs, and for all states $q\in \vraStates{A}$ and distinct $J,J'\in \procAlpha$, if there exist $(q,J,p),(q,J',p')\in\vraTrans{A}$, then $f_{\call}(J)\neq f_{\call}(J')$. Therefore, by definition, an SPA is a deterministic VRA.

    We now prove 
    that the deterministic VRA $\mathcal{A}=\langle \pdwAlph, \procAlpha,\{\mathcal{A}^R,\mathcal{A}^T\},\mathcal{A}^S \rangle$,
    with $\pdwAlph = \{a\} \cup \{c\} \cup \{r\}$ and $\procAlpha = \{R,T\}$,
    depicted in \autoref{fig:det-vra-ex}, has no equivalent SPA using the same pushdown alphabet $\pdwAlph$.

    By contradiction, suppose that there exists an SPA $\mathcal{B}=\langle \pdwAlph, \procAlpha', \AutomataSet', \mathcal{B}^S\rangle$ accepting $\Lang{\mathcal{B}}=\Lang{\mathcal{A}}$. 
    It is easy to see that $carcr\in \Lang{\mathcal{A}}=\Lang{\mathcal{B}}$. By \autoref{prop:vra-run-to-nfa-trans}, we have $JJ'\in \regLang{\mathcal{B}^S}$ for some $J,J'\in \procAlpha'$ such that $f(J)=f(J')=\langle c,r\rangle$, $a\in\Lang{\mathcal{B}^J}$, and $\varepsilon \in \Lang{\mathcal{B}^{J'}}$. Since $f_{\call}(J)=f_{\call}(J')=c$ and $\mathcal{B}$ is an SPA, we deduce that $J=J'$. As $a \in \Lang{\mathcal{B}^J}$ and $JJ\in \regLang{\mathcal{B}^S}$, by \autoref{prop:vra-run-to-nfa-trans}, we deduce that $carcar\in \Lang{\mathcal{B}}$, which is a contradiction because $carcar\notin \Lang{\mathcal{A}}$.\qed
\end{proof}

\section{Equivalence with Visibly Pushdown Automata}\phantomsection\label{ax:vra-vpa}

\emph{Visibly pushdown automata} (VPAs) form a subclass of pushdown automata~\cite{alur2004vpl}. In this section, we show that VRAs and VPAs are  equivalent models. We first recall the definition of VPAs and their semantics.

\begin{definition}[Visibly pushdown automaton] \label{def:vpa}
A \emph{visibly pushdown automaton} is a tuple $\mathcal{A}=\langle\pdwAlph,\Gamma,Q,I,F,\delta\rangle$ where 
\begin{itemize}
\item $\pdwAlph = \intAlpha \cup \calAlpha \cup \retAlpha$ is a pushdown alphabet; \item $\Gamma$ is a \emph{stack alphabet}; 
\item $Q$ is a finite set of states; 
\item $I\subseteq Q$ is a set of initial states;
\item $F\subseteq Q$ is a set of final states; and 
\item $\delta$ is a set of transitions of the form $\delta = \intTrans \cup \callTrans \cup \retTrans$ where:
\begin{itemize}
    \item $\intTrans \subseteq Q \times \intAlpha \times Q$ is the set of \emph{internal} transitions;
    \item $\callTrans \subseteq Q \times \calAlpha \times Q \times \Gamma$ is the set of \emph{call} transitions;
    \item $\retTrans \subseteq Q \times \retAlpha \times \Gamma \times Q$ is the set of \emph{return} transitions.
\end{itemize}
\end{itemize}
The size of a VPA $\mathcal{A}$, denoted by $|\mathcal{A}|$, is $|Q|+|\delta|$. 
\end{definition}

Let $\mathcal{A}=\langle\pdwAlph,\Gamma,Q,I,F,\delta\rangle$ be a VPA. Transitions $(q,a,q')\in \intTrans$, $(q,a,q',\gamma)\in\callTrans$, and $(q,a,\gamma,q')\in\retTrans$ are respectively written $q\xrightarrow{a}q' \in \intTrans$, $q\xrightarrow{a/\gamma}q'\in \callTrans$, and $q\xrightarrow{a[\gamma]}q' \in \retTrans$. 
As for VRAs, the semantics of VPAs use configurations $\langle q,\sigma\rangle \in Q\times \Gamma^*$.
A \emph{stacked run} of $\mathcal{A}$ on $w=a_1\dots a_n \in \pdwAlph^*$ is a sequence $\langle q_0,\sigma_0 \rangle\xrightarrow{a_1}\langle q_1,\sigma_1 \rangle\xrightarrow{a_2} \dots \xrightarrow{a_n}\langle q_n,\sigma_n \rangle$, where for all $i \in [1,n]$ 
the following is respected: 
\begin{itemize}
    \item If $a_i \in \intAlpha$, there is a transition $q_{i-1}\xrightarrow{a_i}q_i \in \intTrans$ and $\sigma_i = \sigma_{i-1}$; 
    \item If $a_i \in \calAlpha$, there is a  transition $q_{i-1}\xrightarrow{a_i/\gamma}q_i \in \callTrans$ verifying $\sigma_i =\gamma \sigma_{i-1}$; 
    \item If $a_i \in \retAlpha$, there is a  transition $q_{i-1}\xrightarrow{a_i[\gamma]}q_i \in \retTrans$ verifying $\sigma_{i-1} =\gamma \sigma_{i}$.
\end{itemize}
The stacked run on $w$ can alternatively be written $\langle q_0,\sigma_0 \rangle\xrightarrow{w}\langle q_n,\sigma_n \rangle$. We denote by $\runsA{\mathcal{A}}$ the set of stacked runs of $\mathcal{A}$.
A word $w$ is accepted by $\mathcal{A}$ if there is an accepting  stacked run on $w$, i.e., that begins in an initial configuration $\langle q_o,\varepsilon \rangle$ with $q_o\in I$ and ends in a final configuration $\langle q_f,\varepsilon \rangle$ with $q_f\in F$. The language $\Lang{\mathcal{A}}$ of~$\mathcal{A}$ is the set of all words accepted by~$\mathcal{A}$:
\[ \Lang{\mathcal{A}} = \left\{ w \in \pdwAlph^* \mid \exists q_i \in I, q_f\in F, \langle q_i,\varepsilon \rangle\xrightarrow{w}\langle q_f,\varepsilon \rangle \in \runsA{\mathcal{A}}\right\}.\]
With this definition of $\Lang{\mathcal{A}}$, it is easy to verify that $\Lang{\mathcal{A}} \subseteq \wm\pdwAlph$.\footnote{The original definition of VPAs~\cite{alur2004vpl} also allows accepting ill-matched words. In this
paper, we focus on languages of well-matched words.}


A VPA $\mathcal{A}$ is \emph{deterministic} if $|I|=1$ and, for all $q\in Q$ and $a\in \intAlpha$ (resp.\ $c\in \calAlpha$, {$(r,\gamma)\in \retAlpha \times \Gamma$}), there exists at most one transition $q\xrightarrow{a}q'\in \intTrans$ (resp.\ $q \xrightarrow{c/\gamma}q'\in \callTrans$, $q\xrightarrow{r[\gamma]}q'\in \retTrans$). Any VPA $\mathcal{A}$ is equivalent to a deterministic VPA $\mathcal{B}$ with $|\mathcal{B}| = 2^{\mathcal{O}(|\mathcal{A}|^2)}$ \cite{alur2004vpl}, and this bound is tight \cite{martynova2024exact}.

\begin{figure}[t]
    \centering
    \begin{tikzpicture}[x=0.75pt,y=0.75pt,yscale=-1,xscale=1]

\draw   (270.06,72.38) .. controls (270.06,64.65) and (276.33,58.38) .. (284.06,58.38) .. controls (291.79,58.38) and (298.06,64.65) .. (298.06,72.38) .. controls (298.06,80.11) and (291.79,86.38) .. (284.06,86.38) .. controls (276.33,86.38) and (270.06,80.11) .. (270.06,72.38) -- cycle ;
\draw   (272.25,72.38) .. controls (272.25,65.85) and (277.53,60.57) .. (284.06,60.57) .. controls (290.58,60.57) and (295.87,65.85) .. (295.87,72.38) .. controls (295.87,78.9) and (290.58,84.19) .. (284.06,84.19) .. controls (277.53,84.19) and (272.25,78.9) .. (272.25,72.38) -- cycle ;
\draw    (248.09,72.2) -- (268.06,72.36) ;
\draw [shift={(270.06,72.38)}, rotate = 180.48] \arrow ;
\draw    (295.97,64.47) -- (370.4,64.75) ;
\draw [shift={(372.4,64.75)}, rotate = 180.21] \arrow ;
\draw    (370.5,79) -- (298.4,79.73) ;
\draw [shift={(296.4,79.75)}, rotate = 359.42] \arrow ;
\draw    (376.5,61) .. controls (366.75,37.11) and (402.63,40.32) .. (392.38,60.88) ;
\draw [shift={(391.5,62.5)}, rotate = 300.58] \arrow ;
\draw   (369.06,73.41) .. controls (369.06,65.68) and (375.33,59.41) .. (383.06,59.41) .. controls (390.79,59.41) and (397.06,65.68) .. (397.06,73.41) .. controls (397.06,81.14) and (390.79,87.41) .. (383.06,87.41) .. controls (375.33,87.41) and (369.06,81.14) .. (369.06,73.41) -- cycle ;
\draw    (277,60) .. controls (267.25,36.11) and (303.13,38.37) .. (292.88,58.89) ;
\draw [shift={(292,60.5)}, rotate = 300.58] \arrow ;

\draw (276.4,69.17) node [anchor=north west][inner sep=0.75pt]  [font=\small] [align=left] {$\displaystyle q_{0}$};
\draw (272.77,26.81) node [anchor=north west][inner sep=0.75pt]   [align=left] {$\displaystyle r[ \gamma ]$};
\draw (322.27,81) node [anchor=north west][inner sep=0.75pt]   [align=left] {$\displaystyle r[ \gamma ]$};
\draw (328.77,53.51) node [anchor=north west][inner sep=0.75pt]   [align=left] {$\displaystyle a$};
\draw (374.27,27.5) node [anchor=north west][inner sep=0.75pt]   [align=left] {$\displaystyle c/\gamma $};
\draw (376.4,69.67) node [anchor=north west][inner sep=0.75pt]  [font=\small] [align=left] {$\displaystyle q_{1}$};

\end{tikzpicture}
    \vspace{-2em}
    \caption{A two-state deterministic VPA.}
    \label{fig:ex-vpa}
    \vspace{-1em}
\end{figure}
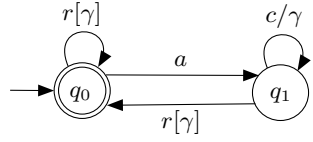

\begin{example}
    \autoref{fig:ex-vpa} depicts a deterministic VPA $\mathcal{A}=(\pdwAlph,\Gamma, Q, I, F, \delta)$, with $\pdwAlph = \{a\} \cup \{c\} \cup \{r\}$, $\Gamma = \{\gamma\}$, and $I=F=\{q_0\}$.
    As an example, the well-matched word  $w=accrar$ is accepted by $\mathcal{A}$, with the accepting stacked run:
    \[\langle q_0,\varepsilon \rangle \xrightarrow{a} \langle q_1,\varepsilon\rangle \xrightarrow{c} \langle q_1,\gamma\rangle \xrightarrow{c} \langle q_1,\gamma\gamma\rangle \xrightarrow{r} \langle q_0,\gamma\rangle \xrightarrow{a} \langle q_1,\gamma\rangle \xrightarrow{r} \langle q_0,\varepsilon \rangle.\] 
    
    \vspace{-1.5em}
\end{example}

Let us now state that VRAs and VPAs have the same expressive power.

\vravpa*

To prove \autoref{th:vpa-vra}, we provide the construction of a VPA (resp. VRA) equivalent to a given VRA (resp. VPA) and a proof of its correctness. We then give an illustration of the constructions.

\subsection{VPA Equivalent to a Given VRA}

Taking inspiration from \autoref{fig:sem-vra}, we can easily transform any VRA $\mathcal{A}$ into an equivalent VPA $\mathcal{B}$: for each procedural transition on $J\in \procAlpha$, we must add a call transition to all initial states of $\mathcal{A}^J$ and a return transition from all final states of $\mathcal{A}^J$. In that way, the transitions in the VPA $\mathcal{B}$ match the semantics of VRAs.

However, doing so results in a quadratic number of call and return transitions.
To obtain a linear number of transitions, we must ensure that the number of initial and final states of each automaton is constant. Since each $\mathcal{A}^J$ is an NFA, we can easily construct an equivalent NFA with one initial state and at most two final states. 

\begin{construction}\label{const:nfa-to-constant-nfa}
    Let $\mathcal{A}=\langle\Sigma, Q,I,F,\delta \rangle$ be an NFA. We construct an equivalent NFA $\mathcal{A}' = \langle\Sigma, Q',I',F',\delta' \rangle$ with:
    \begin{itemize}
        \item $Q'= Q \cup \{q_i, q_f\}$; 
        \item $I' = \{q_i\}$;
        \item $F'= \{q_i,q_f\}$ if $ I\cap F \neq  \varnothing$, otherwise $F'=\{q_f\}$;
        \item $\delta' = \delta \cup \delta_i \cup \delta_f$, with:
        \begin{itemize}
            \item $\forall q \in I: q\xrightarrow{a} p \in \delta \iff q_i \xrightarrow{a} p \in \delta_i$;
            \item $\forall p \in F: q\xrightarrow{a}p \in \delta \cup \delta_i \iff q\xrightarrow{a} q_f \in \delta_f$.
        \end{itemize}
    \end{itemize}
\end{construction}

\Cref{const:nfa-to-constant-nfa} is a classical construction in automata theory. We omit the proof that $L(\mathcal{A})=L(\mathcal{A}')$, as this is a classical result. Notice that $|\mathcal{A}'|= \mathcal{O}(|\mathcal{A}|)$. 

\begin{construction}\label{const:vra-to-vpa}
Let $\mathcal{A}=\langle\pdwAlph,\procAlpha,\AutomataSet,\mathcal{A}^S\rangle$ be a VRA. 
We construct the equivalent VPA $\mathcal{B}=\langle\pdwAlph,\Gamma,\vraStates{B},I,F,\vraTrans{B}\rangle$, where: $\Gamma = \vraStates{A}$, $\vraStates{B}=\vraStates{A}$, $I=I^S$, $F=F^S$, and $\vraTrans{B}=\delta_{\internal}\cup\delta_{\call}\cup\delta_{\return}$ with: 
\begin{itemize}
    \item For all $a\in \intAlpha$: $q\xrightarrow{a}p \in \vraTrans{A} \Leftrightarrow q\xrightarrow{a}p\in\delta_{\internal}$;
    \item For all $J\in \procAlpha$: $q\xrightarrow{J}p \in \vraTrans{A} \Leftrightarrow \left\{ \begin{array}{l}
         \forall q_i \in I^J : q \xrightarrow{f_c(J)[p]}q_i\in\delta_{\call}\text{, and,}  \\
         \forall q_f \in F^J : q_f \xrightarrow{f_r(J)[p]}p \in\delta_{\return}
    \end{array}\right.$.
\end{itemize}
\end{construction}

Let us discuss the size of $\mathcal{B}$. It is clear that it has as many states as $\mathcal{A}$ and as many internal transitions as $\mathcal{A}$. For the call and return transitions, for a \proceduralSymbol\ $J\in \procAlpha$, we have as many call and return transitions as the number of initial and final states of $\mathcal{A}^J$ times the number of \proceduralTransition s on $J$. Since the number of initial and final states of $\mathcal{A}^J$ can be constant using \Cref{const:nfa-to-constant-nfa}, the number of call and return transitions is linear in the number of \proceduralTransition s. We conclude that $\mathcal{B}$ has a size of $|\mathcal{B}|=\mathcal{O}(|\mathcal{A}|)$.

Notice that each part of the output can be computed from a constant-size portion of the input, so the reduction never needs to store more than a constant number of pointers at a time. As each pointer requires only $\mathcal{O}(\log n)$ bits, the entire construction is computable in logarithmic space (see \cite{sipser1996introduction} for more details on logspace reduction).
Let us prove that this construction is correct.

\begin{property}\label{pro:vra-run-to-vpa-run}
    Let $\mathcal{A}$ be a VRA and $\mathcal{B}$ be the VPA obtained with \Cref{const:vra-to-vpa}. For all $q,p\in \vraStates{A}=\vraStates{B}$ and $w\in \wm\pdwAlph$:
    \[ \langle q, \varepsilon \rangle \xrightarrow{w} \langle p,\varepsilon \rangle \in \runsA{\mathcal{A}} \iff \langle q, \varepsilon \rangle \xrightarrow{w} \langle p,\varepsilon \rangle \in \runsA{\mathcal{B}}.\]
\end{property}
\begin{proof}
    We proceed by structural induction on well-matched words. 
    \begin{itemize}
        \item $w\in \intAlpha^*$: This is trivial since the set of internal transitions of the VRA and the VPA are the same.
        \item $w = cw'r$ (with $c\in \calAlpha$, $r\in \retAlpha$ and $w'\in \wm\pdwAlph$):
        \begin{itemize}
            \item[$\Rightarrow$] By \Cref{prop:vra-run-to-nfa-trans}, there exists a transition $q\xrightarrow{J}p\in \vraTrans{A}$ such that $J\in \procAlpha^{\langle c,r\rangle}$, $w'\in \Lang{\mathcal{A}^J}$. Since $w'\in \Lang{\mathcal{A}^J}$, there exists a $q_i \in I^J$ and $q_f \in F^J$ such that $\langle q_i, \varepsilon \rangle \xrightarrow{w'} \langle q_f,\varepsilon \rangle \in \runsA{\mathcal{A}}$. By structural induction, such a stacked run also exists in $\mathcal{B}$. Additionally, since $q_i \in I^J$, $q_f \in F^J$ and $q\xrightarrow{J}p\in \vraTrans{A}$, by \Cref{const:vra-to-vpa}, there exists the transitions $q\xrightarrow{c/p}q_i \in \delta_{\call}$ and $q_f\xrightarrow{r[p]}p \in \delta_{\return}$. Hence, there exists the stacked run: $$\langle q, \varepsilon \rangle \xrightarrow{c} \langle q_i,p\rangle \xrightarrow{w'}\langle q_f,p \rangle \xrightarrow{r} \langle p,\varepsilon \rangle \in \runsA{\mathcal{B}}.$$
            \item[$\Leftarrow$] The other implication is proved similarly.
        \end{itemize}
        \item $w= w_1\cdot w_2$ ($w_1,w_2\in \wm\pdwAlph$): This is true by structural induction.  \hfill   $\lrcorner$ 
    \end{itemize}
\end{proof}
Since the set of initial (resp.\ final) states of $\mathcal{B}$ is $I^S$ (resp.\ $F^S$), thanks to \autoref{pro:vra-run-to-vpa-run}, we have $w\in \Lang{\mathcal{A}}$ iff $w\in\Lang{\mathcal{B}}$, i.e., $\Lang{\mathcal{A}}=\Lang{\mathcal{B}}$, which concludes the correctness of \autoref{const:vra-to-vpa}.

\subsection{VRA Equivalent to a Given VPA}

Let us first provide the intuition needed to build a VRA $\mathcal{A}$ equivalent to a given VPA $\mathcal{B}$. Consider a  stacked run of $\mathcal{B}$ equal to $\langle q,\varepsilon \rangle\xrightarrow{c}\langle q',\gamma \rangle\xrightarrow{w}\langle p',\gamma \rangle \xrightarrow{r} \langle p,\varepsilon \rangle$. 
For an equivalent \recursiveRun\ to exist in a VRA, there should exist a transition $q \xrightarrow{J}p$ for some \proceduralSymbol\ $J\in\procAlpha$ such that $f(J) = \langle c,r\rangle$,  $q'\in I^J$, and $p'\in F^J$ (as exposed in \autoref{fig:sem-vra} and \autoref{prop:vra-run-to-nfa-trans}). 
{Hence, we identify a new procedural symbol $J$ with each such quadruplet $\langle q',p',c,r \rangle$, $q',p'\in Q$, $c\in \calAlpha$, and $r\in \retAlpha$, and we make a copy $\mathcal{A}^J$ of $\mathcal{B}$ as an FA with the same states and internal transitions as $\mathcal{B}$, $I^J=\{q'\}$ and $F^J=\{p'\}$, and $f(J)=\langle c,r\rangle$.} 
Then, in all such copies, we replace the call transitions $q\xrightarrow{c/ \gamma} q'\in \vraTrans{B}$ and return transitions $p'\xrightarrow{r[\gamma]}p\in \vraTrans{B}$ of the VPA $\mathcal{B}$ 
by the \proceduralTransition\  $q\xrightarrow{J}p$, with $J=\langle q',p',c,r\rangle$. Let us describe formally the construction.


\begin{construction}\label{const:vpa-to-vra}
    Let $\mathcal{B}=\langle\pdwAlph, \Gamma,\vraStates{B},I,F,\vraTrans{B}\rangle$ be a VPA, with $\pdwAlph = \intAlpha \cup \calAlpha \cup \retAlpha$ and $\vraTrans{B}=\delta_{\internal}\cup\delta_{\call}\cup \delta_{\return}$. Then, we construct the equivalent VRA $\mathcal{A}=\langle\pdwAlph,\procAlpha,\AutomataSet,\mathcal{A}^{S}\rangle$ in the following way:
\begin{itemize}
        \item $\procAlpha=\vraStates{B}\times \vraStates{B}\times \calAlpha \times \retAlpha$, and the linking function is defined as $f(\langle q,p,c,r \rangle)=\langle c,r\rangle$ for all $\langle q,p,c,r \rangle\in \procAlpha$;
        \item Let $\mathcal{A}^J=\langle\intAlpha\cup\procAlpha,Q^J,I^J,F^J,\delta^J\rangle\in \AutomataSet$ be an FA of $\mathcal{A}$, with $J=\langle q,p,c,r \rangle\in \procAlpha$. The set of states $Q^J$ is a copy of $\vraStates{B}$. To express this copy, we create a bijection $g^J:\vraStates{B}\to Q^J$. Therefore, we define $Q^J=\{g^J(q_1)\mid q_1\in \vraStates{B}\}$, $I^J=\{g^J(q)\}$, $F^J=\{g^J(p)\}$, and $\delta^J$ is constructed as follows:
        \begin{itemize}
            \item $\forall q_1\xrightarrow{a}p_1\in \delta_{\internal} : g^J(q_1)\xrightarrow{a}g^J(p_1) \in \delta^J$;
            \item  $\forall q_1\xrightarrow{c'/\gamma}q_2\in \delta_{\call}, \forall p_2\xrightarrow{r'[\gamma]}p_1 \in \delta_{\return}: g^J(q_1) \xrightarrow{J'} g^J(p_1) \in \delta^J$, with $J' = \langle q_2,p_2,c',r' \rangle$.  
        \end{itemize}
        \item $\mathcal{A}^{S}$ is defined similarly to all $\mathcal{A}^J\in\AutomataSet$, but with $I^S$ a copy of $I$ and $F^S$ a copy of $F$.
    \end{itemize}
    
\end{construction}

Let us discuss the size of $\mathcal{A}$. It is clear that each FA $\mathcal{A}^J$, with $J\in \procAlpha\cup\{S\}$, has a number of states $|Q^J|=|\vraStates{B}|$ and a number of transitions $|\delta^J|=\mathcal{O}(|\delta_\internal|+|\delta_\call|\cdot |\delta_\return|)$, meaning that $|\mathcal{A}^J|=\mathcal{O}(|\mathcal{B}|^2)$. Since there are $|\procAlpha|+1$ such automata, and $|\procAlpha|=|\vraStates{B}|^2\cdot |\calAlpha|\cdot |\retAlpha|$, we conclude that the size of $\mathcal{A}$ is $|\mathcal{A}|=\mathcal{O}(|\mathcal{B}^4|)$ (we consider the size of the input alphabet fixed).

Again, this construction can be computed in logarithmic space, since it always translates a constant number of components of the input into a component of the output (note that, given $J\in \procAlpha$, constructing the bijection $g^J:\vraStates{B}\to Q^J$ can be easily computed in logarithmic space). 

Let us explain why this construction is correct and sound.

\begin{property}\label{pro:vpa-run-to-vra-run}
    For all $w\in\wm\pdwAlph$, $q,p\in \vraStates{B}$, $J\in \procAlpha\cup \{S\}$: 
    \[\langle q,\varepsilon \rangle\xrightarrow{w}\langle p,\varepsilon \rangle\in \runsA{\mathcal{B}}\iff  \langle g^J(q),\varepsilon \rangle\xrightarrow{w}\langle g^J(p),\varepsilon \rangle\in \runsA{\mathcal{A}}.\]
\end{property}
\begin{proof} First, note that, since the sets of transitions of all automata $\mathcal{A}^J \in \AutomataSet\cup \{\mathcal{A}^{S}\}$ are defined in the same way, if a \recursiveRun\ exists in one of them, it exists in all of them. We prove \autoref{pro:vpa-run-to-vra-run} by structural induction on $w\in\wm\pdwAlph$:
\begin{itemize}
    \item $w\in \intAlpha^*$: the property holds as internal transitions in $\mathcal{A}^J$ are the same as those of $\delta_{\internal}$.
    \item $w=cw'r$ (with $c\in\calAlpha$, $r\in\retAlpha$, $w'\in\wm\pdwAlph$):
    \begin{itemize}
        \item[$\Rightarrow$] The stacked run is decomposed into $\langle q,\varepsilon \rangle \xrightarrow{c} \langle q',\gamma \rangle \xrightarrow{w'} \langle p',\gamma \rangle \xrightarrow{r}\langle p,\varepsilon \rangle \in \runsA{\mathcal{B}}$ with $q',p'\in \vraStates{B}$ and $\gamma \in \Gamma$. This means that $q\xrightarrow{c / \gamma} q' \in \delta_{\call}$ and $p'\xrightarrow{r[\gamma]} p \in \delta_{\return}$. By construction, it follows that $g^J(q) \xrightarrow{K}g^J(p)\in\delta^J$, with $K = \langle q',p',c,r \rangle$. 
        As $w' \in \wm\pdwAlph$, by structural induction, we have $\langle g^{K}(q'),\varepsilon \rangle \xrightarrow{w'} \langle g^{K}(p'),\varepsilon \rangle \in \runsA{\mathcal{A}}$, meaning that $w'\in \Lang{\mathcal{A}^K}$ because $g^{K}(q')\in I^{K}$ and $g^{K}(p')\in F^{K}$. By \autoref{prop:vra-run-to-nfa-trans}, as $f(K)=\langle c,r\rangle$, we have  the \recursiveRun\ $\langle g^J(q),\varepsilon \rangle\xrightarrow{cw'r}\langle g^J(p),\varepsilon \rangle\in \runsA{\mathcal{A}}$.

        \item[$\Leftarrow$] The other implication is proved  similarly.

    \end{itemize}
    \item $w=w_1\cdot w_2$ ($w_1,w_2\in\wm\pdwAlph$): This is true by structural induction. \hfill   $\lrcorner$ 
\end{itemize}
\end{proof}

Thanks to \autoref{pro:vpa-run-to-vra-run}, since $I^S$ and $F^S$ are copies of respectively $I$ and $F$, we get that $ \Lang{\mathcal{A}}= \Lang{\mathcal{A}^{S}}=\Lang{\mathcal{B}}$.

\subsection{Illustration of the Constructions}
\begin{figure}[t]
    \centering
    \begin{subfigure}{.48\textwidth}
        \centering
        \begin{tikzpicture}[x=0.75pt,y=0.75pt,yscale=-1,xscale=1]

\draw   (116.3,69.13) .. controls (116.3,61.4) and (122.57,55.13) .. (130.3,55.13) .. controls (138.03,55.13) and (144.3,61.4) .. (144.3,69.13) .. controls (144.3,76.87) and (138.03,83.13) .. (130.3,83.13) .. controls (122.57,83.13) and (116.3,76.87) .. (116.3,69.13) -- cycle ;
\draw   (185.3,68.63) .. controls (185.3,60.9) and (191.57,54.63) .. (199.3,54.63) .. controls (207.03,54.63) and (213.3,60.9) .. (213.3,68.63) .. controls (213.3,76.37) and (207.03,82.63) .. (199.3,82.63) .. controls (191.57,82.63) and (185.3,76.37) .. (185.3,68.63) -- cycle ;
\draw   (187.49,68.63) .. controls (187.49,62.11) and (192.78,56.82) .. (199.3,56.82) .. controls (205.82,56.82) and (211.11,62.11) .. (211.11,68.63) .. controls (211.11,75.16) and (205.82,80.45) .. (199.3,80.45) .. controls (192.78,80.45) and (187.49,75.16) .. (187.49,68.63) -- cycle ;
\draw   (116.3,127.13) .. controls (116.3,119.4) and (122.57,113.13) .. (130.3,113.13) .. controls (138.03,113.13) and (144.3,119.4) .. (144.3,127.13) .. controls (144.3,134.87) and (138.03,141.13) .. (130.3,141.13) .. controls (122.57,141.13) and (116.3,134.87) .. (116.3,127.13) -- cycle ;
\draw   (185.3,126.63) .. controls (185.3,118.9) and (191.57,112.63) .. (199.3,112.63) .. controls (207.03,112.63) and (213.3,118.9) .. (213.3,126.63) .. controls (213.3,134.37) and (207.03,140.63) .. (199.3,140.63) .. controls (191.57,140.63) and (185.3,134.37) .. (185.3,126.63) -- cycle ;
\draw   (116.3,190.13) .. controls (116.3,182.4) and (122.57,176.13) .. (130.3,176.13) .. controls (138.03,176.13) and (144.3,182.4) .. (144.3,190.13) .. controls (144.3,197.87) and (138.03,204.13) .. (130.3,204.13) .. controls (122.57,204.13) and (116.3,197.87) .. (116.3,190.13) -- cycle ;

\draw    (138.75,201.51) .. controls (148.64,225.35) and (112.75,225.19) .. (122.88,204.71) ;
\draw [shift={(124,202.5)}, rotate = 120.26]\arrow  ;
\draw    (210.5,118.53) .. controls (234.28,108.76) and (233.95,144.65) .. (213.52,134.42) ;
\draw [shift={(211.5,133.54)}, rotate = 30.54]\arrow  ;
\draw    (118.79,135.48) .. controls (95,145.46) and (95.96,109.56) .. (116.53,119.61) ;
\draw [shift={(118.15,120.48)}, rotate = 210.03]\arrow  ;

\draw    (199.3,112.63) -- (199.3,84.63) ;
\draw [shift={(199.3,82.63)}, rotate = 90]\arrow  ;
\draw    (199.3,140.63) -- (199.24,189.75) -- (146.3,190.12) ;
\draw [shift={(144.3,190.13)}, rotate = 359.6]\arrow  ;
\draw    (136.7,139.73) -- (136.7,175.73) ;
\draw [shift={(136.7,177.73)}, rotate = 270]\arrow  ;
\draw    (124.7,177.2) -- (124.98,142.49) ;
\draw [shift={(125,140)}, rotate = 90.47]\arrow  ;
\draw    (92.33,68.95) -- (114.3,69.12) ;
\draw [shift={(116.3,69.13)}, rotate = 180.44]\arrow  ;
\draw    (130.3,83.13) -- (130.3,111.13) ;
\draw [shift={(130.3,113.13)}, rotate = 270]\arrow  ;
\draw    (144,130.13) -- (183.3,129.66) ;
\draw [shift={(185.8,129.63)}, rotate = 179.3]\arrow  ;

\draw (123.4,66.67) node [anchor=north west][inner sep=0.75pt]  [font=\small] [align=left] {$\displaystyle s_{0}$};
\draw (157.5,118.87) node [anchor=north west][inner sep=0.75pt]   [align=left] {$\displaystyle a$};
\draw (192.4,65.17) node [anchor=north west][inner sep=0.75pt]  [font=\small] [align=left] {$\displaystyle s_{1}$};
\draw (192.9,123.17) node [anchor=north west][inner sep=0.75pt]  [font=\small] [align=left] {$\displaystyle r_{1}$};
\draw (124.4,186.67) node [anchor=north west][inner sep=0.75pt]  [font=\small] [align=left] {$\displaystyle t_{0}$};
\draw (124.9,124.67) node [anchor=north west][inner sep=0.75pt]  [font=\small] [align=left] {$\displaystyle r_{0}$};
\draw (75.3,120) node [anchor=north west][inner sep=0.75pt]   [align=left] {$\displaystyle c/r_{1}$};
\draw (103.1,87.37) node [anchor=north west][inner sep=0.75pt]   [align=left] {$\displaystyle c/s_{1}$};
\draw (200.1,90.97) node [anchor=north west][inner sep=0.75pt]   [align=left] {$\displaystyle r/[ s_{1}]$};
\draw (140.75,204.51) node [anchor=north west][inner sep=0.75pt]   [align=left] {$\displaystyle c/t_{0} ,r[ t_{0}]$};
\draw (200.2,154.45) node [anchor=north west][inner sep=0.75pt]   [align=left] {$\displaystyle r[ t_{0}]$};
\draw (137.7,146.07) node [anchor=north west][inner sep=0.75pt]   [align=left] {$\displaystyle c/r_{0}$};
\draw (95.8,147.65) node [anchor=north west][inner sep=0.75pt]   [align=left] {$\displaystyle  \begin{array}{{r}}
r[ r_{0}]\\
c/t_{0}
\end{array}$};
\draw (229.78,120) node [anchor=north west][inner sep=0.75pt]   [align=left] {$\displaystyle r[ r_{1}]$};

\end{tikzpicture}
        \vspace{-2em}
        \caption{VPA equivalent to the VRA depicted in \autoref{fig:ex-vra}.}
        \label{fig:vra-to-vpa}
    \end{subfigure}
    \hfill
    \begin{subfigure}{.48\textwidth}
    \centering
        \begin{tikzpicture}[x=0.75pt,y=0.75pt,yscale=-1,xscale=1]

\draw    (91.45,80.2) -- (110.41,80.36) ;
\draw [shift={(112.41,80.38)}, rotate = 180.5] \arrow   ;
\draw    (139.27,74.77) -- (206.2,74.46) ;
\draw [shift={(208.2,74.45)}, rotate = 179.73] \arrow   ;
\draw    (207.77,84.67) -- (142.2,83.97) ;
\draw [shift={(140.2,83.95)}, rotate = 0.61] \arrow   ;
\draw  [dashed] (87.6,59.05) -- (238.6,59.05) -- (238.6,118.25) -- (87.6,118.25) -- cycle ;
\draw   (112.5,80.23) .. controls (112.5,72.5) and (118.77,66.23) .. (126.5,66.23) .. controls (134.23,66.23) and (140.5,72.5) .. (140.5,80.23) .. controls (140.5,87.97) and (134.23,94.23) .. (126.5,94.23) .. controls (118.77,94.23) and (112.5,87.97) .. (112.5,80.23) -- cycle ;
\draw   (114.69,80.23) .. controls (114.69,73.71) and (119.98,68.42) .. (126.5,68.42) .. controls (133.02,68.42) and (138.31,73.71) .. (138.31,80.23) .. controls (138.31,86.76) and (133.02,92.05) .. (126.5,92.05) .. controls (119.98,92.05) and (114.69,86.76) .. (114.69,80.23) -- cycle ;
\draw   (206.9,80.23) .. controls (206.9,72.5) and (213.17,66.23) .. (220.9,66.23) .. controls (228.63,66.23) and (234.9,72.5) .. (234.9,80.23) .. controls (234.9,87.97) and (228.63,94.23) .. (220.9,94.23) .. controls (213.17,94.23) and (206.9,87.97) .. (206.9,80.23) -- cycle ;
\draw    (236.53,142.45) -- (218.5,142.62) ;
\draw [shift={(216.5,142.63)}, rotate = 359.48] \arrow   ;
\draw    (120.87,137.17) -- (187.8,136.86) ;
\draw [shift={(189.8,136.85)}, rotate = 179.73] \arrow   ;
\draw    (189.37,147.07) -- (123.8,146.37) ;
\draw [shift={(121.8,146.35)}, rotate = 0.61] \arrow   ;
\draw  [dashed] (87.6,121.25) -- (238.6,121.25) -- (238.6,180.45) -- (87.6,180.45) -- cycle ;
\draw   (94.1,142.63) .. controls (94.1,134.9) and (100.37,128.63) .. (108.1,128.63) .. controls (115.83,128.63) and (122.1,134.9) .. (122.1,142.63) .. controls (122.1,150.37) and (115.83,156.63) .. (108.1,156.63) .. controls (100.37,156.63) and (94.1,150.37) .. (94.1,142.63) -- cycle ;
\draw   (190.69,142.63) .. controls (190.69,136.11) and (195.98,130.82) .. (202.5,130.82) .. controls (209.02,130.82) and (214.31,136.11) .. (214.31,142.63) .. controls (214.31,149.16) and (209.02,154.45) .. (202.5,154.45) .. controls (195.98,154.45) and (190.69,149.16) .. (190.69,142.63) -- cycle ;
\draw   (188.5,142.63) .. controls (188.5,134.9) and (194.77,128.63) .. (202.5,128.63) .. controls (210.23,128.63) and (216.5,134.9) .. (216.5,142.63) .. controls (216.5,150.37) and (210.23,156.63) .. (202.5,156.63) .. controls (194.77,156.63) and (188.5,150.37) .. (188.5,142.63) -- cycle ;
\draw    (236.33,204.45) -- (218.3,204.62) ;
\draw [shift={(216.3,204.63)}, rotate = 359.48] \arrow   ;
\draw    (120.67,199.17) -- (187.6,198.86) ;
\draw [shift={(189.6,198.85)}, rotate = 179.73] \arrow   ;
\draw    (189.17,209.07) -- (123.6,208.37) ;
\draw [shift={(121.6,208.35)}, rotate = 0.61] \arrow   ;
\draw  [dashed] (87.6,183.45) -- (238.6,183.45) -- (238.6,242.65) -- (87.6,242.65) -- cycle ;
\draw   (93.9,204.63) .. controls (93.9,196.9) and (100.17,190.63) .. (107.9,190.63) .. controls (115.63,190.63) and (121.9,196.9) .. (121.9,204.63) .. controls (121.9,212.37) and (115.63,218.63) .. (107.9,218.63) .. controls (100.17,218.63) and (93.9,212.37) .. (93.9,204.63) -- cycle ;
\draw   (96.09,204.63) .. controls (96.09,198.11) and (101.38,192.82) .. (107.9,192.82) .. controls (114.42,192.82) and (119.71,198.11) .. (119.71,204.63) .. controls (119.71,211.16) and (114.42,216.45) .. (107.9,216.45) .. controls (101.38,216.45) and (96.09,211.16) .. (96.09,204.63) -- cycle ;
\draw   (188.3,204.63) .. controls (188.3,196.9) and (194.57,190.63) .. (202.3,190.63) .. controls (210.03,190.63) and (216.3,196.9) .. (216.3,204.63) .. controls (216.3,212.37) and (210.03,218.63) .. (202.3,218.63) .. controls (194.57,218.63) and (188.3,212.37) .. (188.3,204.63) -- cycle ;

\draw (120.9,77.67) node [anchor=north west][inner sep=0.75pt]  [font=\small] [align=left] {$\displaystyle q_{0}$};
\draw (145,86.1) node [anchor=north west][inner sep=0.75pt]   [align=left] {$\displaystyle \langle q_{1} ,q_{1} ,c,r \rangle$\\$\displaystyle \langle q_{1} ,q_{0} ,c,r \rangle$};
\draw (168.27,63.11) node [anchor=north west][inner sep=0.75pt]   [align=left] {$\displaystyle a$};
\draw (214.9,76.67) node [anchor=north west][inner sep=0.75pt]  [font=\small] [align=left] {$\displaystyle q_{1}$};
\draw (65.1,71.13) node [anchor=north west][inner sep=0.75pt]   [align=left] {$\displaystyle \mathcal{B}^{S}$:};
\draw (24,132.23) node [anchor=north west][inner sep=0.75pt]   [align=left] {$\displaystyle \mathcal{B}^{\langle q_{1} ,q_{1} ,c,r \rangle}$:};
\draw (24.1,191.83) node [anchor=north west][inner sep=0.75pt]   [align=left] {$\displaystyle \mathcal{B}^{\langle q_{1} ,q_{0} ,c,r \rangle}$:};
\draw (101.5,136.07) node [anchor=north west][inner sep=0.75pt]  [font=\small] [align=left] {$\displaystyle q'_{0}$};
\draw (127,148.5) node [anchor=north west][inner sep=0.75pt]   [align=left] {$\displaystyle \langle q_{1} ,q_{1} ,c,r \rangle$\\$\displaystyle \langle q_{1} ,q_{0} ,c,r \rangle$};
\draw (149.87,125.51) node [anchor=north west][inner sep=0.75pt]   [align=left] {$\displaystyle a$};
\draw (196.5,135.07) node [anchor=north west][inner sep=0.75pt]  [font=\small] [align=left] {$\displaystyle q'_{1}$};
\draw (100.3,198.07) node [anchor=north west][inner sep=0.75pt]  [font=\small] [align=left] {$\displaystyle q''_{0}$};
\draw (127,210.5) node [anchor=north west][inner sep=0.75pt]   [align=left] {$\displaystyle \langle q_{1} ,q_{1} ,c,r \rangle$\\$\displaystyle \langle q_{1} ,q_{0} ,c,r \rangle$};
\draw (149.67,187.51) node [anchor=north west][inner sep=0.75pt]   [align=left] {$\displaystyle a$};
\draw (194.3,197.07) node [anchor=north west][inner sep=0.75pt]  [font=\small] [align=left] {$\displaystyle q''_{1}$};

\end{tikzpicture}
        \vspace{-2em}
        \caption{VRA equivalent to the VPA depicted in \autoref{fig:ex-vpa}.}
        \label{fig:vpa-to-vra}
    \end{subfigure}
    \caption{Illustration of \autoref{th:vpa-vra}.}
    \label{fig:vra-vpa}
        \vspace{-1em}
\end{figure}
    
    \autoref{fig:vra-vpa} illustrates \autoref{th:vpa-vra}. 
\autoref{fig:vra-to-vpa} illustrates \autoref{const:vra-to-vpa} and provides a VPA equivalent to the VRA from \autoref{fig:ex-vra}. For readability, we have omitted some return transitions that were not relevant to accept the target language.
\autoref{fig:vpa-to-vra} illustrates \autoref{const:vpa-to-vra} and provides a VRA equivalent to the VPA from \autoref{fig:ex-vpa}. Again, for readability, we have omitted some automata that were not relevant. Note that both the VPA and the VRA obtained this way are non-deterministic. 

\section{Proof of \autoref{prop:comp-codet-on-reg-lang}}\phantomsection\label{ax:converse-codet-on-reg-lang}

\codetCompRegLang*
\begin{proof}
    By hypothesis, all the \regularLanguage s $\regLang{\mathcal{A}^J}$,  $J\in \procAlpha^{\langle c,r \rangle}$, are pairwise disjoint, and their union is equal to $(\intAlpha\cup \procAlpha)^*$. We first prove  that the \recursiveLanguage s are pairwise disjoint too, i.e., $\mathcal{A}$ is \codeterministic. Then, as the union of the \regularLanguage s is $(\intAlpha\cup \procAlpha)^*$, we prove that the union of the \recursiveLanguage s is $\wm\pdwAlph$. 

    \begin{property}\label{pro:codet-on-reg-lang}
     Let $\mathcal{A}$ be a VRA. 
     If for all $ c\in\calAlpha$, $r\in\retAlpha$, and distinct $J,J'\in\procAlpha^{\langle c,r\rangle}$, $\regLang{\mathcal{A}^{J}} \cap \regLang{\mathcal{A}^{J'}} = \varnothing$, then $\mathcal{A}$ is \codeterministic. 
    \end{property}
    \begin{proof}[of \autoref{pro:codet-on-reg-lang}]
    Assume by contradiction that there exist some well-matched words accepted by two automata sharing their call and return symbols. We choose among these words a word $w\in \wm\pdwAlph$ with minimal depth, i.e., $w \in  \Lang{\mathcal{A}^J}\cap \Lang{\mathcal{A}^{J'}}$ for some distinct $J,J'\in\procAlpha$ with $f(J)=f(J')$. 
    We note $w=u_0c_1w_1r_1 \dots c_nw_nr_nu_n$ with $n\in \mathbb{N}$, $u_i\in\intAlpha^*$, $c_i\in \calAlpha$, $r_i\in \retAlpha$,  $w_i\in \wm\pdwAlph$ and $depth(w_i)<depth(w)$ for all $i$. By \autoref{prop:vra-run-to-nfa-trans}, since $w\in \Lang{\mathcal{A}^J}$ (resp.\ $w \in \Lang{\mathcal{A}^{J'}}$), we have  $u_0K_1\dots K_nu_n\in \regLang{\mathcal{A}^J}$ (resp.\ $u_0K_1'\dots K_n'u_n\in \regLang{\mathcal{A}^{J'}}$), with $K_i \in\procAlpha^{\langle c_i,r_i\rangle}$ (resp.\ $ K_i' \in\procAlpha^{\langle c_i,r_i\rangle}$) such that $w_i\in \Lang{\mathcal{A}^{K_i}}$ (resp.\ $w_i \in\Lang{\mathcal{A}^{K_i'}}$). For all $i\in[1,n]$, since $w_i\in \Lang{\mathcal{A}^{K_i}}\cap \Lang{\mathcal{A}^{K_i'}}$ and $depth(w_i)<depth(w)$, by minimality of depth of $w$, we get that $K_i=K_i'$ for all $i$. It follows that $u_0K_1 \dots K_nu_n=u_0K_1' \dots K_n'u_n \in \regLang{\mathcal{A}^J}\cap \regLang{\mathcal{A}^{J'}}$, which is a contradiction. \hfill   $\lrcorner$
\end{proof}
    \begin{property}\label{pro:complete-on-reg-lang}
    Let $\mathcal{A}$ be a VRA. 
    If for all $ c\in\calAlpha$ and $r\in\retAlpha$, we have $\bigcup_{J\in\procAlpha^{\langle c,r\rangle}} \regLang{\mathcal{A}^J} = (\intAlpha \cup \procAlpha)^*$, then $\bigcup_{J\in\procAlpha^{\langle c,r\rangle}} \Lang{\mathcal{A}^J} =\wm\pdwAlph$.
    \end{property}
    \begin{proof}[of \autoref{pro:complete-on-reg-lang}]
        We prove that, for all $ c\in \calAlpha$, $r \in  \retAlpha$ and $w\in \wm\pdwAlph$, we have $w \in \Lang{\mathcal{A}^J}$ for some $J\in \procAlpha^{\langle c,r\rangle}$. The proof is by induction on the depth of $w$.
    \begin{itemize}
        \item $\depth{w}=0$ ($w\in \intAlpha^*$): By hypothesis, $w\in \regLang{\mathcal{A}^J}$ for some $J\in \procAlpha^{\langle c,r\rangle}$. It follows that $w\in \Lang{\mathcal{A}^J}$ by \autoref{prop:vra-run-to-nfa-trans}.
        \item $\depth{w}>0$ ($w=u_0c_1w_1r_1\dots c_nw_nr_nu_n$, with $n\in \mathbb{N}_0$, $u_i\in\intAlpha^*$, $c_i\in\calAlpha$, $r_i\in \retAlpha$ and $w_i\in \wm\pdwAlph$): 
        By induction, as $\depth{w_i}<\depth{w}$ for all $i\in[1,n]$, there exists some $K_i \in \procAlpha^{\langle c_i,r_i\rangle}$ such that $w_i \in \Lang{\mathcal{A}^{K_i}}$. 
        By hypothesis again, we have $u_0K_1\dots K_nu_n \in \regLang{\mathcal{A}^J}$ for some $J\in \procAlpha^{\langle c,r\rangle}$. 
        By \autoref{prop:vra-run-to-nfa-trans}, we conclude that $w\in \Lang{\mathcal{A}^J}$. \hfill   $\lrcorner$ 
        \end{itemize}
    \end{proof}

    Since, by hypothesis, all the \regularLanguage s $\regLang{\mathcal{A}^J}$,  $J\in \procAlpha^{\langle c,r \rangle}$, are pairwise disjoint, by \autoref{pro:codet-on-reg-lang}, it follows that $\mathcal{A}$ is \codeterministic. Additionally, since, by hypothesis, the union of the \regularLanguage s is $(\intAlpha\cup \procAlpha)^*$, by \autoref{pro:complete-on-reg-lang}, the union of the \recursiveLanguage s is $\wm\pdwAlph$. Since all the FAs of $\mathcal{A}$ are complete, it follows that $\mathcal{A}$ is \complete. \qed
\end{proof}

\section{Proof of \autoref{th:codet-complete-vra}}\phantomsection\label{ax:detail-th-codet}

In this section, we provide a more complete proof of \autoref{th:codet-complete-vra} and  more details about the constructions. 

\codetcompvra*


\begin{proof}
    
Given $\mathcal{A}=\langle\pdwAlph,\procAlpha,\AutomataSet,\mathcal{A}^S\rangle$, we want to construct an equivalent VRA $\mathcal{B}=\langle\pdwAlph,\procAlpha',\AutomataSet',\mathcal{B}^{S}\rangle$ that is \codeterministic\ and \complete. Merging \Cref{def:codeterministic-vra,def:complete-vra}, $\mathcal{B}$ must respect the following conditions: each of its automata must be complete as FAs and, for all $\langle c,r\rangle\in \calAlpha\times \retAlpha$, the \recursiveLanguage s of all $\mathcal{B^J}$, with $\mathcal{J}\in \procAlpha'^{\langle c,r\rangle}$, must be a partition of $\wm\pdwAlph$. 

The main idea is the following. We define $\procAlpha'^{\langle c,r\rangle} = 2^{\procAlpha^{\langle c,r\rangle}}$, leading to the procedural alphabet of $\mathcal{B}$ equal to $\procAlpha' = \bigcup_{\langle c,r\rangle \in \calAlpha \times \retAlpha} \procAlpha'^{\langle c,r\rangle}$. Then, for each $\langle c,r \rangle$, we want to obtain, for all $\mathcal{J}\in\procAlpha'^{\langle c,r\rangle}$, Equality~\eqref{eq:lang-B^J} which we recall here:
\begin{equation}\tag{\ref{eq:lang-B^J}}
    \Lang{\mathcal{B}^\mathcal{J}} =  \bigcap_{J \in \mathcal{J}} \Lang{\mathcal{A}^J} ~\setminus \bigcup_{J\in \overline{\mathcal{J}}} \Lang{\mathcal{A}^{J}}. 
\end{equation} 
Recall that, when $\mathcal{J} = \varnothing$, $\bigcap_{J \in \mathcal{J}} \Lang{\mathcal{A}^J}=\wm\pdwAlph$. In this way, the \recursiveLanguage s of all $\mathcal{B^J}$, $\mathcal{J}\in \procAlpha'^{\langle c,r\rangle}$, form a partition of $\wm\pdwAlph$ as depicted in  \autoref{fig:venn-codet-complete}.
Note that for each $\langle c,r\rangle$, the set $\varnothing$ belongs to $\procAlpha'^{\langle c,r\rangle}$, each time corresponding to a distinct automaton.

    Before detailing the construction of each $\mathcal{B}^{\mathcal{J}}$, we transform all $\mathcal{A}^J\in \AutomataSet\cup \{\mathcal{A}^S\}$ into complete DFAs over the alphabet $\intAlpha\cup \procAlpha'$ as follows. 
    We replace each \proceduralTransition\ $q\xrightarrow{J}p\in\delta_\mathcal{A}$ by the transitions $q\xrightarrow{\mathcal{J}} p$, for all $\mathcal{J} \in \procAlpha'$ such that $\mathcal{J} \ni J$, and we then apply the subset construction to get complete DFAs~\cite{HU79}. This first step helps later to obtain~\eqref{eq:lang-B^J}.
    

    \begin{construction}\label{const:complete-dfa-A^J'} 
    For each FA $\mathcal{A}^J=\langle\intAlpha\cup \procAlpha, Q^J,I^J,F^J,\delta^J\rangle\in \AutomataSet\cup \{\mathcal{A}^S\}$, we construct the complete DFA $\mathcal{A}'^{J}=\langle\intAlpha\cup \procAlpha', Q'^J,I'^J,F'^J,\delta'^J\rangle$ with:
    \begin{itemize}
        \item $Q'^J= 2^{Q^J}$; $I'^J= \{I^J\}$; $F'^J=\{P\subseteq Q^J \mid P\cap F^J\neq \varnothing\}$;
        \item $\delta'^J$ is the  transition function defined by: Let $P \subseteq Q^J$,
        \begin{itemize}
            \item for $a\in \intAlpha$: $P\xrightarrow{a} \{p\in Q^J \mid \exists q \in P, q\xrightarrow{a}p \in \delta^J\}$; 
            \item for $\mathcal{J}\in \procAlpha'$: $P\xrightarrow{\mathcal{J}} \{p\in Q^J \mid \exists q \in P, \exists K\in \mathcal{J}, q\xrightarrow{K}p \in \delta^J\}$. 
        \end{itemize}
    \end{itemize}
    \end{construction}
    With this construction, we can state that the \regularLanguage\ of $\mathcal{A}'^J$ is equal to the \regularLanguage\ of $\mathcal{A}^J$, up to the replacement of the \proceduralSymbol s appearing in the accepted words.

    \langAprime*
    \begin{proof}[of \autoref{pro:lang-A'^K}] 
        As applying the subset construction to any FA leads to a DFA accepting the same regular language~\cite{HU79}, it is enough to prove \autoref{pro:lang-A'^K} when we replace the \proceduralTransition s $q\xrightarrow{J}p\in\delta_\mathcal{A}$ by the transitions $q\xrightarrow{\mathcal{J}} p$, for all $\mathcal{J} \subseteq \procAlpha'$ such that $J\in \mathcal{J}$.
        

        \begin{itemize}
            \item[$\Rightarrow$] In the accepting \regularRun\ on $u_0\mathcal{J}_1\dots \mathcal{J}_nu_n$, each transition $q_i\xrightarrow{\mathcal{J}_i}p_i$ comes from a transition $q_i\xrightarrow{J_i}p_i$ for some $J_i$ such that $J_i\in \mathcal{J}_i$. Therefore, we deduce the accepting \regularRun\ on $u_0J_1\dots J_nu_n$.
            \item[$\Leftarrow$] In the accepting \regularRun\ on $u_0J_1\dots J_nu_n$, each transition $q_i\xrightarrow{{J}_i}p_i$ implies the existence of the transitions $q_i\xrightarrow{\mathcal{J}_i}p_i$ for all $\mathcal{J}_i$ containing $J_i$. Therefore, for all $i\in[1,n]$ and $\mathcal{J}_i\ni J_i$, we deduce the accepting regular run on $u_0\mathcal{J}_1\dots \mathcal{J}_nu_n$. \hfill   $\lrcorner$ 
        \end{itemize}

    \end{proof}

    Since all $\mathcal{A}'^J$, $J \in \procAlpha \cup \{S\}$,  are complete DFAs, they are closed under Boolean operations with well-known constructions~\cite{HU79}. 
    For each $\langle c,r\rangle$, we can thus construct an automaton $\mathcal{B^J}$, $\mathcal{J} \in \procAlpha'^{\langle c,r\rangle}$,  such that its \regularLanguage\ respects~\eqref{eq:reg-lang-B^J} that we recall here. Notice that it has a form similar to \eqref{eq:lang-B^J}.
    \begin{equation}\tag{\ref{eq:reg-lang-B^J}}
        \regLang{\mathcal{B^J}}=\bigcap_{J\in \mathcal{J}} \regLang{\mathcal{A}'^J} \setminus \bigcup_{J \in \overline{\mathcal{J}}} \regLang{\mathcal{A}'^{J}}=\bigcap_{J\in \mathcal{J}} \regLang{\mathcal{A}'^J} \cap \bigcap_{J \in \overline{\mathcal{J}}} \overline{\regLang{\mathcal{A}'^{J}}}.
    \end{equation}
    The formal construction of $\mathcal{B}^{\mathcal{J}}$ is described in the following, with  appropriate Cartesian product and definition of the final states~\cite{HU79,sipser1996introduction}, in a way that $\mathcal{B^J}$ accepts the \regularLanguage\ of \eqref{eq:reg-lang-B^J}. 
    
    \begin{construction}\label{const:B^J} Let $\langle c,r\rangle\in \calAlpha\times \retAlpha$, $\mathcal{A}'^{J}=\langle\intAlpha\cup \procAlpha', Q'^{J},I'^{J},F'^{J},\delta'^{J}\rangle$ obtained with \autoref{const:complete-dfa-A^J'} for all $J\in\procAlpha^{\langle c,r\rangle}$. Given $\mathcal{J}\in \procAlpha'^{\langle c,r\rangle}$, we construct $\mathcal{B}^{\mathcal{J}}=\langle\intAlpha\cup \procAlpha', Q^{\mathcal{J}},I^{\mathcal{J}},F^{\mathcal{J}},\delta^{\mathcal{J}}\rangle$ with:
    \begin{itemize}
        \item $Q^{\mathcal{J}}=\bigtimes_{J\in \procAlpha^{\langle c,r\rangle}} Q'^{J}$;
        \item $I^{\mathcal{J}}= \bigtimes_{J\in \procAlpha^{\langle c,r\rangle}} I'^{J}$;
        \item $\delta^{\mathcal{J}}=\bigtimes_{J\in \procAlpha^{\langle c,r\rangle}} \delta'^{J}$;
        \item  $F^{\mathcal{J}} = \bigtimes_{J\in \procAlpha^{\langle c,r\rangle}} G^{J}$, where $G^J=F'^J$ when $J\in \mathcal{J}$, and $G^J= \overline{F'^J}$ otherwise. 
    \end{itemize}
    \end{construction}
    Since \autoref{const:B^J} is obtained through Cartesian products of FAs, we suppose that it is unnecessary to prove that $\mathcal{B^J}$ accepts the \regularLanguage\ described in \eqref{eq:reg-lang-B^J}, and we refer to \cite{HU79,sipser1996introduction} for more details.

    Finally, 
    we construct the required VRA $\mathcal{B}=\langle\pdwAlph,\procAlpha',\AutomataSet',\mathcal{B}^S\rangle$ such that 
    $\AutomataSet'=\{\mathcal{B}^{\mathcal{J}}\mid \mathcal{J}\in \procAlpha'\}$ where each $\mathcal{B}^{\mathcal{J}}$ is obtained with \autoref{const:B^J}, and $\mathcal{B}^S=\mathcal{A}'^S$ is obtained with \autoref{const:complete-dfa-A^J'}. 
    To gain more insight about the constructions, we refer to \autoref{ex:codet-complete-vra} below. 
    

    Let us prove that $\mathcal{B}$ is \codeterministic\ and \complete. For all $\langle c,r\rangle\in \calAlpha\times \retAlpha$, according to \eqref{eq:reg-lang-B^J}, the \regularLanguage s of all  DFAs $\mathcal{B^J}$, with $\mathcal{J}\in \procAlpha'^{\langle c,r\rangle}$, form a partition of $(\intAlpha\cup \procAlpha)^*$. By \autoref{prop:comp-codet-on-reg-lang}, since all automata of $\mathcal{B}$ are complete DFAs, it follows that $\mathcal{B}$ is \codeterministic\ and \complete. 

    We now prove that $\mathcal{B}$ accepts the same language as $\mathcal{A}$. We first prove the correctness of the \recursiveLanguage s $\Lang{\mathcal{B^J}}$ as exposed in \eqref{eq:lang-B^J}, which is a consequence of $\mathcal{B}$ being \codeterministic\ and \complete\ and of \autoref{pro:lang-B^J} that we now recall (see also \autoref{fig:venn-codet-complete}).

    \langBJ*

        \begin{proof}[of \autoref{pro:lang-B^J}] 
            The proof is by induction on the depth of $w\in\wm\pdwAlph$ that $w\in \Lang{\mathcal{A}^J} \Leftrightarrow \exists \mathcal{J}\in \procAlpha', \mathcal{J}\ni J: w \in  \Lang{\mathcal{B^J}}$.

            \begin{itemize}
                \item $depth(w)=0$ ($w\in\intAlpha^*$):
                By \autoref{pro:lang-A'^K}, we have $w\in \regLang{\mathcal{A}'^J} \Leftrightarrow w\in \regLang{\mathcal{A}^J}$.
                By \eqref{eq:reg-lang-B^J},
                since all $\regLang{\mathcal{B^J}}$, with $\mathcal{J}\ni J$, form a partition of $\regLang{\mathcal{A}'^J}$, we have $w\in \regLang{\mathcal{A}'^J} \Leftrightarrow \exists \mathcal{J}\ni J: w \in  \regLang{\mathcal{B^J}}$.
                As $w\in \intAlpha^*$, by \autoref{prop:vra-run-to-nfa-trans}, it follows that the required property holds. 
                \item $\depth{w} >0$ ($w=u_0c_1w_1r_1\dots c_nw_nr_nu_n$, with $n\in \mathbb{N}_0$, $u_i\in\intAlpha^*$, $c_i\in\calAlpha$, $r_i\in \retAlpha$ and $w_i\in \wm\pdwAlph$ for all $i$):
                \begin{itemize}
                    \item[$\Rightarrow$] Since $w\in \Lang{\mathcal{A}^J}$, by \autoref{prop:vra-run-to-nfa-trans}, we have $u_0{K}_1\dots {K}_n u_n\in\regLang{\mathcal{A}^J}$ for some ${K}_i\in \procAlpha^{\langle c_i,r_i\rangle}$ such that $w_i\in \Lang{\mathcal{A}^{{K}_i}}$, for all $i\in[1,n]$. By induction, for all $i\in [1,n]$, there exists $\mathcal{K}_i\in \procAlpha'^{\langle c_i,r_i\rangle}$ such that $K_i\in \mathcal{K}_i$ and $w_i \in \Lang{\mathcal{B}^{\mathcal{K}_i}}$. As $\mathcal{A}'^J$ is a complete DFA, there exists a \regularRun\ on $u_0\mathcal{K}_1\dots \mathcal{K}_n u_n$, which is accepting since $u_0{K}_1\dots {K}_n u_n\in \regLang{\mathcal{A}^J}$ and $ K_i \in \mathcal{K}_i$ (by \autoref{pro:lang-A'^K}). By \eqref{eq:reg-lang-B^J} and since $u_0\mathcal{K}_1\dots \mathcal{K}_n u_n\in \regLang{\mathcal{A}'^J}$, there exists a $\mathcal{J}\in \procAlpha'$ such that $u_0\mathcal{K}_1\dots \mathcal{K}_n u_n\in \regLang{\mathcal{B^J}}$ and $J\in \mathcal{J}$. Since  $\mathcal{K}_i\in \procAlpha'^{\langle c_i,r_i\rangle}$ and $w_i \in \Lang{\mathcal{B}^{\mathcal{K}_i}}$ for all $i$, by \autoref{prop:vra-run-to-nfa-trans}, $w\in \Lang{\mathcal{B^J}}$.
                    
                    \item [$\Leftarrow$] 
                    The other implication is proved similarly.
                    \hfill   $\lrcorner$ 
                \end{itemize}
            \end{itemize}
        \end{proof}

    Finally, to show that $\mathcal{A}$ and $\mathcal{B}$ are equivalent,   we must prove that for all $w \in \wm\pdwAlph$, $w \in \Lang{\mathcal{A}^{S}} \Leftrightarrow w \in \Lang{\mathcal{B}^S}$.
    Suppose that $w = u_0c_1w_1r_1 \ldots c_nw_nr_nu_n \in \wm\pdwAlph$ with $n\in \mathbb{N}$, $u_i\in \intAlpha^*$, $c_i\in \calAlpha$, $r_i\in \retAlpha$, $w_i \in \wm\pdwAlph$ for all~$i$: 
    \begin{itemize}
        \item[$\Rightarrow$]
        If $w \in \Lang{\mathcal{A}^S}$, by \autoref{prop:vra-run-to-nfa-trans}, we have $u_0{J}_1\dots {J}_n u_n\in\regLang{\mathcal{A}^S}$ for some ${J}_i\in \procAlpha^{\langle c_i,r_i\rangle}$ such that $w_i\in \Lang{\mathcal{A}^{{J}_i}}$, for all $i\in[1,n]$. By \autoref{pro:lang-B^J}, for all $i$, as $w_i \in \Lang{\mathcal{A}^{J_i}}$, there exists $\mathcal{J}_i \ni J_i$ such that $w_i \in \Lang{\mathcal{B}^{\mathcal{J}_i}}$. Then by \autoref{pro:lang-A'^K}, we deduce that $u_0\mathcal{J}_1 \ldots \mathcal{J}_nu_n \in \regLang{\mathcal{A}'^{S}}$. By \autoref{prop:vra-run-to-nfa-trans}, it follows that $w\in \Lang{\mathcal{A}'^{S}}= \Lang{\mathcal{B}^S}$.
        
        \item [$\Leftarrow$]
        The other implication is proved similarly.
    \end{itemize}

\medskip
To complete the proof, it remains to study the size of $\mathcal{B}$. The { number of states}  of $\mathcal{B}^S$ is equal to $2^{|Q^S|}$. 
For each $\langle c,r\rangle\in \calAlpha\times \retAlpha$, there are $2^{|\procAlpha^{\langle c,r\rangle}|}$ automata $\mathcal{B}^\mathcal{J}$, each with a {number of states}  $\prod_{J\in \procAlpha ^{\langle c,r\rangle}} 2 ^{|Q^J|} = 2^{\sum|Q^J|}$. Hence,
\begin{center}
    $|\vraStates{B}| = 2 ^{|Q^S|} + \sum_{\langle c,r\rangle \in \calAlpha \times \retAlpha}2^{|\procAlpha^{\langle c,r\rangle}|}\cdot 2^{\sum_{J \in \procAlpha^{\langle c,r\rangle}}|Q^J|} = 2^{\mathcal{O}({|\vraStates{A}|})}$.
\end{center} 
Since the number of transitions $|\vraTrans{B}|$ of $\mathcal{B}$ is in $\mathcal{O}(|\vraStates{B}|^2\cdot |\procAlpha'|)= 2^{\mathcal{O}({|\mathcal{A}|})}$ ($|\intAlpha|$ is supposed constant and $|\procAlpha'|\leq |\vraStates{B}|$), we conclude that $|\mathcal{B}|= 2^{\mathcal{O}(|\mathcal{A}|)}$.
\qed 
\end{proof}

 \begin{figure}[t]
        \centering
        \begin{subfigure}{\textwidth}
            \centering
            \begin{tikzpicture}[x=0.75pt,y=0.75pt,yscale=-1,xscale=1]

\draw    (56.13,40.44) -- (74.9,40.89) ;
\draw [shift={(76.9,40.93)}, rotate = 181.38] \arrow    ;
\draw    (104.9,40.93) -- (163.61,40.93) ;
\draw [shift={(165.61,40.93)}, rotate = 180] \arrow    ;
\draw    (90.9,54.93) -- (163.63,102.78) ;
\draw [shift={(165.3,103.88)}, rotate = 213.34] \arrow    ;
\draw    (179.3,54.83) -- (179.3,87.88) ;
\draw [shift={(179.3,89.88)}, rotate = 270] \arrow    ;
\draw    (187.41,114.84) .. controls (197.4,138.63) and (161.51,138.63) .. (171.56,118.1) ;
\draw [shift={(172.42,116.49)}, rotate = 120.01] \arrow    ;
\draw   (76.9,40.93) .. controls (76.9,33.2) and (83.17,26.93) .. (90.9,26.93) .. controls (98.63,26.93) and (104.9,33.2) .. (104.9,40.93) .. controls (104.9,48.67) and (98.63,54.93) .. (90.9,54.93) .. controls (83.17,54.93) and (76.9,48.67) .. (76.9,40.93) -- cycle ;
\draw   (165.3,40.83) .. controls (165.3,33.1) and (171.57,26.83) .. (179.3,26.83) .. controls (187.03,26.83) and (193.3,33.1) .. (193.3,40.83) .. controls (193.3,48.57) and (187.03,54.83) .. (179.3,54.83) .. controls (171.57,54.83) and (165.3,48.57) .. (165.3,40.83) -- cycle ;
\draw   (167.49,40.83) .. controls (167.49,34.31) and (172.78,29.02) .. (179.3,29.02) .. controls (185.82,29.02) and (191.11,34.31) .. (191.11,40.83) .. controls (191.11,47.36) and (185.82,52.65) .. (179.3,52.65) .. controls (172.78,52.65) and (167.49,47.36) .. (167.49,40.83) -- cycle ;
\draw   (165.3,103.88) .. controls (165.3,96.15) and (171.57,89.88) .. (179.3,89.88) .. controls (187.03,89.88) and (193.3,96.15) .. (193.3,103.88) .. controls (193.3,111.61) and (187.03,117.88) .. (179.3,117.88) .. controls (171.57,117.88) and (165.3,111.61) .. (165.3,103.88) -- cycle ;

\draw (29.83,29.67) node [anchor=north west][inner sep=0.75pt]   [align=left] {$\displaystyle \mathcal{B}^{S}$:};
\draw (102.47,23) node [anchor=north west][inner sep=0.75pt]   [align=left] {$\displaystyle \{R\} ,\{R,T\}$};
\draw (101.7,64.1) node [anchor=north west][inner sep=0.75pt]  [rotate=-33.33] [align=left] {$\displaystyle a,\{T\} ,\{\}$};
\draw (84,38) node [anchor=north west][inner sep=0.75pt]  [font=\small] [align=left] {$\displaystyle s_{0}$};
\draw (172.4,38) node [anchor=north west][inner sep=0.75pt]  [font=\small] [align=left] {$\displaystyle s_{1}$};
\draw (173,100) node [anchor=north west][inner sep=0.75pt]  [font=\small] [align=left] {$\displaystyle \perp $};
\draw (147,133.87) node [anchor=north west][inner sep=0.75pt]   [align=left] {$\displaystyle \Sigma _{int} \cup \Sigma _{proc}$};
\draw (178.6,54.87) node [anchor=north west][inner sep=0.75pt]   [align=left] {$\displaystyle  \begin{array}{l}
\Sigma _{int} \cup \\
\Sigma _{proc}
\end{array}$};

\end{tikzpicture}
            \vspace{-2em}
            \caption{Starting automaton $\mathcal{B}^S$ obtained with \autoref{const:complete-dfa-A^J'}. }
            \label{fig:codet-complete-rest}
        \end{subfigure}
        
        \begin{subfigure}{\textwidth}
            \centering
            \begin{tikzpicture}[x=0.75pt,y=0.75pt,yscale=-1,xscale=1]

\draw   (220,194.19) .. controls (220,186.73) and (235.11,180.69) .. (253.75,180.69) .. controls (272.39,180.69) and (287.5,186.73) .. (287.5,194.19) .. controls (287.5,201.64) and (272.39,207.69) .. (253.75,207.69) .. controls (235.11,207.69) and (220,201.64) .. (220,194.19) -- cycle ;
\draw   (91,128.19) .. controls (91,120.73) and (106.11,114.69) .. (124.75,114.69) .. controls (143.39,114.69) and (158.5,120.73) .. (158.5,128.19) .. controls (158.5,135.64) and (143.39,141.69) .. (124.75,141.69) .. controls (106.11,141.69) and (91,135.64) .. (91,128.19) -- cycle ;
\draw   (93,128.19) .. controls (93,121.84) and (107.21,116.69) .. (124.75,116.69) .. controls (142.29,116.69) and (156.5,121.84) .. (156.5,128.19) .. controls (156.5,134.54) and (142.29,139.69) .. (124.75,139.69) .. controls (107.21,139.69) and (93,134.54) .. (93,128.19) -- cycle ;
\draw   (91,193.69) .. controls (91,186.23) and (106.11,180.19) .. (124.75,180.19) .. controls (143.39,180.19) and (158.5,186.23) .. (158.5,193.69) .. controls (158.5,201.14) and (143.39,207.19) .. (124.75,207.19) .. controls (106.11,207.19) and (91,201.14) .. (91,193.69) -- cycle ;
\draw   (90.5,259.19) .. controls (90.5,251.73) and (105.61,245.69) .. (124.25,245.69) .. controls (142.89,245.69) and (158,251.73) .. (158,259.19) .. controls (158,266.64) and (142.89,272.69) .. (124.25,272.69) .. controls (105.61,272.69) and (90.5,266.64) .. (90.5,259.19) -- cycle ;
\draw   (92.5,259.19) .. controls (92.5,252.84) and (106.71,247.69) .. (124.25,247.69) .. controls (141.79,247.69) and (156,252.84) .. (156,259.19) .. controls (156,265.54) and (141.79,270.69) .. (124.25,270.69) .. controls (106.71,270.69) and (92.5,265.54) .. (92.5,259.19) -- cycle ;
\draw   (219.5,128.19) .. controls (219.5,120.73) and (234.61,114.69) .. (253.25,114.69) .. controls (271.89,114.69) and (287,120.73) .. (287,128.19) .. controls (287,135.64) and (271.89,141.69) .. (253.25,141.69) .. controls (234.61,141.69) and (219.5,135.64) .. (219.5,128.19) -- cycle ;
\draw   (348.5,128.19) .. controls (348.5,120.73) and (363.61,114.69) .. (382.25,114.69) .. controls (400.89,114.69) and (416,120.73) .. (416,128.19) .. controls (416,135.64) and (400.89,141.69) .. (382.25,141.69) .. controls (363.61,141.69) and (348.5,135.64) .. (348.5,128.19) -- cycle ;

\draw    (158.4,128.19) -- (217.5,128.19) ;
\draw [shift={(219.5,128.19)}, rotate = 180] \arrow;
\draw    (253.75,180.69) -- (253.28,143.69) ;
\draw [shift={(253.25,141.69)}, rotate = 89.27] \arrow;
\draw    (157,132.38) -- (231.35,182.56) ;
\draw [shift={(233,183.69)}, rotate = 214.18] \arrow;
\draw    (220,194.19) -- (161,194.19) ;
\draw [shift={(158.7,194.19)}, rotate = 360] \arrow;
\draw    (124.75,207.19) -- (124.75,244) ;
\draw [shift={(124.75,245.5)}, rotate = 270] \arrow;
\draw    (287.5,194.19) -- (380.5,142.66) ;
\draw [shift={(382.25,141.69)}, rotate = 151.01] \arrow;
\draw    (158,259.19) -- (382.5,259.19) -- (382.25,143.69) ;
\draw [shift={(382.25,141.69)}, rotate = 89.88] \arrow;
\draw    (227,186.19) -- (155.14,136.33) ;
\draw [shift={(153.5,135.19)}, rotate = 34.76] \arrow;
\draw    (155,199.69) -- (215,259.19) -- (382.5,259.19) -- (382.25,143.69) ;
\draw [shift={(382.25,141.69)}, rotate = 89.88] \arrow;
\draw    (67.02,127.91) -- (88.98,128.08) ;
\draw [shift={(90.98,128.09)}, rotate = 180.44] \arrow;
\draw    (124.75,141.69) -- (124.75,177.69) ;
\draw [shift={(124.75,180)}, rotate = 270] \arrow;

\draw    (132.25,272.51) .. controls (142.14,296.35) and (106.25,295.24) .. (116.38,274.72) ;
\draw [shift={(117.9,272.7)}, rotate = 120.26] \arrow;
\draw    (117.43,115.17) .. controls (107.57,91.33) and (143.45,92.46) .. (133.3,112.98) ;
\draw [shift={(132.43,114.8)}, rotate = 300.3] \arrow;
\draw    (144.4,117.19) .. controls (258.43,87.83) and (258.5,87.69) .. (380.4,114.28) ;
\draw [shift={(382.25,114.69)}, rotate = 192.31] \arrow;
\draw    (287,128.19) -- (346.7,127.99) ;
\draw [shift={(348.7,127.98)}, rotate = 179.81] \arrow;
\draw    (262.25,207.01) .. controls (272.14,230.85) and (236.25,229.74) .. (246.38,209.22) ;
\draw [shift={(247.26,207.6)}, rotate = 120.26] \arrow;
\draw    (409.22,120.4) .. controls (432.92,110.22) and (434.17,145.16) .. (413.61,135.24) ;
\draw [shift={(411.99,134.39)}, rotate = 29.54] \arrow;

\draw (221.9,187.17) node [anchor=north west][inner sep=0.75pt]  [font=\small] [align=left] {$\displaystyle \{r_{0} ,r_{1}\}\{t_{0}\}$};
\draw (100.4,121.17) node [anchor=north west][inner sep=0.75pt]  [font=\small] [align=left] {$\displaystyle \{r_{0}\}\{t_{0}\}$};
\draw (232.9,121.17) node [anchor=north west][inner sep=0.75pt]  [font=\small] [align=left] {$\displaystyle \{r_{1}\}\{\}$};
\draw (182.6,116.67) node [anchor=north west][inner sep=0.75pt]   [align=left] {$\displaystyle a$};
\draw (255.1,158.67) node [anchor=north west][inner sep=0.75pt]   [align=left] {$\displaystyle a$};
\draw (188.37,134.88) node [anchor=north west][inner sep=0.75pt]  [rotate=-34.06] [align=left] {$\displaystyle \{R,T\}$};
\draw (100.4,187.17) node [anchor=north west][inner sep=0.75pt]  [font=\small] [align=left] {$\displaystyle \{r_{1}\}\{t_{0}\}$};
\draw (235.1,224.67) node [anchor=north west][inner sep=0.75pt]   [align=left] {$\displaystyle \{R,T\}$};
\draw (100.26,148.68) node [anchor=north west][inner sep=0.75pt]  [rotate=-1.27] [align=left] {$\displaystyle \{R\}$};
\draw (106.4,251.67) node [anchor=north west][inner sep=0.75pt]  [font=\small] [align=left] {$\displaystyle \{\}\{t_{0}\}$};
\draw (65.89,208.8) node [anchor=north west][inner sep=0.75pt]  [rotate=-0.16] [align=left] {$\displaystyle  \begin{array}{{r}}
\{R\} ,\{T\} ,\\
\{R,T\}
\end{array}$};
\draw (29.5,118.67) node [anchor=north west][inner sep=0.75pt]   [align=left] {$\displaystyle \mathcal{B}^{\{T\}}$:};
\draw (367.9,120.17) node [anchor=north west][inner sep=0.75pt]  [font=\small] [align=left] {$\displaystyle \{\}\{\}$};
\draw (283,109) node [anchor=north west][inner sep=0.75pt]   [align=left] {$\displaystyle \intAlpha\cup\procAlpha'$};
\draw (71.6,291.17) node [anchor=north west][inner sep=0.75pt]   [align=left] {$\displaystyle \{R\} ,\{T\} ,\{R,T\}$};
\draw (308.54,163) node [anchor=north west][inner sep=0.75pt]  [rotate=-331.39] [align=left] {$\displaystyle \{\}$};
\draw (166.6,259.67) node [anchor=north west][inner sep=0.75pt]   [align=left] {$\displaystyle a,\{\}$};
\draw (254.91,78.97) node [anchor=north west][inner sep=0.75pt]  [rotate=-359.81] [align=left] {$\displaystyle \{\}$};
\draw (185.91,208.95) node [anchor=north west][inner sep=0.75pt]  [rotate=-43.31] [align=left] {$\displaystyle a,\{\}$};
\draw (113.15,82) node [anchor=north west][inner sep=0.75pt]   [align=left] {$\displaystyle \{T\}$};
\draw (178,155) node [anchor=north west][inner sep=0.75pt]  [rotate=-34.26] [align=left] {$\displaystyle \{T\}$};
\draw (182,180) node [anchor=north west][inner sep=0.75pt]   [align=left] {$\displaystyle \{R\}$};
\draw (406,102) node [anchor=north west][inner sep=0.75pt]   [align=left] {$\displaystyle \intAlpha\cup\procAlpha'$};

\end{tikzpicture}
            \vspace{-2em}
            \caption{Automaton $\mathcal{B}^{\{T\}}$ obtained with \autoref{const:B^J}.}
            \label{fig:codet-complete-const-b^cr}
        \end{subfigure}
        
        \caption{Illustration of \autoref{th:codet-complete-vra}.}
        \label{fig:ex-codet-complete-vra}
        \vspace{-1em}
    \end{figure}

\begin{example}\label{ex:codet-complete-vra}

    As an example of the constructions provided in \autoref{th:codet-complete-vra}, we construct a \codeterministic\ and \complete\ VRA accepting the same language as the VRA of \autoref{fig:ex-vra}. There is only one pair $(c,r)$, so $\procAlpha' = \{\{\},\{R\},\{T\},\{R,T\} \}$. The automaton $\mathcal{B}^{S}$ (in \autoref{fig:codet-complete-rest}), obtained as a copy of $\mathcal{A}'^S$, serves as an illustration of \autoref{const:complete-dfa-A^J'} (notice that no subset construction was here needed).
    In \autoref{fig:codet-complete-const-b^cr}, we show the automaton $\mathcal{B}^{\{T\}}$ obtained after applying \autoref{const:B^J}. The other automata $\mathcal{B}^{\mathcal{J}}$, $\mathcal{J} \in \{\{\},\{R\},\{R,T\}\}$, only differ from $\mathcal{B}^{\{T\}}$ by their final states. For instance, the unique final state of $\mathcal{B}^{\{\}}$ is $(\{\},\{\})$.
\end{example}

\section{\Codeterminism\ and Completeness Decision Problems}\phantomsection\label{ax:codet-comp-dp}

In this appendix, we study the complexity class of deciding whether a VRA is \codeterministic\ or \complete. Note that these complexities rely on closure properties and decision problems as discussed in \Cref{sec:ClosureProp}.

\subsection{\Codeterminism\ Decision Problem}

\begin{theorem}\label{th:codeterministic-dp}
    Deciding whether a VRA is \codeterministic\ is PTIME-complete.
\end{theorem}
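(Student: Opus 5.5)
Deciding codeterminism splits into a polynomial-time algorithm (membership in PTIME) and a logspace reduction from a PTIME-hard problem (hardness).

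\textbf{Upper bound.} By \autoref{def:codeterministic-vra}, $\mathcal{A}$ is codeterministic iff for every pair of distinct $J,J'\in\procAlpha$ with $f(J)=f(J')$ we have $\Lang{\mathcal{A}^J}\cap\Lang{\mathcal{A}^{J'}}=\varnothing$. There are at most $|\procAlpha|^2$ such pairs. For each pair, I will build a VRA whose language is exactly this intersection and test it for emptiness.

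The first attempt would reuse the intersection construction of \autoref{th:closure} with $\mathcal{A}_1=\mathcal{A}_2=\mathcal{A}$ (taking disjoint copies of the procedural alphabet). I would build the product VRA $\mathcal{A}\otimes\mathcal{A}$ once, with procedural symbols $\langle K_1,K_2\rangle$ for $f(K_1)=f(K_2)$ and automata $\mathcal{A}^{K_1}\times\mathcal{A}^{K_2}$. Its size is $\mathcal{O}(|\mathcal{A}|^2)$, and by the intersection argument $\Lang{(\mathcal{A}\otimes\mathcal{A})^{\langle K_1,K_2\rangle}}=\Lang{\mathcal{A}^{K_1}}\cap\Lang{\mathcal{A}^{K_2}}$. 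This equality follows by induction on depth using \autoref{prop:vra-run-to-nfa-trans}.

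I would then run the reachability algorithm of \autoref{th:decision-problem} once on this product. That algorithm computes, in linear time in the product size, the set of all procedural symbols whose recursive language is nonempty. So $\mathcal{A}$ is codeterministic iff no symbol $\langle J,J'\rangle$ with $J\neq J'$ and $f(J)=f(J')$ is marked. The total cost is polynomial, $\mathcal{O}(|\mathcal{A}|^2)$.

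\textbf{Lower bound.} I would reduce from emptiness of VRAs, which \autoref{th:decision-problem} states is PTIME-complete. (If I needed to go lower, I would reduce from a standard P-complete problem such as the monotone circuit value problem or emptiness of context-free grammars, which maps directly to VRA emptiness.)

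Given a VRA $\mathcal{A}$, I construct $\mathcal{A}'$ by adding two fresh procedural symbols $J_1,J_2$ with the same fresh link $f(J_1)=f(J_2)=\langle c_0,r_0\rangle$. I make $\mathcal{A}'^{J_1}$ a copy of $\mathcal{A}^S$ and $\mathcal{A}'^{J_2}$ another copy of $\mathcal{A}^S$. The automata of $\mathcal{A}$ are kept as they are, and they form the only other procedural symbols. The original symbols may share links with each other, which would break codeterminism independently of $\mathcal{A}$'s emptiness. To prevent this, I first make all original links pairwise distinct by giving each symbol $J$ its own fresh call/return pair. Renaming every call symbol this way preserves emptiness, because the language $\Lang{\cdot}$ is nonempty iff the reachability procedure marks $S$, and that marking ignores $f$.

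With these changes, the only pair of distinct symbols sharing a link is $\{J_1,J_2\}$. Then $\mathcal{A}'$ is codeterministic iff $\Lang{\mathcal{A}'^{J_1}}\cap\Lang{\mathcal{A}'^{J_2}}=\Lang{\mathcal{A}}$ (up to renaming) is empty. The construction just copies the input and relabels it, so it is logspace.

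\textbf{Main obstacle.} The delicate step is justifying that relabeling call/return symbols preserves emptiness and the recursive languages up to a homomorphism. I would argue this by structural induction, using \autoref{prop:vra-run-to-nfa-trans}.
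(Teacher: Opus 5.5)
Your proposal is correct and follows the paper's proof. Membership comes from the intersection construction combined with the linear-time emptiness algorithm. Hardness comes from a logspace reduction from emptiness: every original procedural symbol gets a fresh call symbol, and two copies of $\mathcal{A}^S$ are added sharing a fresh link, so that codeterminism holds iff $\Lang{\mathcal{A}}=\varnothing$. Your variations are all sound: a single product VRA with one reachability pass, renaming returns as well as calls, and deriving emptiness preservation from the fact that the reachability fixpoint ignores $f$ (rather than the paper's induction on depth in \autoref{pro:codet-complete}).
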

\begin{proof} To check whether a VRA $\mathcal{A}$ is \codeterministic, we have to check whether for all distinct $J,J'\in \procAlpha$, $\Lang{\mathcal{A}^{J}}\cap \Lang{\mathcal{A}^{J'}} = \varnothing$. 
This is done in PTIME by using the intersection closure of VRAs (\autoref{th:closure}) and the emptiness decision problem for VRAs (\autoref{th:decision-problem}).


For the PTIME-hardness, we provide a logspace reduction from the emptiness decision problem. Intuitively, given a VRA $\mathcal{A}$, we construct a VRA $\mathcal{B}$ on a modified call alphabet in a way that checking whether $\mathcal{B}$ is \codeterministic\ reduces to checking the emptiness of only one intersection between $\mathcal{A}^S$ and a copy of it. 
    \begin{construction}\label{const:reduc-emptiness-to-codet}
        Let $\mathcal{A}=\langle\pdwAlph,\procAlpha,\AutomataSet,\mathcal{A}^S\rangle$ be a VRA with $\pdwAlph = \intAlpha\cup \calAlpha\cup \retAlpha$ its pushdown alphabet. We construct $\mathcal{B} = \langle\pdwAlph',\procAlpha',\AutomataSet',\mathcal{B}^{S}\rangle$ such that: 
        \begin{itemize}
            \item $\pdwAlph'= \intAlpha \cup \calAlpha' \cup \retAlpha'$ with $\calAlpha' =  \{c_J \mid  J \in \procAlpha 
            \}  \cup \{c'\}$ and $\retAlpha' = \retAlpha \cup  \{r'\}$, where all $c_J$, $J\in\procAlpha$, are pairwise distinct,  and $c'$ and $r'$ are distinct from the other call/return symbols.
            \item $\procAlpha' = \procAlpha\cup \{S_1,S_2\}$, with the \linkingFunction\ $f'$ defined as:  $f'(J)=\langle c_J,f_{\return}(J) \rangle $ for all $J\in \procAlpha$, and $f'(S_1)=f'(S_2)=\langle c',r' \rangle $.
            \item $\AutomataSet' = \AutomataSet\cup \{\mathcal{B}^{S_1},\mathcal{B}^{S_2}\}$, with $\mathcal{B}^{S_1}$ and $\mathcal{B}^{S_2}$ two copies of $\mathcal{A}^S$.
            \item $\mathcal{B}^{S}$ is the starting automaton, being any FA over $\intAlpha\cup \procAlpha'$.\footnote{Whatever the definition of $\mathcal{B}^{S}$, it does not impact the correctness of the reduction.}
        \end{itemize}
    \end{construction}
    
    By \autoref{def:codeterministic-vra} and by construction, $\mathcal{B}$ is \codeterministic\  iff $\Lang{\mathcal{B}^{S_1}}\cap \Lang{\mathcal{B}^{S_2}} = \varnothing$. Moreover, as $\mathcal{B}^{S_1}$ and $\mathcal{B}^{S_2}$ are copies of $\mathcal{A}^S$, $\mathcal{B}$ is \codeterministic\  iff $\Lang{\mathcal{B}^{S_1}} = \Lang{\mathcal{B}^{S_2}}= \varnothing$. Note that the construction of $\mathcal{B}$ can be done in logarithmic space. 
    It remains to show that $\Lang{\mathcal{A}} =  \varnothing$ iff $\mathcal{B}$ is \codeterministic, that is, iff $\Lang{\mathcal{B}^{S_1}} = \varnothing$. 

    \begin{property}\label{pro:codet-complete}
        For all $J \in \procAlpha$ (resp.\ for $J = S$), there exists $w \in \Lang{\mathcal{A}^J}$ iff there exists $w' \in \wm{\pdwAlph'}$ such that $w' \in \Lang{\mathcal{B}^J}$ (resp.\ $w' \in \Lang{\mathcal{B}^{S_1}}$).
    \end{property}
    \begin{proof}[of \autoref{pro:codet-complete}] We only prove the forward implication; for the converse, the argument is similar.
    We proceed by induction on depth of $w \in \wm\pdwAlph$.
    \begin{itemize}
        \item $depth(w) = 0$: As $\mathcal{B}^{J}$ is equal to $\mathcal{A}^{J}$ (resp.\ $\mathcal{B}^{S_1}$ is a copy of $\mathcal{A}^{S}$), we define $w' = w$ and get $w' \in \Lang{\mathcal{B}^{J}}$ (resp.\ $w' \in \Lang{\mathcal{B}^{S_1}}$).
        \item $depth(w) > 0$ ($w=u_0c_1w_1r_1\dots c_nw_nr_nu_n$, with $n\in \mathbb{N}_0$, $u_i\in\intAlpha^*$, $c_i\in\calAlpha$, $r_i\in \retAlpha$ and $w_i\in \wm\pdwAlph$): By \autoref{prop:vra-run-to-nfa-trans}, there exists a word $u_0K_1u_1 \dots K_nu_n \in \regLang{\mathcal{A}^J}=\regLang{\mathcal{B}^J}$ with $K_i \in \procAlpha^{\langle c_i,r_i\rangle}$ and $w_i\in \Lang{\mathcal{A}^{K_i}}$ for all $i$. 
        By induction hypothesis, for each $w_i$, there exists $w'_i \in \Lang{\mathcal{B}^{K_i}}$. As $f'(K_i) = (c_{K_i},r_i)$, by \autoref{prop:vra-run-to-nfa-trans}, we have $u_0c_{K_1}w'_1r_1 \dots c_{K_n}w'_nr_nu_n \in \Lang{\mathcal{B}^J}$ if $J \in \procAlpha$ (resp.\ $\Lang{\mathcal{B}^{S_1}}$ if $J = S$). \hfill   $\lrcorner$ 
\end{itemize}   
    \end{proof}
    With \autoref{pro:codet-complete}, we have that $\Lang{\mathcal{A}} = \Lang{\mathcal{A}^S} = \varnothing$ iff $\Lang{\mathcal{B}^{S_1}} =  \varnothing$, that is, iff $\mathcal{B}$ is \codeterministic.
    \qed
\end{proof}

\subsection{Completeness Decision Problem}
\begin{theorem}\label{th:complete-dp}
    Deciding whether a VRA is \complete\ is EXPTIME-complete.
\end{theorem}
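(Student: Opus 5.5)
We prove membership in EXPTIME and EXPTIME-hardness separately.

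\emph{Upper bound.} The first condition of \autoref{def:complete-vra} is syntactic: we check, for every state $q\in\vraStates{A}$ and every symbol $a\in\intAlpha\cup\procAlpha$, that some transition $(q,a,p)$ exists. This takes polynomial time. For the second condition, we fix a pair $\langle c,r\rangle\in\calAlpha\times\retAlpha$. Following the union construction of \autoref{th:closure}, we build a VRA $\mathcal{C}^{\langle c,r\rangle}$ that keeps all automata of $\mathcal{A}$ and whose starting automaton recognises $\bigcup_{J\in\procAlpha^{\langle c,r\rangle}}\regLang{\mathcal{A}^J}$; concretely, it is the disjoint union of copies of the $\mathcal{A}^J$. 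By \autoref{prop:vra-run-to-nfa-trans}, its language is $\bigcup_{J\in\procAlpha^{\langle c,r\rangle}}\Lang{\mathcal{A}^J}$. Its size is $\mathcal{O}(|\mathcal{A}|)$, so the universality algorithm of \autoref{th:decision-problem} decides whether this union equals $\wm\pdwAlph$ in time $2^{\mathcal{O}(|\mathcal{A}|)}$. We repeat this for the polynomially many pairs $\langle c,r\rangle$, which keeps the total in EXPTIME. If $\procAlpha^{\langle c,r\rangle}=\varnothing$, the union is empty and $\mathcal{A}$ is not complete.

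\emph{Lower bound.} We reduce in logspace from universality of VRAs, which is EXPTIME-complete by \autoref{th:decision-problem}. The target is completeness itself, so given $\mathcal{A}$ we build $\mathcal{B}$ such that $\mathcal{B}$ is complete iff $\Lang{\mathcal{A}}=\wm\pdwAlph$. The pushdown alphabet of $\mathcal{B}$ has the call alphabet $\calAlpha'=\{c_J\mid J\in\procAlpha\}\cup\{c'\}$ and the return alphabet $\retAlpha'=\retAlpha\cup\{r'\}$, with fresh symbols as in \autoref{const:reduc-emptiness-to-codet}. The construction then proceeds as follows:
\begin{itemize}
\item We add a procedural symbol $S_1$ with $f'(S_1)=\langle c',r'\rangle$, and let $\mathcal{B}^{S_1}$ be a copy of $\mathcal{A}^S$.
\item For every other pair $\langle c'',r''\rangle$, including those that pair $c'$ with an old return symbol or $c_J$ with $r'$, we add a universal automaton $\mathcal{U}^{\langle c'',r''\rangle}$: a single initial, final state carrying self-loops on every symbol of $\intAlpha\cup\procAlpha'$. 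This forces the union for each such pair to equal $\wm{\pdwAlph'}$.
\item We make every FA complete by adding a non-final sink state. Adding a sink does not change regular or recursive languages.
\end{itemize}
After this, the first condition holds trivially, every pair except $\langle c',r'\rangle$ satisfies the second condition, and $\mathcal{B}$ is complete iff $\Lang{\mathcal{B}^{S_1}}=\wm{\pdwAlph'}$.

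The main obstacle is relating $\Lang{\mathcal{B}^{S_1}}$, a language over the new alphabet, back to $\Lang{\mathcal{A}}$. Each $f_{\call}(J)$ has been renamed to $c_J$, so a call symbol $c_J$ now also carries the information of which automaton $J$ is called. This destroys universality: a word using $c_J$ together with a return symbol different from $f_{\return}(J)$ can never be accepted by $\mathcal{B}^{S_1}$. The renaming was useful for codeterminism, but here I would avoid it. I would keep the original $\calAlpha$ and $\retAlpha$ and the original linking function on $\procAlpha$, and add only $c'$ and $r'$ for $S_1$. Words over $\pdwAlph'$ may contain nested occurrences of $c'$ and $r'$ inside $\mathcal{A}$-calls, and no automaton of $\mathcal{A}$ accepts them. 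To handle this, I would add to each $\mathcal{A}^J$ and to $\mathcal{B}^{S_1}$ a procedural transition on a symbol $U$ with $f'(U)=\langle c',r'\rangle$. Here $\mathcal{B}^U$ is universal, and the transition is taken from every state into a fresh accepting sink. Then $\mathcal{B}^{S_1}$ accepts exactly $\Lang{\mathcal{A}}$ together with all words containing $c'$. For $\langle c',r'\rangle$, the union $\Lang{\mathcal{B}^{S_1}}\cup\Lang{\mathcal{B}^U}$ is then trivially universal, so I would instead place $U$ under a separate fresh pair $\langle c'',r''\rangle$. I would then check, by induction on depth using \autoref{prop:vra-run-to-nfa-trans}, that $\Lang{\mathcal{B}^{S_1}}\cap\wm\pdwAlph=\Lang{\mathcal{A}}$ and that every word outside $\wm\pdwAlph$ is accepted. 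Together these give the equivalence between completeness of $\mathcal{B}$ and universality of $\mathcal{A}$.
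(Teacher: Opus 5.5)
Your upper bound is correct and is the same as the paper's. The hardness part, however, has a genuine gap.

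\textbf{The gap.} In your final construction the test pair $\langle c',r'\rangle$ has $S_1$ as its only procedural symbol, so completeness of $\mathcal{B}$ forces $\Lang{\mathcal{B}^{S_1}}=\wm{\pdwAlph'}$. Once you move $U$ to the fresh pair $\langle c'',r''\rangle$, the accepting sink is reachable only through $U$-transitions. Every transition of $\mathcal{B}^{S_1}$ on $S_1$, or on the universal automata attached to the mixed pairs $\langle c',r\rangle$ and $\langle c,r'\rangle$, goes to the non-final completion sink. By Proposition~\ref{prop:vra-run-to-nfa-trans}, words such as $c'r'$, $c'r$ or $c\,r'$ (with $c\in\calAlpha$, $r\in\retAlpha$) are therefore not in $\Lang{\mathcal{B}^{S_1}}$. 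So $\mathcal{B}$ is never complete, and the implication ``$\mathcal{A}$ universal $\Rightarrow$ $\mathcal{B}$ complete'' fails.

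In particular, your claim that every word outside $\wm\pdwAlph$ is accepted is false. Even for words containing $c''$ it holds only when $\mathcal{A}$ is universal. A $c''$-segment nested inside an old call $c\,x\,r$ helps only if the surrounding copy of $\mathcal{A}^S$ can read some $J\in\procAlpha^{\langle c,r\rangle}$ and still reach a final state.

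\textbf{Your diagnosis applies to the paper too.} The paper adds one fresh call $c'$ and links a completed copy of $\mathcal{A}^S$ to every pair in $(\calAlpha\cup\{c'\})\times\retAlpha$. It then infers completeness from $\Lang{\mathcal{A}^S}=\wm\pdwAlph$, although the requirement is $\wm{\pdwAlph'}$. There, $c'r$ is accepted by no automaton, since all $S^{\langle c',r\rangle}$-transitions lead to bin states. So your concern is legitimate; it is your remedy that is incomplete.

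\textbf{The repair.} Place the escape transitions on \emph{every} procedural symbol whose pair contains a fresh symbol, including $S_1$ itself. These go from every state of every $\mathcal{B}^J$ and of $\mathcal{B}^{S_1}$ into an accepting sink that loops on all symbols. In the paper's setting the same fix applies to the symbols $S^{\langle c',r\rangle}$. With this, $U$ and $\langle c'',r''\rangle$ become unnecessary.

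For ``complete $\Rightarrow$ universal'': escapes can only fire on words containing fresh symbols, and the universal automata of old pairs lead only to the rejecting sink. Hence $\Lang{\mathcal{B}^{S_1}}\cap\wm\pdwAlph=\Lang{\mathcal{A}^S}$ still holds.

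For the converse, assume $\Lang{\mathcal{A}}=\wm\pdwAlph$. Prove simultaneously, by induction on depth, the following for each $x\in\wm{\pdwAlph'}$:
\begin{itemize}
\item $x$ lies in the language of each universal automaton;
\item $x\in\Lang{\mathcal{B}^J}$ whenever the word obtained from $x$ by erasing its outermost fresh-symbol segments lies in $\Lang{\mathcal{A}^J}$, and likewise for $\mathcal{B}^{S_1}$ with $\mathcal{A}^S$.
\end{itemize}
For the second item, follow the run of $\mathcal{A}^J$ on the erased word. Escape to the sink at the first top-level fresh segment, and read the remaining segments via the universal automata or $S_1$, using the induction hypothesis. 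Since the erased word always lies in $\wm\pdwAlph=\Lang{\mathcal{A}^S}$, this yields $\Lang{\mathcal{B}^{S_1}}=\wm{\pdwAlph'}$.
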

\begin{proof}
    For EXPTIME-easiness, by \autoref{def:complete-vra}, we have to check whether each FA of $\mathcal{A}$ is complete (which can be done in polynomial time) and for all pairs $\langle c,r \rangle$, whether  $\bigcup_{J\in \procAlpha^{\langle c,r\rangle}} \Lang{\mathcal{A}^J} = \wm\pdwAlph$. The latter check can be done in exponential time, by using \Cref{th:closure,th:decision-problem}. 

    For EXPTIME-hardness, we provide a polynomial reduction from the universality decision problem for VRAs. Given a VRA $\mathcal{A}$, we must construct an automaton $\mathcal{B}$ which is \complete\ iff  $\mathcal{A}$ is universal. Intuitively, we modify $\mathcal{A}$ in order to achieve the sufficient and necessary condition: 
    \begin{itemize}  
        \item[$\Leftarrow$] First, we complete all FAs by adding a bin state and transitions to it. Then, for all pairs $\langle c,r\rangle$, we add to $\mathcal{B}$ an automaton linked to this pair which is a copy of $\mathcal{A}^S$. This ensures $\mathcal{B}$ to be \complete\ when $\mathcal{A}$ is universal.
        
        \item[$\Rightarrow$]
        We introduce a new call symbol $c'$. Then, we construct $\mathcal{B}$ such that, for every pair $\langle c', r \rangle$, there exists a unique automaton $\mathcal{B}^J$ with $f(J) = \langle c', r \rangle$. Moreover, this $\mathcal{B}^J$ is a copy of $\mathcal{A}^S$ with exactly the same recursive language. Hence, \(\mathcal{A}\) must be universal for $\mathcal{B}$ to be \complete.
    \end{itemize}


    \begin{construction}\label{const:reduction-universality-complete}
        Let $\mathcal{A}=\langle\pdwAlph,\procAlpha,\AutomataSet,\mathcal{A}^S\rangle$ be a VRA. We construct the VRA $\mathcal{B}=(\pdwAlph',\procAlpha',\AutomataSet',\mathcal{B}^{S})$ such that:
    \begin{itemize}
        \item $\pdwAlph' = \intAlpha \cup \calAlpha' \cup \retAlpha$, with $\calAlpha'=\calAlpha\cup\{c'\}$, and $c'$ distinct from symbols of $\calAlpha$.
        \item $\procAlpha'=\procAlpha\cup \mathcal{S}$, where $\mathcal{S}=\{S^{\langle c,r\rangle} \mid c\in\calAlpha', r\in \retAlpha\}$ and $f'(S^{\langle c,r\rangle})=\langle c,r\rangle$ for all $\langle c,r\rangle \in \calAlpha' \times \retAlpha$.
        \item $\AutomataSet' = \AutomataSet^{\procAlpha} \cup \AutomataSet^{\mathcal{S}}$ is the set of automata defined by: 
        \begin{itemize}
            \item $\AutomataSet^{\procAlpha} = \{\mathcal{B}^J \mid J \in \procAlpha\}$, where each $\mathcal{B}^J$ is a copy of $\mathcal{A}^J$, with an additional bin state and transitions in a way to get a complete FA.
            \item $\AutomataSet^{\mathcal{S}} = \{\mathcal{B}^{S^{\langle c,r\rangle}} \mid S^{\langle c,r\rangle} \in \mathcal{S}\}$, where each $\mathcal{B}^{S^{\langle c,r\rangle}}$ is a copy of $\mathcal{A}^S$, with an additional bin state and transitions in a way to get a complete FA.
        \end{itemize}
        \item $\mathcal{B}^{S}$ is the starting automaton, being any complete FA over $\intAlpha\cup \procAlpha'$.
    \end{itemize}
    \end{construction}
    
    \autoref{const:reduction-universality-complete} is done in polynomial time (as $\mathcal{O}(|\calAlpha|\times |\retAlpha|)$ copies of $\mathcal{A}^S$ are created).  By construction, it is clear that $\Lang{\mathcal{B}^J}=\Lang{\mathcal{A}^J}$ for all $J\in \procAlpha$. It is also straightforward to see that  $\Lang{\mathcal{B}^{S^{\langle c,r\rangle}}}=\Lang{\mathcal{A}^S}$ for all $S^{\langle c,r\rangle}\in\mathcal{S}$. 
    It remains to prove that $\mathcal{B}$ is \complete $\Leftrightarrow\Lang{\mathcal{A}}=\wm\pdwAlph$:
    \begin{itemize}
        \item[$\Leftarrow$] As each FA of $\mathcal{B}$ is complete by construction, the first condition of \autoref{def:complete-vra} is met. For the second condition, note that for all $\langle c,r\rangle\in\calAlpha'\times\retAlpha$, $\mathcal{B}^{S^{\langle c,r\rangle}}\in \bigcup_{J\in \procAlpha'^{\langle c,r\rangle}} \mathcal{B}^J$. Since $\Lang{\mathcal{B}^{S^{\langle c,r\rangle}}}=\Lang{\mathcal{A}^S}=\wm\pdwAlph$ (by hypothesis), the second condition is also met. 
        
        \item[$\Rightarrow$] By construction, for all $r\in \retAlpha$, we have $\procAlpha^{\langle c',r \rangle}=\{\mathcal{B}^{S^{\langle c',r \rangle}}\}$. By hypothesis, we have $\bigcup_{J\in \procAlpha^{\langle c',r \rangle}} \Lang{\mathcal{B}^J}=\Lang{\mathcal{B}^{S^{\langle c',r \rangle}}}=\wm\pdwAlph$. Since $\Lang{\mathcal{B}^{S^{\langle c',r \rangle}}}=\Lang{\mathcal{A}^S}$, we have that $\mathcal{A}$ is universal. \qed
    \end{itemize}
\end{proof}

\section{Closure Properties of VRAs}\phantomsection\label{ax:closure-properties}

\Cref{tab:vra-vpa} and \autoref{th:closure} present the state complexity to construct the operation closures of VRAs. In this section, we provide the proof and construction for each operation.

\closure*

We assume, without loss of generality, that the input alphabets of $\mathcal{A}_1$ and $\mathcal{A}_2$ are the same, and that their \proceduralAlphabet s are disjoint.



\subsection{Concatenation Closure}

    Intuitively, to construct $\mathcal{B}$ such that $\Lang{\mathcal{B}}=L_1\cdot L_2$, we proceed as follows: the FAs of $\mathcal{B}$ are copies of the ones of $\mathcal{A}_1$ and $\mathcal{A}_2$, and its  starting automaton accepts the concatenation of the \regularLanguage s $\regLang{\mathcal{A}_1^S}$ and $\regLang{\mathcal{A}_2^S}$. 
    
    \begin{construction}
    \label{const:concat}
        Let $\mathcal{A}_1=\langle\pdwAlph,\Sigma_{\proc 1},\AutomataSet_1,\mathcal{A}_1^S\rangle$ and $\mathcal{A}_2=\langle\pdwAlph,\Sigma_{\proc 2},\AutomataSet_2,\mathcal{A}_2^S\rangle$ be two VRAs. We construct the VRA $\mathcal{B}=\langle\pdwAlph,\Sigma_{\proc 1}\cup \Sigma_{\proc 2}, \AutomataSet_1 \cup \AutomataSet_2, \mathcal{B}^S\rangle$, with $\mathcal{B}^S$ an FA accepting the \regularLanguage\ $\regLang{\mathcal{B}^S}=\regLang{\mathcal{A}_1^S}\cdot \regLang{\mathcal{A}_2^S}$ \cite{sipser1996introduction}.
    \end{construction}

    \begin{property}\label{pro:concat-closure}
        $\mathcal{B}$ accepts $\Lang{\mathcal{B}}=\Lang{\mathcal{A}_1}\cdot \Lang{\mathcal{A}_2}$. 
    \end{property}
    \begin{proof}[of \autoref{pro:concat-closure}]
        Knowing that $\regLang{\mathcal{B}^S}=\regLang{\mathcal{A}_1^S}\cdot \regLang{\mathcal{A}_2^S}$, it is straightforward to prove that, for all $w\in \wm\pdwAlph$, $w\in \Lang{\mathcal{A}_1}\cdot \Lang{\mathcal{A}_2}  \Leftrightarrow w\in \Lang{\mathcal{B}}$. 
        \begin{itemize}
            \item[$\Rightarrow$] Using \autoref{prop:vra-run-to-nfa-trans}, since $w\in  \Lang{\mathcal{A}^S_1}\cdot \Lang{\mathcal{A}^S_2} $, there exists word in the \regularLanguage\ $\regLang{\mathcal{A}^S_1}\cdot \regLang{\mathcal{A}^S_2}=\regLang{\mathcal{B}^S}$ obtained by replacing each factor $cw'r$ (with $c\in \calAlpha$, $r\in \retAlpha$, $w'\in \wm\pdwAlph$) of $w$ by a procedural symbol of $\Sigma_{\proc 1}$ or $\Sigma_{\proc 2}$. Since all FAs of $\mathcal{B}$ are copies of those of $\mathcal{A}_1$ and $\mathcal{A}_2$, by \autoref{prop:vra-run-to-nfa-trans}, it follows that $w\in \Lang{\mathcal{B^S}}$.
            
            \item[$\Leftarrow$] The other implication is proved similarly. \hfill   $\lrcorner$
        \end{itemize}
        
            
    \end{proof}    


    Since $\mathcal{B}^S$ is constructed such that $\regLang{\mathcal{B}^S}=\regLang{\mathcal{A}_1^S}\cdot \regLang{\mathcal{A}_2^S}$, we have $|\mathcal{B}^S|= \mathcal{O}(|\mathcal{A}^S_1|+|\mathcal{A}^S_2|)$  \cite{sipser1996introduction}. Clearly, as all FAs of $\mathcal{B}$ are copies of those of $\mathcal{A}_1$ and $\mathcal{A}_2$, it follows $|\mathcal{B}|=\mathcal{O}(|\mathcal{A}_1|+|\mathcal{A}_2|)$.  

\subsection{Kleene-$*$ Closure}


    To obtain a VRA accepting the language $L_1^*$, we simply apply the Kleene-$*$ construction of FAs on the starting automaton of $\mathcal{A}_1$. 

    \begin{construction}\label{const:kleene}
        Let $\mathcal{A}_1=\langle\pdwAlph,\Sigma_{\proc 1},\AutomataSet_1,\mathcal{A}_1^S\rangle$ be a VRA. We construct the VRA $\mathcal{B}=\langle\pdwAlph,\Sigma_{\proc 1}, \AutomataSet_1, \mathcal{B}^S\rangle$, with $\mathcal{B}^S$ an FA accepting the \regularLanguage\ $\regLang{\mathcal{B}^S}=\regLang{\mathcal{A}_1^S}^*$ \cite{sipser1996introduction}.
    \end{construction}

    Since $\mathcal{B}^S$ is constructed such that $\regLang{\mathcal{B}^S}=\regLang{\mathcal{A}_1^S}^*$, we have $|\mathcal{B}^S|= \mathcal{O}(|\mathcal{A}^S_1|)$ \cite{sipser1996introduction}, and thus  $|\mathcal{B}|=\mathcal{O}(|\mathcal{A}_1|)$. It is straightforward to prove that $\Lang{\mathcal{B}}=\Lang{\mathcal{A}_1}^*$ with an argument similar to the proof of \autoref{pro:concat-closure}.

\subsection{Union Closure}

    To construct $\mathcal{B}$ such that $\Lang{\mathcal{B}}=L_1\cup L_2$, we proceed as for the concatenation. 
    
    \begin{construction}
        Let $\mathcal{A}_1=\langle\pdwAlph,\Sigma_{\proc 1},\AutomataSet_1,\mathcal{A}_1^S\rangle$ and $\mathcal{A}_2=\langle\pdwAlph,\Sigma_{\proc 2},\AutomataSet_2,\mathcal{A}_2^S\rangle$ be two VRAs. We construct the VRA $\mathcal{B}=\langle\pdwAlph,\Sigma_{\proc 1}\cup \Sigma_{\proc 2}, \AutomataSet_1 \cup \AutomataSet_2, \mathcal{B}^S\rangle$, with $\mathcal{B}^S$ an FA accepting the \regularLanguage\ $\regLang{\mathcal{B}^S}=\regLang{\mathcal{A}_1^S}\cup \regLang{\mathcal{A}_2^S}$ \cite{sipser1996introduction}.
    \end{construction}

    Since $\mathcal{B}^S$ is constructed such that $\regLang{\mathcal{B}^S}=\regLang{\mathcal{A}_1^S}\cup \regLang{\mathcal{A}_2^S}$, we have $|\mathcal{B}^S|= \mathcal{O}(|\mathcal{A}^S_1|+|\mathcal{A}^S_2|)$  \cite{sipser1996introduction}. It follows $|\mathcal{B}|=\mathcal{O}(|\mathcal{A}_1|+|\mathcal{A}_2|)$. It is straightforward to prove that $\Lang{\mathcal{B}}=\Lang{\mathcal{A}_1}\cup \Lang{\mathcal{A}_2}$ with an argument similar to the proof of \autoref{pro:concat-closure}.

\subsection{Intersection Closure}
\label{subsec:intersection}
    For the intersection closure, to construct correctly the \proceduralTransition s, we need to compute the intersection of each pair of \recursiveLanguage s $\Lang{\mathcal{A}^{J_1}_1}$ and $\Lang{\mathcal{A}^{J_2}_2}$ of $\mathcal{A}_1$ and $\mathcal{A}_2$.   
    Intuitively, we define the new \proceduralSymbol s $\langle J_1,J_2 \rangle\in \Sigma_{\proc 1}\times \Sigma_{\proc 2}$, replace all transitions $q_1\xrightarrow{J_1}p_1 \in \vraTrans{A_\text{1}}$ and $q_2\xrightarrow{J_2}p_2 \in \vraTrans{A_\text{2}}$, respectively by $q_1\xrightarrow{\langle J_1,J_2 \rangle}p_1$ and $q_2\xrightarrow{\langle J_1,J_2 \rangle}p_2$, and we finally  construct $\mathcal{B}^{\langle J_1,J_2 \rangle}$ equal to the Cartesian product of $\mathcal{A}^{J_1}_1$ and $\mathcal{A}^{J_2}_2$. This will ensure that $\Lang{\mathcal{B}^{\langle J_1,J_2 \rangle}}=\Lang{\mathcal{A}^{J_1}_1}\cap \Lang{\mathcal{A}^{J_2}_2}$. Let us detail the construction.

    \begin{construction}\label{const:intersection-vra}
        Let $\mathcal{A}_1 = \langle\pdwAlph,\Sigma_{\proc 1},\AutomataSet_1,\mathcal{A}_1^S\rangle$ and $\mathcal{A}_2=\langle\pdwAlph,\Sigma_{\proc 2},\AutomataSet_2,\mathcal{A}_2^S\rangle$ be two VRAs such that $\mathcal{A}^{S}_i=\langle\intAlpha \cup \Sigma_{\proc i},Q^S_i,I^S_i,F^S_i,\delta^S_i\rangle$, for $i=1,2$. We construct  $\mathcal{B}=\langle\pdwAlph, \procAlpha, \AutomataSet, \mathcal{B}^S\rangle$ 
        as follows:
        \begin{itemize}
        \item $\procAlpha = \bigcup_{\langle c,r\rangle\in \calAlpha\times \retAlpha} \procAlpha^{\langle c,r\rangle}$, with $\procAlpha^{\langle c,r\rangle}=\Sigma_{\proc 1}^{\langle c,r\rangle} \times \Sigma_{\proc 2}^{\langle c,r\rangle}$ for all $\langle c,r\rangle\in \calAlpha\times \retAlpha$.
        \item 
        $\mathcal{B}^{S}=\langle\intAlpha \cup \procAlpha,Q^S_1 \times Q^S_2,I^S_1\times I^S_2,F^S_1\times F^S_2,\delta^S\rangle$, where $\delta^S$ is defined by: 
        \begin{itemize}
            \item for all $a\in \intAlpha$: $\langle q_1,q_2 \rangle\xrightarrow{a}\langle p_1,p_2 \rangle \in \delta^S \Leftrightarrow (q_1\xrightarrow{a}p_1 \in \delta^S_1 \wedge q_2\xrightarrow{a}p_2 \in \delta^S_2)$,
            \item for all $\langle J_1,J_2 \rangle\in \procAlpha$: $\langle q_1,q_2 \rangle\xrightarrow{\langle J_1,J_2 \rangle}\langle p_1,p_2 \rangle \in \delta^S \Leftrightarrow (q_1\xrightarrow{J_1}p_1 \in \delta^S_1 \wedge q_2\xrightarrow{J_2}p_2 \in \delta^S_2)$.
        \end{itemize}
        \item $\AutomataSet = \{\mathcal{B}^{\langle J_1,J_2 \rangle}\mid \langle J_1,J_2 \rangle \in \procAlpha\}$, where each automaton $\mathcal{B}^{\langle J_1,J_2 \rangle}$ is built similarly to $\mathcal{B}^{S}$, but with $\mathcal{A}_1^{J_1}$ (resp.\ $\mathcal{A}_2^{J_2}$) instead of $\mathcal{A}^S_1$ (resp.\ $\mathcal{A}^S_2$).
    \end{itemize}
    \end{construction}

\begin{property}\label{pro:correctness-intersection}
    For all $w\in \wm\pdwAlph$, $\langle q_1,q_2 \rangle,\langle p_1,p_2 \rangle\in \vraStates{B}$: 
    \[\left\langle \langle q_1,q_2 \rangle,\varepsilon\right \rangle \xrightarrow{w}\left\langle\langle p_1,p_2 \rangle ,\varepsilon\right \rangle\in \runsA{\mathcal{B}} \iff \left\{\begin{array}{c}
             \langle q_1,\varepsilon \rangle \xrightarrow{w}\langle p_1 ,\varepsilon\rangle \in \runsA{\mathcal{A}_1} \\
              \langle q_2,\varepsilon \rangle \xrightarrow{w}\langle p_2 ,\varepsilon\rangle \in \runsA{\mathcal{A}_2}
        \end{array}\right. .\]
    \end{property}
    \begin{proof}[of \autoref{pro:correctness-intersection}]
         First, since $w\in \wm\pdwAlph$, by \autoref{lem:stack-equivalence-vra}, $\langle q_1,q_2 \rangle$ and $\langle p_1,p_2 \rangle$ belong to the same automaton $\mathcal{B}^{\langle J_1,J_2 \rangle}\in \AutomataSet$ (resp.\ $\mathcal{B}^S$), and so do $q_1,p_1\in Q^{J_1}_1$ (resp.\ $Q_1^S$) and $q_2,p_2\in Q_2^{J_2}$ (resp.\ $Q_2^S$). In what follows, we assume that the states belong to the automaton $\mathcal{B}^{\langle J_1,J_2 \rangle}$, but the proof holds even if they belong to $\mathcal{B}^S$. 
     We prove the property by structural induction of well-matched words. 
    
    \begin{itemize}
        \item $w\in \intAlpha^*$: The property holds since $\mathcal{B}^{\langle J_1,J_2 \rangle}$ is the Cartesian product of $\mathcal{A}^{J_1}_1$ and $\mathcal{A}^{J_2}_2$.
        \item $w=cw'r$ (with $c\in\calAlpha$, $r\in \retAlpha$ and $w'\in \wm\pdwAlph$):
        \begin{itemize}
            \item[$\Rightarrow$] Thanks to \autoref{prop:vra-run-to-nfa-trans}, there exists $\langle q_1,q_2 \rangle\xrightarrow{\langle K_1,K_2 \rangle}\langle p_1,p_2 \rangle\in \delta^{\langle J_1,J_2 \rangle}$ with $\langle K_1,K_2 \rangle\in \procAlpha^{\langle c,r\rangle}$ (thus, $K_1 \in \Sigma_{\proc 1}^{\langle c,r\rangle}$ and $K_2\in \Sigma_{\proc 2}^{\langle c,r\rangle}$) and $w'\in \Lang{\mathcal{B}^{\langle K_1,K_2 \rangle}}$. Since this transition exists, there exist $q_1\xrightarrow{K_1}p_1\in \delta^{J_1}$ and $q_2\xrightarrow{K_2}p_2\in \delta^{J_2}$. Moreover, by structural induction, we deduce from the accepting \recursiveRun\ on $w'$ in $\mathcal{B}^{\langle K_1,K_2 \rangle}$ the accepting \recursiveRun s on $w'$ in $\mathcal{A}_1^{K_1}$ and $\mathcal{A}_2^{K_2}$. Since $q_1\xrightarrow{K_1}p_1\in \delta^{J_1}$ (resp.\ $q_2\xrightarrow{K_2}p_2\in \delta^{J_2}$) and $w'\in \Lang{\mathcal{A}_1^{K_1}}$ (resp.\ $w'\in \Lang{\mathcal{A}_2^{K_2}}$), it follows by \autoref{prop:vra-run-to-nfa-trans} that $\langle q_1,\varepsilon \rangle \xrightarrow{cw'r}\langle p_1 ,\varepsilon\rangle \in \runsA{\mathcal{A}_1}$ (resp.\ $\langle q_2,\varepsilon \rangle \xrightarrow{cw'r}\langle p_2 ,\varepsilon\rangle \in \runsA{\mathcal{A}_2}$). 
            \item[$\Leftarrow$] The other implication is proved similarly.
        \end{itemize}
        \item $w=w_1 w_2$ (with $w_1,w_2\in \wm\pdwAlph$): This is trivial by  induction. \hfill   $\lrcorner$
    \end{itemize}
    \end{proof}
From \autoref{pro:correctness-intersection}, it is easy to see that the existence of an accepting \recursiveRun\ on a word in $\mathcal{B}^S$ implies the existence of the accepting \recursiveRun s on this word in $\mathcal{A}^S_1$ and $\mathcal{A}^S_2$, and conversely, thus  proving that $\Lang{\mathcal{B}}=\Lang{\mathcal{A}_1}\cap \Lang{\mathcal{A}_2}$. Finally, it is clear that the size of $\mathcal{B}$ is $|\mathcal{B}| = \mathcal{O}(|\mathcal{A}_1|\cdot |\mathcal{A}_2|) $.
\subsection{Complementation Closure}

Lastly, for the complementation closure, if $\mathcal{A}_1 = \langle\pdwAlph,  \Sigma_{\proc 1}, \AutomataSet_1, \mathcal{A}_1^S\rangle$ is \codeterministic\ and \complete\ with
all its automata being complete DFAs,  we construct $\mathcal{B}=\langle\pdwAlph,  \Sigma_{\proc 1}, \AutomataSet_1, \mathcal{B}^S\rangle$, with $\mathcal{B}^{S}$ accepting the \regularLanguage\ $\regLang{\mathcal{B}^S}=\overline{\regLang{\mathcal{A}_1^S}}$ (i.e., final states of $\mathcal{B}^S$ are the non final states of $\mathcal{A}_1^S$ \cite{HU79}). If not, we first construct a \codeterministic\ and \complete\ VRA accepting $L_1$ using \autoref{th:codet-complete-vra} (note that the construction leads to a VRA with all its automata being DFAs), and then apply the previous construction. Thus, $|\mathcal{B}|=2^{\mathcal{O}(|\mathcal{A}_1|)}$ (or $|\mathcal{B}|=|\mathcal{A}_1|$ if $\mathcal{A}_1$ was already \codeterministic\ and \complete).

Let us prove that the construction of the complementation is correct, that is, for all $w\in\wm\pdwAlph$: $w\in \Lang{\mathcal{A}} \Leftrightarrow w\notin \Lang{\mathcal{B}}$. Since $w\in \wm\pdwAlph$, it can be written as $w=u_0c_1w_1r_1\dots c_nw_nr_nu_n$, with $n\in \mathbb{N}$, $u_i\in \intAlpha^*$, $c_i\in \calAlpha$, $r_i\in\retAlpha$ and $w_i\in\wm\pdwAlph$ for all $i$. 
Let $w'=u_0 K_1 \dots K_n u_n$, for some $K_i \in \procAlpha^{\langle c_i,r_i\rangle}$, $i \in [1,n]$, such that $w_i\in \Lang{\mathcal{A}^{K_i}}=\Lang{\mathcal{B}^{K_i}}$. Note that, since $\mathcal{A}$ and $\mathcal{B}$ are \codeterministic\ and \complete, for all $i\in [1,n]$, there exists exactly one $K_i \in \procAlpha^{\langle c_i,r_i\rangle}$ such that $w_i\in \Lang{\mathcal{A}^{K_i}}=\Lang{\mathcal{B}^{K_i}}$, i.e., {the way $w'$ is defined from $w$ is unique}. Therefore, by \autoref{prop:vra-run-to-nfa-trans}, we have that $w\in \Lang{\mathcal{A}} \Leftrightarrow w'\in \regLang{\mathcal{A}^S} \Leftrightarrow w'\notin \regLang{\mathcal{B}^S} \Leftrightarrow w\notin \Lang{\mathcal{B}}$.

\subsection{Closure Properties  of Codeterministic and Complete VRAs}

\Codeterministic\ and \complete\ VRAs have interesting properties that we may wish to preserve in closure constructions for VRAs, much like determinism for FAs \cite{yu1994state} and VPAs \cite{complexityVPA}.
In this section, we revisit each language-theoretic operation when the given VRAs $\mathcal{A}_1$ and $\mathcal{A}_2$ are \codeterministic\ and \complete, with all their automata being DFAs, and we want to construct the resulting VRA $\mathit{B}$ satisfying the same constraints. 

We begin with the concatenation closure. The construction is based on both \Cref{const:concat,const:intersection-vra}. 
First, to obtain a \codeterministic\ and \complete\ VRA $\mathcal{B}$, that is, with its \recursiveLanguage s forming a partition of $\wm\pdwAlph$,  we compute the intersection of all pairs of automata from $\AutomataSet_1$ and $\AutomataSet_2$, (as done in \autoref{subsec:intersection}). In this way, any automaton $\mathcal{B}^{\langle J_1,J_2\rangle}$, with $J_1\in\Sigma_{\proc 1}^{\langle c,r\rangle}$ and $J_2\in\Sigma_{\proc 2}^{\langle c,r\rangle}$, accepts the \recursiveLanguage\ $\Lang{\mathcal{A}_1^{J_1}} \cap \Lang{ \mathcal{A}_2^{J_2}}$, and has a size in $\mathcal{O}(|\mathcal{A}_1^{J_1}|\cdot |\mathcal{A}_2^{J_2}|)$. Since $\mathcal{A}_1$ and $\mathcal{A}_2$ are \codeterministic\ and \complete, this ensures that $\mathcal{B}$ is \codeterministic\ and \complete\ too: Let $c \in \calAlpha$ and $r \in \retAlpha$,
\begin{itemize}
\item for all $J_1 \neq J_1'\in\Sigma_{\proc 1}^{\langle c,r\rangle}$ and $J_2 \neq J_2'\in\Sigma_{\proc 2}^{\langle c,r\rangle}$, we have
\[
\Lang{\mathcal{B}^{\langle J_1,J_2\rangle}}\cap \Lang{\mathcal{B}^{\langle J_1',J_2'\rangle}} = \Lang{\mathcal{A}_1^{J_1}} \cap \Lang{ \mathcal{A}_2^{J_2}} \cap \Lang{\mathcal{A}_1^{J_1'}} \cap \Lang{ \mathcal{A}_2^{J_2'}},
\]
which is empty since $\Lang{\mathcal{A}_1^{J_1}} \cap \Lang{\mathcal{A}_1^{J_1'}} = \varnothing$ (and $\Lang{\mathcal{A}_2^{J_2}} \cap \Lang{\mathcal{A}_2^{J_2'}} = \varnothing$), 

\item and the union of the \recursiveLanguage s is universal: 
    \[
    \begin{aligned}
        \bigcup_{\langle J_1 , J_2 \rangle\in \Sigma_{\proc 1}^{\langle c,r\rangle} \times \Sigma_{\proc 2}^{\langle c,r\rangle}} \Lang{\mathcal{B}^{\langle J_1, J_2 \rangle}} 
        &= \bigcup_{\langle J_1 , J_2 \rangle\in \Sigma_{\proc 1}^{\langle c,r\rangle} \times \Sigma_{\proc 2}^{\langle c,r\rangle}}  \Lang{\mathcal{A}^{J_1}}\cap \Lang{\mathcal{A}^{J_2}}\\
        & = \bigcup_{J_1 \in \Sigma_{\proc 1}^{\langle c,r\rangle} } \Lang{\mathcal{A}^{J_1}} \cap \bigcup_{J_2 \in \Sigma_{\proc 2}^{\langle c,r\rangle} } \Lang{\mathcal{A}^{J_2}}\\
        & = \wm\pdwAlph \cap \wm\pdwAlph = \wm\pdwAlph.
    \end{aligned}
    \]
\end{itemize}
Moreover, since the automata of $\mathcal{A}_1$ and $\mathcal{A}_2$ are complete DFAs, this is also the case for all the automata $\mathcal{B}^{\langle J_1,J_2\rangle}$ of $\mathcal{B}$.

Finally, to obtain $\mathcal{B}^S$, as done in \autoref{subsec:intersection}, we first replace the \proceduralTransition s of $\mathcal{A}^S_1$ and $\mathcal{A}^S_2$ by their corresponding ones on the new \proceduralAlphabet\ $\Sigma_{\proc 1} \times \Sigma_{\proc 1}$.  We then compute the DFA $\mathcal{B}^S$ such that it accepts the \regularLanguage\ $\regLang{\mathcal{A}^{S}_1} \cdot \regLang{\mathcal{A}^S_2}$. Since $\mathcal{B}^S$ is a DFA, it can have a size exponential in the size of $\mathcal{A}^{S}_2$: $|\mathcal{B}^S|=\mathcal{O}(|\mathcal{A}_1^S|)\cdot 2^{\mathcal{O}(|\mathcal{A}_2^S|)}$ \cite{yu1994state}. As for \autoref{const:concat}, we get this way a VRA $\mathcal{B}$ accepting $\Lang{\mathcal{A}_1} \cdot \Lang{\mathcal{A}_2}$, and whose size is in $\mathcal{O}(|\mathcal{A}_1|)\cdot 2^{\mathcal{O}(|\mathcal{A}_2|)}$.


\medskip Let us now consider the Kleene-$*$ closure. Suppose that $\mathcal{A}_1$ is \codeterministic\ and \complete, with all its automata being DFAs. 
Following \autoref{const:kleene}, $\mathcal{B}$ stays \codeterministic\ and \complete, except maybe $\mathcal{B}^S$, which is not a DFA. For $\mathcal{B}^S$ being a DFA accepting $\regLang{\mathcal{A}_1^S}^*$,  it has been shown that $|\mathcal{B}^S|=2^{\mathcal{O}(|\mathcal{A}_1^S|)}$ \cite{yu1994state}. Therefore, $\mathcal{B}$ has size $|\mathcal{B}|=2^{\mathcal{O}(|\mathcal{A}_1|)}$.



\medskip 
For both closures of union and intersection, we proceed as for the concatenation closure. We compute the intersection of all pairs of automata from $\AutomataSet_1$ and $\AutomataSet_2$. Moreover, $\mathcal{B}^S$ is obtained with the Cartesian product of $\mathcal{A}_1^S$ and $\mathcal{A}_2^S $ and a proper definition of its set of final states to accept either the union or the intersection. Hence, the resulting VRA $\mathcal{B}$ has size $|\mathcal{B}|=\mathcal{O}(|\mathcal{A}_1|\cdot |\mathcal{A}_2|)$~\cite{yu1994state}.


\medskip
Finally, the construction remains unchanged for the complementation closure. Indeed, it requires first to transform the given VRA into a \codeterministic\ and \complete\ one, with all its automata being DFAs. Since $\mathcal{A}_1$ already satisfies these constraints, then this transformation is not mandatory, and $|\mathcal{B}|=\mathcal{O}(|\mathcal{A}_1|)$.

\section{Proof of \autoref{th:decision-problem}}\phantomsection\label{ax:emptiness-dp}

\Cref{tab:vra-vpa} and \autoref{th:decision-problem} present the complexity of several decision problems on VRAs. In this section, we provide the proof for each problem.

\emptinessdp*

The complexity classes of these problems for VRAs match those for VPAs \cite{alur2004vpl,lange2011p}. Indeed,  \autoref{th:vpa-vra} states that VRAs and VPAs are equivalent under a logspace reduction (see \autoref{ax:vra-vpa} for more details about the complexity of the constructions).
We prove the upper bound time complexities given in \autoref{th:decision-problem} by solving each decision problem individually. 

\subsection{Emptiness Decision Problem}

    Let $\mathcal{A}=(\pdwAlph, \procAlpha, \AutomataSet, \mathcal{A}^S)$ with $\vraInit{A}=\bigcup_{J\in\procAlpha\cup\{S\}} I^J$. Recall the pseudo algorithm to solve the emptiness decision problem:
        \begin{itemize}
        \item \textbf{Initialization:} $\mathit{Reach}_0 = \vraInit{A}$;
        \item \textbf{Main loop:}  Let $\mathcal{J}_i = \{J \in \procAlpha \mid F^J \cap \mathit{Reach}_i \neq \varnothing \}$: $\mathit{Reach}_{i+1} = \mathit{Reach}_i \cup \{p \in Q\mid \exists q\in \mathit{Reach}_i,a\in \intAlpha \cup \mathcal{J}_i : q\xrightarrow{a}p\in \delta \}$; 
        \item \textbf{Output:} When $\textit{Reach}_{i+1}=\textit{Reach}_i$, $\Lang{\mathcal{A}} = \varnothing$ iff $F^S\cap \textit{Reach}_i = \varnothing$. 
    \end{itemize}
\medskip

We note $\mathit{Reach}_*$ (resp.\ $\mathcal{J}_*$) the value of $\mathit{Reach}_i$ (resp.\ $\mathcal{J}_i$) when the algorithm converges, that is, when $\mathit{Reach}_{i+1}=\mathit{Reach}_i$. Since, for all $i\in \mathbb{N}$, $\mathit{Reach}_i \subseteq \vraStates{A}$ and $\mathit{Reach}_i \subseteq \mathit{Reach}_{i+1}$, the convergence to $\mathit{Reach}_*$ is guaranteed when $i=|\vraStates{A}|$. The correctness of the algorithm relies on the following properties.

\begin{restatable}{property}{correctnessEmptiness}\label{pro:correctness-emptiness}
    For all $J\in \procAlpha\cup \{S\}$: $\Lang{\mathcal{A}^J} \neq \varnothing \Leftrightarrow F^J \cap \mathit{Reach}_* \neq \varnothing$.
    \end{restatable}
    
\begin{proof}[]
    Since we use a reachability algorithm on $\intAlpha \cup \mathcal{J}_*$, it is clear that $\textit{Reach}_*$ contains all states $p\in Q$ such that there exists a \regularRun\ on a word $w'\in (\intAlpha\cup \mathcal{J}_*)^*$ from an initial state $q\in \vraInit{A}$ to the state $p$. With that, we prove \autoref{pro:correctness-emptiness}.
    \begin{itemize}
    \item[$\Rightarrow$]
    By contradiction, assume that there exist some $J \in \procAlpha \cup \{S\}$ and a word $w \in \wm\pdwAlph$ such that $w \in \Lang{\mathcal{A}^J}$ and $F^J \cap \mathit{Reach}_* = \varnothing$. We choose such a word $w=u_0c_1w_1r_1\dots c_nw_nr_nu_n\in \Lang{\mathcal{A}^J}$ (with $n\in \mathbb{N}$, $u_i\in\intAlpha^*$, $c_i\in\calAlpha$, $r_i\in\retAlpha$ and $w_i\in\wm\pdwAlph$ for all $i$) of minimal depth. 
    {Suppose that $n=0$, i.e., $w\in \intAlpha^*$, as $w \in \regLang{\mathcal{A}^J}$, it follows that $F^J \cap \mathit{Reach}_* \neq \varnothing$, a contradiction. Therefore $n>0$.} 
    By \autoref{prop:vra-run-to-nfa-trans}, $w'=u_0K_1\dots K_nu_n\in \regLang{\mathcal{A}^J}$, with $K_i \in \procAlpha^{\langle c_i,r_i\rangle}$ and $w_i\in \Lang{\mathcal{A}^{K_i}}$ for all $i\in[1,n]$. Since $\depth{w_i}<\depth{w}$ and $\Lang{\mathcal{A}^{K_i}}\neq \varnothing$
    , by minimality of depth of $w$, we have that $F^{K_i} \cap \mathit{Reach}_* \neq \varnothing$, and thus   $K_i\in \mathcal{J}_*$. This ensures the final state reached at the end of the accepting \regularRun\ on $w'$ to be in $\textit{Reach}_*$. Hence, $F^J \cap \mathit{Reach}_* \neq \varnothing$, which is a contradiction.

    \item[$\Leftarrow$] 


    We first prove by induction on $i\in \mathbb{N}$ that, for all $p\in \mathit{Reach}_i$, there exist $q\in \vraInit{A}$ and  $w\in \wm\pdwAlph$ such that $\langle q,\varepsilon \rangle\xrightarrow{w}\langle p,\varepsilon \rangle\in \runsA{\mathcal{A}}$. 
    \begin{itemize}
    \item Initially, when $i=0$, the property holds with $w=\varepsilon$, as $\mathit{Reach}_0=\vraInit{A}$. 
    \item When $i>0$, by construction of $\mathit{Reach}_i$, there exists an initial state $q\in \vraInit{A}$ and a word $w'=u_0K_1\dots K_n u_n \in (\intAlpha \cup \mathcal{J}_{i-1} )^*$ (with $n\in \mathbb{N}$, $u_j \in \intAlpha^*$ and $K_j\in \mathcal{J}_{i-1}$ for all $j$) such that there exists a \regularRun\ on $w'$ from $q$ to $p$.
    For all $j\in [1,n]$, since $K_j\in \mathcal{J}_{i-1}$, we know that $\mathit{Reach}_{i-1}$ contains a final state $p_j\in F^{K_j}$. By induction, there exist $q_j\in \vraInit{A}$ and $w_j\in \wm\pdwAlph$ such that $\langle q_j,\varepsilon \rangle\xrightarrow{w_j}\langle p_j,\varepsilon \rangle\in \runsA{\mathcal{A}}$. By \autoref{lem:stack-equivalence-vra}, $q_j$ belongs to the same automaton as $p_j$, and we deduce that $q_j\in I^{K_j}$ and $w_j\in \Lang{\mathcal{A}^{K_j}}$. From the regular run on $w'$, by \autoref{prop:vra-run-to-nfa-trans}, we have $\langle q,\varepsilon \rangle \xrightarrow{u_0 c_1w_1r_1 \dots  c_nr_n u_n} \langle p,\varepsilon \rangle \in \runsA{\mathcal{A}}$ with $\langle c_j,r_j\rangle=f(K_j)$ for all $j\in [1,n]$.
    \end{itemize}
    
    Finally, if $F^J\cap \mathit{Reach}_* \neq \varnothing$, we deduce from the previous property the existence of a word accepted by $\mathcal{A}^J$, thus $\Lang{\mathcal{A}^J} \neq \varnothing$. \hfill $\lrcorner$
\end{itemize}
\end{proof}

\begin{algorithm}[]
\caption{Emptiness decision problem for VRAs}\label{alg:emptiness-vra}
$\mathit{Done} = \{\} ;\hspace{1em}\mathcal{J} = \{\};\hspace{1em}\mathit{Reach} \gets \vraInit{A};\hspace{1em} \mathit{Later} \gets \{\}$\;
\While{$\mathit{Reach}  \neq \varnothing$}{
    $q \gets \mathit{Reach} .pop();\hspace{1em}\mathit{Done}.add(q)$\;
    $\mathit{Reach} \gets \mathit{Reach} \cup \{p \in \vraStates{A} \hspace{-.2em} \setminus \hspace{-.2em} \mathit{Done} \mid \exists a \in \intAlpha\cup\mathcal{J}, (q,a,p)\in\vraTrans{A}\}$\;
    $\mathit{Later} \gets \mathit{Later} \cup \{(q, J, p)\in\vraTrans{A} \mid \exists J \in \procAlpha \hspace{-.2em} \setminus \hspace{-.2em} \mathcal{J}\}$\; 
    \If{$q \in F^J \wedge J \in \procAlpha \hspace{-.2em} \setminus \hspace{-.2em} \mathcal{J}$}{
      $\mathit{Reach} \gets \mathit{Reach} \cup \{p \in \vraStates{A} \hspace{-.2em} \setminus \hspace{-.2em} \mathit{Done}   \mid  \exists (q',J,p) \in \mathit{Later}\}$\;
      $\mathcal{J}.add(J)$\;
    }
}
\textbf{return} $\mathit{true}$ \textbf{if} $F^S\cap \mathit{Done} = \varnothing$ \textbf{else} $\mathit{false}$.
\end{algorithm}


It remains to explain how we  compute $\mathit{Reach}_*$ with a time complexity in $\mathcal{O}(|\mathcal{A}|)$. We suppose that, given $q\in Q$ and $a\in \intAlpha\cup \procAlpha$, we can access to all transitions of the form $(q,a,p)\in \vraTrans{A}$ in $\mathcal{O}(1)$, for example 
with a matrix $Q\times (\intAlpha\cup \procAlpha)$ giving the list of those states $p$. \autoref{alg:emptiness-vra} works as follows.
We process each state $q\in \mathit{Reach}$ exactly once. When we process $q$, we add to $\mathit{Reach}$ all states $p$ such that $p$ has not been processed yet and there  exists a transition $(q,a,p)\in \vraTrans{A}$ with $a\in \intAlpha\cup \mathcal{J}$. Additionally, we add to $\mathit{Later}$ all transitions $(q,J,p)\in \vraTrans{A}$ such that $J\notin \mathcal{J}$. Then, if $q$ is a final state of some $\mathcal{A}^J$ with $J\in \procAlpha \setminus  \mathcal{J}$, we add $J$ to $\mathcal{J}$ and we process all transitions of the form $(q',J,p)$ in $\mathit{Later}$ (we suppose that such transitions can be accessed in $\mathcal{O}(1)$, for instance, by defining $\mathit{Later}$ as an hashmap, with $J$ the key and a list of delayed transitions over $J$ as values). When $\mathit{Reach}$ becomes empty, the algorithm has converged, and we know that $\Lang{\mathcal{A}}= \varnothing$ iff no final state of $\mathcal{A}^S$ has been processed. Using this algorithm, each state and internal transition are processed at most once, and each \proceduralTransition\ is processed at most twice. This leads to a time complexity in $\mathcal{O}(|\vraStates{A}|+|\vraTrans{A}|) = \mathcal{O}(|\mathcal{A}|)$.


Note that this complexity is better than the time complexity to solve the emptiness decision problem for VPAs. To solve the latter problem, we need to compute a context-free grammar which accepts the same language as the given VPA, and then apply an emptiness algorithm on it \cite{alur2004vpl}. The computation of the grammar is done in time complexity $\mathcal{O}(|Q|^3 + |\delta|^2)=\mathcal{O}(|\mathcal{A}|^3)$ \cite{sipser1996introduction}
, where $Q$ and $\delta$ are respectively the set of states and the set of transitions of the VPA, and checking the emptiness of the grammar is done in $\mathcal{O}(n)$ \cite{hopcroft2007introduction}, where $n$ is the number of productions of the grammar. The overall algorithm is thus in $\mathcal{O}(|\mathcal{A}|^3)$. 

\subsection{Universality, Inclusion, and Equivalence Decision Problems}


The algorithms for the universality, inclusion, and equivalence decision problems for VRAs are classical. We present them one after the other.
    
Without transforming the VRA $\mathcal{A}$ into a VPA, we describe a simple way to decide whether $\Lang{\mathcal{A}}
    =\wm\pdwAlph$. We first construct a VRA $\mathcal{B}$ such that $\Lang{\mathcal{B}}= \overline{\Lang{\mathcal{A}}}$  (see \autoref{th:closure}). We then decide whether $\Lang{\mathcal{B}}
    =\varnothing$ using the emptiness decision problem (see \autoref{th:decision-problem}), with a total time complexity in $2^{\mathcal{O}(|\mathcal{A}|)}$.

Deciding whether $\Lang{\mathcal{A}_1} \subseteq \Lang{\mathcal{A}_2}$ amounts to deciding whether $\Lang{\mathcal{A}_1}\setminus \Lang{\mathcal{A}_2}=\Lang{\mathcal{A}_1}\cap  \overline{\Lang{\mathcal{A}_2}}=\varnothing$. Using the intersection and complementation constructions (see \autoref{th:closure}), and the emptiness decision algorithm (see \autoref{th:decision-problem}), we obtain a time complexity in $\mathcal{O}(|\mathcal{A}_1|)\cdot 2^{\mathcal{O}(|\mathcal{A}_2|)}$.


Solving the equivalence decision problem is equivalent to solving two inclusion decision problems, that is, whether $\Lang{\mathcal{A}_1} \subseteq \Lang{\mathcal{A}_2}$ and $\Lang{\mathcal{A}_2} \subseteq \Lang{\mathcal{A}_1}$. By \autoref{th:decision-problem}, we get a total complexity in $2^{\mathcal{O}(|\mathcal{A}_1|+|\mathcal{A}_2|)}$. 

\end{document}